\newif\ifdraft\draftfalse   % draft = comments
\newif\ifanon\anonfalse     % anon = light double-blind reviewing
\newif\ifcamera\cameratrue  % camera = camera-ready version
\newif\ifshepherding\shepherdingfalse % camera+shepherding = shepherding version
\newif\ifappendix\appendixfalse % show appendix
\newif\iffull\fullfalse     % full = includes things that were cut from
\newif\iflongrefs\longrefsfalse % Long references (e.g. for journal)
\newif\ifbackref\backreffalse % backref option for hyperref
\newif\ifsooner\soonerfalse
\newif\iflater\laterfalse
\newif\ifprob\probfalse     % prob = doing probabilities (got dropped)
\makeatletter \@input{texdirectives.tex} \makeatother
\def\@copyrightpermission{\ifcamera\\\\\\\fi This work is licensed under a \href{https://creativecommons.org/licenses/by/4.0/}{Creative Commons Attribution 4.0 International License}}
  \def\@authorsaddresses{}
\def\@authorsaddresses{}
  \renewcommand{\headrulewidth}{\z@}%
  \renewcommand{\footrulewidth}{\z@}%
  \renewcommand{\headrulewidth}{\z@}%
  \renewcommand{\footrulewidth}{\z@}%
\def\@mkbibcitation{}
\newcommand\citepos[1]{\citeauthor{#1}'s\ [\citeyear{#1}]}
\let\cite=\citep
\definecolor{darkblue}{rgb}{0.0,0.0,0.3}
\let\ls\lstinline
\def\Snospace~{\S{}}
\newcommand\fstar{F$^\star$\xspace}
\newcommand{\coq}{\textsc{Coq}}
\newcommand{\RDTT}{\textsc{RDTT}}
\newcommand\RHL{\textsc{RHL}}
\newcommand{\RHTT}[0]{\textsc{RHTT}}
\newcommand{\While}{\textsc{While}}
\definecolor{dkblue}{rgb}{0,0.1,0.5}
\definecolor{dkgreen}{rgb}{0,0.4,0}
\definecolor{dkred}{rgb}{0.6,0,0}
\definecolor{dkpurple}{rgb}{0.7,0,1.0}
\definecolor{purple}{rgb}{0.9,0,1.0}
\definecolor{olive}{rgb}{0.4, 0.4, 0.0}
\definecolor{teal}{rgb}{0.0,0.4,0.4}
\definecolor{azure}{rgb}{0.0, 0.5, 1.0}
\definecolor{gray}{rgb}{0.5, 0.5, 0.5}
\definecolor{dkgrey}{rgb}{0.2, 0.2, 0.2}
\definecolor{lilac}{rgb}{0.70, 0.04, 0.08}
\newcommand{\comm}[3]{\ifdraft{{\color{#1}[#2: #3]}}\fi}
\newcommand{\ch}[1]{\comm{teal}{CH}{#1}}
\newcommand{\km}[1]{\comm{dkred}{Kenji}{#1}}
\newcommand{\er}[1]{\comm{dkgreen}{ER}{#1}}
\newcommand*{\EG}{e.g.,\xspace}
\newcommand*{\IE}{i.e.,\xspace}
\newcommand*{\ETC}{etc.\xspace}
\newtheorem{definition}{Definition}
\newtheorem{theorem}{Theorem}
\newcommand*{\ii}[1]{\ensuremath{\mathit{#1}}}
\newcommand\alt[0]{\;|\;}
\newcommand\Pure{\mathrm{Pure}}
\newcommand\St{\mathrm{St}}
\newcommand\StT{\mathrm{StT}}
\newcommand\Exc{\mathrm{Exc}}
\newcommand{\Err}{\mathrm{Err}}
\newcommand\ExcT{\mathrm{ExcT}}
\newcommand{\Id}{\mathrm{Id}}
\newcommand{\Part}{\mathrm{Part}}
\newcommand{\Tot}{\mathrm{Tot}}
\newcommand\Imp{\mathrm{Imp}}
\newcommand{\pre}{\mathit{pre}}
\newcommand{\post}{\mathit{post}}
\newcommand\prop{\mathbb{P}}
\newcommand\M{\mathrm{M}}
\newcommand{\Mone}{{\M_1}}
\newcommand{\Mtwo}{{\M_2}}
\newcommand\W{\mathrm{W}}
\newcommand\Wone[1]{{\W^{#1}_1}}
\newcommand\Wtwo[1]{{\W^{#1}_2}}
\renewcommand\P{\mathrm{P}}
\newcommand{\relobj}{\text{rel}}
\newcommand{\relmarker}[0]{\scalebox{0.5}{$\relobj$}}
\newcommand\Wrel[1]{{\W^{#1}_{\!\relmarker}}}
\newcommand{\PP}[0]{\mathrm{PP}}
\newcommand\PPrel[1]{{\PP^{#1}_{\!\relmarker}}}
\newcommand\T{\mathcal{T}}
\newcommand\I{\mathcal{I}}
\renewcommand\C{\mathcal{C}}
\newcommand{\ret}{\texttt{ret}}
\newcommand\retM{\ret^{\M}}
\newcommand{\bind}{\texttt{bind}}
\newcommand\bindM{\bind^{\M}}
\newcommand{\revBind}[2]{{#1}^{\dagger_{#2}}}
\newcommand{\revBindT}[1]{\revBind{#1}{\T}}
\newcommand{\prodrel}[3]{{#1}{\times}{#2}{\leadsto}{#3}}
\newcommand\bool[0]{\mathbb{B}}
\newcommand{\true}[0]{\texttt{tt}}%{\texttt{true}}
\newcommand{\false}[0]{\texttt{ff}}%{\texttt{false}}
\newcommand{\choice}[0]{\texttt{pick}}
\renewcommand{\choose}[0]{\texttt{choose}}
\newcommand{\fail}[0]{\texttt{fail}}
\newcommand{\NDet}[0]{\mathrm{Nd}}
\newcommand{\IO}[0]{\mathrm{IO}}
\newcommand{\Prob}[0]{\mathrm{Prob}}
\newcommand{\supp}[1]{\mathrm{supp}(#1)}
\newcommand{\inr}{\textcolor{dkred}{\texttt{Inr}}}
\newcommand{\inl}{\textcolor{dkred}{\texttt{Inl}}}
\newcommand{\stget}[0]{\texttt{get}}
\newcommand{\stput}[0]{\texttt{put}}
\newcommand{\Val}{\mathcal{V}\mathrm{al}}
\newcommand{\throw}[0]{\texttt{throw}}
\newcommand{\catch}[0]{\texttt{catch}}
\newcommand{\ioread}[0]{\texttt{input}}
\newcommand{\iowrite}[0]{\texttt{output}}
\newcommand{\flip}{\texttt{flip}}
\newcommand{\interval}{\mathbb{I}}
\newcommand{\fin}{\texttt{fin}}
\newcommand{\dowhile}{\texttt{do\_while}}
\newcommand{\kwd}[1]{\texttt{#1}}
\newcommand{\cskip}{\kwd{skip}}
\newcommand{\cloop}{\kwd{loop}}
\newcommand\abs[2]{\lambda#1.~#2}
\newcommand{\Ifte}[3]{\kwd{if}\> #1 \>\kwd{then}\> #2 \>\kwd{else}\> #3}
\newcommand{\case}[5]{\texttt{case }#1~[#2.#3\alt #4.#5]}
\newcommand{\caseL}[5]{\texttt{case}\,#1\,\left [{\begin{array}{l}#2.#3\\#4.#5\end{array}}\right]}
\newcommand{\Type}[0]{\mathrm{Type}}
\newcommand{\one}[0]{\mathbb{1}}
\newcommand{\zero}[0]{\mathbb{0}}
\newcommand{\br}[1]{{#1}^{\boldsymbol{r}}}
\newcommand{\sem}[1]{\llbracket #1 \rrbracket}
\newcommand{\bN}{\mathbb{N}}
\newcommand{\invariant}{\kwd{inv}}
\newcommand{\srelsyn}[3]{\vdash #1 \sim #2 ~\left\{\>#3\>\right\}}
\newcommand{\srelsem}[4]{\vDash_{#4} #1 \sim #2 ~\left\{\>#3\>\right\}}
\newcommand{\crelsyn}[6]{#1 \vdash #2~\{#3\} \sim #4~\{#5\}~|~#6}
\newcommand{\crelsynL}[6]{#1 {\vdash} \left.{\begin{array}{cc} #2& \{#3\} \\[-0.8ex] \sim& \\[-0.8ex] #4 & \{#5\} \end{array}} \right |#6}
\newcommand{\crelsemext}[7]{#1 \vDash_{#7} #2~\{#3\} \sim #4~\{#5\}~|~#6}
\newcommand{\crelsem}[6]{\crelsemext{#1}{#2}{#3}{#4}{#5}{#6}{}}
\newcommand{\dottimes}{\mathbin{\dot{\times}}}
\newcommand\id{\mathrm{id}}
\newcommand{\J}{\mathcal{J}}
\newcommand{\TypeCat}{\mathcal{T}\!\!\mathit{ype}}
\newcommand{\OrdCat}{\mathcal{O}\hspace{-0.5mm}\mathit{rd}}
\newcommand{\Poset}[0]{\OrdCat}%{\mathit{Pos}}
\newcommand{\Discr}{\mathrm{Disc}}
\newcommand{\Jprod}{{\mathcal{J}_\times}}
\newcommand{\Span}[1]{\mathcal{S}\mathrm{pan}(#1)}
\newcommand*{\SavedLstInline}{}
\LetLtxMacro\SavedLstInline\lstinline
\DeclareRobustCommand*{\lstinline}{%
  \ifmmode
    \let\SavedBGroup\bgroup
    \def\bgroup{%
      \let\bgroup\SavedBGroup
      \hbox\bgroup
    }%
  \fi
  \SavedLstInline
}
\renewcommand{\paragraph}[1]{\smallskip{\bf #1.}\;}
\begin{document}

% Space hacks

\setlength{\abovedisplayskip}{2pt}
\setlength{\belowdisplayskip}{2pt}

\def\titlestr{The Next 700 Relational Program Logics}
\title[\titlestr]{\titlestr}

%% \titlenote{with title note}             %% \titlenote is optional;
%%                                         %% can be repeated if necessary;
%%                                         %% contents suppressed with 'anonymous'
%% \subtitle{Subtitle}                     %% \subtitle is optional
%% \subtitlenote{with subtitle note}       %% \subtitlenote is optional;
%%                                         %% can be repeated if necessary;
%%                                         %% contents suppressed with 'anonymous'

%% Author information
%% Contents and number of authors suppressed with 'anonymous'.
%% Each author should be introduced by \author, followed by
%% \authornote (optional), \orcid (optional), \affiliation, and
%% \email.
%% An author may have multiple affiliations and/or emails; repeat the
%% appropriate command.
%% Many elements are not rendered, but should be provided for metadata
%% extraction tools.

\ifanon
\author{}
\else
\author{Kenji Maillard} % kenji.maillard@inria.fr
\affiliation{
  \institution{Inria Paris}\ifcamera\city{Paris}% \country{France}
  \fi}
\affiliation{
  \institution{ENS Paris}\ifcamera\city{Paris}\country{France}\fi}
\saveexpandmode
\noexpandarg
\author{Cătălin Hrițcu}%{C\u{a}t\u{a}lin Hri\cb{t}cu} % catalin.hritcu@gmail.com
\restoreexpandmode
\affiliation{
  \institution{Inria Paris}\ifcamera\city{Paris}\country{France}\fi}
\author{Exequiel Rivas} % exequiel.rivas-gadda@inria.fr
\affiliation{
  \institution{Inria Paris}\ifcamera\city{Paris}\country{France}\fi}
\author{Antoine Van Muylder} % 
\affiliation{
  \institution{Inria Paris}\ifcamera\city{Paris}% \country{France}
  \fi}
\affiliation{
  \ifcamera\institution{Université de Paris}\city{Paris}\country{France}
  \else\institution{Université de Paris}\fi}
% \makeatletter
% \renewcommand{\@shortauthors}{Maillard~\ETAL}
% \makeatother
\fi

%% Abstract
%% Note: \begin{abstract}...\end{abstract} environment must come
%% before \maketitle command

\begin{abstract}
We propose the first framework for defining relational program logics for
arbitrary monadic effects.  The framework is embedded within a relational
dependent type theory and is highly expressive.  At the semantic level, we
provide an algebraic presentation of relational specifications as a class
of relative monads, and link computations and specifications by introducing
relational effect observations, which map pairs of monadic computations to
relational specifications in a way that respects the algebraic structure.  For
an arbitrary relational effect observation, we generically define the core of a
sound relational program logic, and explain how to complete it to a full-fledged
logic for the monadic effect at hand. We show that this generic framework can be
used to define relational program logics for effects as diverse as state,
input-output, nondeterminism, and discrete probabilities.
We, moreover, show that by instantiating our
framework with state and unbounded iteration we can embed a variant of Benton's
Relational Hoare Logic, and also sketch how to reconstruct Relational Hoare Type
Theory.  Finally, we identify and overcome conceptual challenges that prevented
previous relational program logics from properly dealing with control effects,
and are the first to provide a relational program logic for exceptions.

\end{abstract}

%% 2012 ACM Computing Classification System (CSS) concepts
%% Generate at 'http://dl.acm.org/ccs/ccs.cfm'.
\begin{CCSXML}
<ccs2012>
<concept>
<concept_id>10003752.10003790.10011740</concept_id>
<concept_desc>Theory of computation~Type theory</concept_desc>
<concept_significance>500</concept_significance>
</concept>
<concept>
<concept_id>10003752.10010124.10010138.10010140</concept_id>
<concept_desc>Theory of computation~Program specifications</concept_desc>
<concept_significance>500</concept_significance>
</concept>
<concept>
<concept_id>10003752.10010124.10010138.10010142</concept_id>
<concept_desc>Theory of computation~Program verification</concept_desc>
<concept_significance>500</concept_significance>
</concept>
<concept>
<concept_id>10003752.10010124.10010131.10010137</concept_id>
<concept_desc>Theory of computation~Categorical semantics</concept_desc>
<concept_significance>300</concept_significance>
</concept>
<concept>
<concept_id>10003752.10010124.10010138</concept_id>
<concept_desc>Theory of computation~Program reasoning</concept_desc>
<concept_significance>300</concept_significance>
</concept>
<concept>
<concept_id>10003752.10010124.10010138.10010141</concept_id>
<concept_desc>Theory of computation~Pre- and post-conditions</concept_desc>
<concept_significance>300</concept_significance>
</concept>
<concept>
<concept_id>10003752.10003790.10011741</concept_id>
<concept_desc>Theory of computation~Hoare logic</concept_desc>
<concept_significance>100</concept_significance>
</concept>
</ccs2012>
\end{CCSXML}

\ccsdesc[500]{Theory of computation~Type theory}
\ccsdesc[500]{Theory of computation~Program specifications}
\ccsdesc[500]{Theory of computation~Program verification}
\ccsdesc[300]{Theory of computation~Categorical semantics}
\ccsdesc[300]{Theory of computation~Program reasoning}
\ccsdesc[300]{Theory of computation~Pre- and post-conditions}
\ccsdesc[100]{Theory of computation~Hoare logic}
%% End of generated code

%% Keywords
%% comma separated list
\keywords{program verification,
relational program logics,
side-effects,
monads,
state,
I/O,
nondeterminism,
probabilities,
exceptions,
dependent types,
semantics,
relative monads,
foundations}
% Dijkstra monads, 
% effect observations, 
%% \keywords are mandatory in final camera-ready submission

%% \maketitle
%% Note: \maketitle command must come after title commands, author
%% commands, abstract environment, Computing Classification System
%% environment and commands, and keywords command.
\maketitle

\section{Introduction}
\label{sec:intro}
Generalizing unary properties, which describe single program runs,
{\em relational properties} describe relations between multiple runs
of one or more programs \cite{clarkson10hyp, AbateBGHPT19}.
Formally verifying relational properties has a broad range of practical
applications.
%
% Relational properties are properties which \emph{relate} two potentially
% different programs.  The general nature [CH:???] of these properties make them
% useful for a number of problems.
%
For instance, one might be interested in proving that the observable behaviors
of two programs are related, showing for instance that the programs are
\emph{equivalent} \cite{CiobacaLRR16, KunduTL09, GodlinS10, BlanchetAF08,
  ChadhaCCK16, TimanySKB18, WangDLC18, Yang07, HurDNV12, HurNDBV14},
or that one \emph{refines} the other \cite{TimanyB19}.
In other cases, one might be interested in relating two runs of a single
program, but,
as soon as the control flow can differ between the two runs, the compositional
verification problem becomes the same as relating two different programs.
This is for instance the case for \emph{noninterference}, which requires that a
program's public outputs are independent of its private
inputs \cite{NanevskiBG13, clarkson10hyp, SabelfeldM03, AntonopoulosGHKTW17,
  BartheEGGKM19, SousaD16, BanerjeeNN16}.
% or \emph{continuity} \cite{BartheCK16},
% where the robustness of functions is measured by
% analyzing how small variations of the input affect the output.\ch{Why are we
%   making such a fuss about continuity? Can't we move it to the big list below?}
% \km{Moreover it;s a close neighbour of differential privacy, no?}
% CH: just removed for now, we anyway had no good reference for it
%
The list of practical applications of relational verification is, however, much
longer, including verified program transformations \cite{benton04relational},
cost analysis \cite{CicekBG0H17, QuGG19, RadicekBG0Z18},
program approximation \cite{CarbinKMR12, HeLR18},
semantic diffing \cite{LahiriHKR12, GirkaMR15, GirkaMR17, WangDLC18},
cryptographic proofs \cite{BartheGB09, BartheDGKSS13,
  PetcherM15, BartheFGSSB14, Unruh19},
differential privacy \cite{BartheKOB13, BartheGAHRS15, ZhangK17, Gavazzo18}, and
even machine learning \cite{SatoABGGH19}.

% , giving an idea of the wide spectrum that they cover.
% \km{Not really convincing}\ch{what exactly?}

% verification of ``relational properties''
% is useful / needed in many practical applications:
% - noninterference (relating 2 runs of the same program)
% - program equivalence and refinement / simulation
%   (relating 2 different programs)  
% - cost
% - cryptographic proofs
% - differential privacy
% - machine learning
% - ...

% expressive tools needed for such relational verification:
% - relational properties such as security
%   often depends on functional correctness
% - balance: on the one hand we want to have the power
%   to prove and use correctness when showing relational
%   properties; one the other one, we want to be able
%   to exploit syntactic similarities where possible:
%   proving equivalence of P and P+delta, shouldn't require
%   verifying P or P+delta
% - relational program logics allow to strike the right balance

As such, many different relational verification tools have been proposed, making
different tradeoffs, for instance between automation and expressiveness (see
\autoref{sec:related} for further discussion).
In this paper we focus on \emph{relational program logics}, which are a popular
formal foundation for various relational verification tools.
Relational program logics are proof systems whose rules can be used to prove
that a pair of programs meets a rich relational specification.
As such they are very expressive, and can in particular handle situations in
which verifying the desired relational properties requires showing the full
functional correctness of certain pieces of code.
Yet they can often greatly simplify reasoning by leveraging the syntactic
similarities between the programs one relates.
Since \citepos{benton04relational} seminal Relational Hoare Logic, many
relational program logics have been proposed \cite{Yang07, BartheKOB13,
  BartheGAHRS15, PetcherM15, ZhangK17, NanevskiBG13, AguirreBGGS17,
  RadicekBG0Z18, SatoABGGH19, BartheFGSSB14, BartheCK16, QuGG19, SousaD16,
  CarbinKMR12, BanerjeeNN16, Unruh19}.
However, each of these logics is specific to a particular combination of
side-effects that is completely fixed by the programming language and
verification framework; the most popular side-effects these logics bake in
are mutable state, recursion, cost, and probabilities.

The goal of this paper is to distil the generic relational reasoning principles
that work for a broad class of side-effects and that underlie most relational
program logics.
%
% \ch{to the most canonical principles or program logics?}.
%
We do this by introducing the first framework for defining program logics for
{\em arbitrary} monadic effects.
Our generic framework is embedded within a dependent type theory, \EG \coq{},
which makes it highly expressive and simpler to describe.

% syntactic rules split into 3 parts:
% (1) Core Pure part [framework],
% (2) Generic monadic part [framework],
% (3) Effect-specific part
% [user, but with some guidance, recipe]

\paragraph{Syntactic rules}
To factor out the fully generic parts, the rules of the relational program logics
derived in our framework are divided into three categories, following the
syntactic shape of the monadic programs on which they operate:
\begin{enumerate}[label=R\arabic*]
\item\label{rulesetOne} rules for pure language constructs,
  derived from the ambient dependent type theory
  (these rules target the elimination constructs
  for positive types, like if-then-else for booleans,
  recursors for inductive types, \ETC);
\item\label{rulesetTwo} rules for the generic monadic constructs return and bind; and
\item\label{rulesetThree} rules for effect-specific operations (\EG get and put for the state monad,
    or throw and catch for the exception monad).
% \ch{Probably not able to do catch yet,
%       right?}\km{I have a rule for it in my notes, will try to write it properly}
\end{enumerate}
%% At the syntactic level, the rules of the relational program logics
%% derived in our framework fit in one of three categories:
%
This organization allows us to clearly separate 
not only the generic parts (\ref{rulesetOne}\&\ref{rulesetTwo}) from the effect-specific ones (\ref{rulesetThree}),
but also  the effect-irrelevant parts (\ref{rulesetOne}) from the effect-relevant ones (\ref{rulesetTwo}\&\ref{rulesetThree}).

% \ch{The reminder should make a lot clearer that the generic parts
% (\ref{rulesetOne}\&\ref{rulesetTwo}) are defined by the framework in a way
% that's independent of the particular effect. The current explanation is in
% terms of ``we [the authors + users] can derive'' those parts, which undermines
% the idea that we, the authors, actually provide a framework to the users.}

In its simplest form (\autoref{sec:simplified}), the judgment of the
relational program logics of our framework has the shape:
$\srelsyn{c_1}{c_2}{w}$, where $c_1:\M_1~A_1$ is a computation in
monad $\M_1$ producing results of type $A_1$, where $c_2:M_2~A_2$ is a
computation in monad $\M_2$ producing results of type $A_2$, and where
$w$ is a relational specification of computations $c_1$ and
$c_2$ drawn from the type $\Wrel{}(A_1,A_2)$.
% \ch{This
% $\Wrel{}$ notation looks like a typo / line noise / bad punctuation.}\km{It
% was supposed to be a bridge, but they are all set by the command relobj}
%
Here $\M_1$ and $\M_2$ are two arbitrary and {\em potentially distinct} computation
monads (\EG the state monad $\St~A = S \to A \times S$ and the exception monad
$\Exc~A = A + E$), while $w$ could, for instance, be a pair of a relational
precondition and a relational postcondition, or a relational predicate
transformer---in this introduction we will use relational weakest preconditions.
For instance, for relating two state monads on states
$S_1$ and $S_2$, we often use relational specifications drawn from
\[
\Wrel{\St}(A_1,A_2)  =
  ((A_1 \times S_1) \times (A_2 \times S_2) \to \prop) \to
  S_1 \times S_2 \to \prop
\]
which are predicate transformers mapping postconditions relating two
pairs of a result value and a final state to a precondition relating
two initial states (here $\prop$ stands for the type of propositions
of our \iffull ambient \fi dependent type theory).
As an example of the judgment above, consider
the programs $c_1 =\,$
\lstinline!bindSt \,(get ()) (fun x . put (x + k))!,
%\lstinline!let x = get () in put (x + 1)!,
which increments the content of a memory cell by $k$, and $c_2
=$ \lstinline!retSt \,()!, which does nothing.
These two programs are related by the specification $w
= \abs{\varphi\,(s_1,s_2)}{\varphi\,(((), s_1+k), ((), s_2))}
: \Wrel{\St}(\one, \one)$ saying that for the
postcondition $\varphi$ to hold for the final states of $c_1$ and $c_2$, it is
enough for it to hold for $s_1+k$ and $s_2$, where $s_1$ and $s_2$
are the computation's initial states.
Note that since $c_1$, $c_2$, and $w$ are terms of our ambient type theory,
free variables (like $k$) are handled directly by the type
theory, which saves the simple judgment from an explicit context.
%
% We can\ch{?} then\ch{?}
% write $\srelsem{c_1}{c_2}{w}{}$ to express that $c_1$ and $c_2$
% satisfy\ch{?} the specification $w$.\er{The semantic and syntactic judgment are
%   mixed?}\km{We are talking about satisfaction here not provability, no?}\ch{No
%   clue what is meant either, please be more explicit. It's not even clear why
%   this semantic version of triples is introduced {\em here}, given that it's not
%   yet used.} CH: removed

For pure language constructs \ref{rulesetOne}, we try to use the reasoning
principles of our ambient dependent type theory as directly as possible.
For instance, our framework
(again in its simplest incarnation from \autoref{sec:simplified})
provides %\ch{yes, I really want this!}
the following rule for the if-then-else construct:
% \ch{TODO: bring in if rule from 7.2 of TR + explain the rule}
% \ch{This rule is about if at the spec level, somehow I thought we would look at if
%   at the {\em computation} level. We could have both? And could we
%   have both symmetric and asymmetric if?}\er{For relating with RHL, it
%   would be better to have the if at the level of computations described}
\[
  \inferrule{\Ifte{b}{\srelsyn{c_1}{c_2}{w^\top}}{\srelsyn{c_1}{c_2}{w^\bot}}}
  {\srelsyn{c_1}{c_2}{\Ifte{b}{w^\top}{w^\bot}}}
\]
In order to prove that $c_1$ and $c_2$ satisfy the relational specification
$\Ifte{b}{w_\top}{w_\bot}$, it is enough to prove that  $c_1$ and $c_2$ satisfy
both branches of the conditional in a context extended with the value of $b$.
Interestingly, this rule does not make any assumption on the shape of
\iffull the programs \fi $c_1$ and $c_2$.
% If instead we consider conditionals in
% the programs, then we can introduce both \emph{synchronous}
% and \emph{asynchronous} rules.
%
Relational program logics often classify each rule depending on whether it
considers a syntactic construct that appears on both sides (synchronous), or
only on one side (asynchronous).
In the rule above, taking $c_1$ to be of the shape
$\Ifte{b}{c^\top_1}{c^\bot_1}$ and $c_2$ to be independent of $b$, we can
simplify the premise according to the possible values of $b$
% \ch{for a start we pick $c_1 = \Ifte{b_1}{c^\top_1}{c^\bot_1}$
%   and $b_1 = b$ ... then need to explain more carefully}\km{we cannot take $b_1
%   = b$ unless we add this assumption to the spec (or unless we cheat and refine
%   the type of $b$ to be $\{b : \bool \mid b = b_1\}$)}
% in each branch of the premise it is possible to further
% simplify the content of $c_1$ and $c_2$ according to the value of $b$, enabling
to derive an asynchronous variant of the rule:
\begin{align}
  \label{eq:simple-rule-if-left}
  \inferrule{{\srelsyn{c^\top_1}{c_2}{w^\top}}\\{\srelsyn{c^\bot_1}{c_2}{w^\bot}}}
  {\srelsyn{\Ifte{b}{c^\top_1}{c^\bot_1}}{c_2}{\Ifte{b}{w^\top}{w^\bot}}}
\end{align}
By requiring that both commands are conditionals,
%and strengthening the specification so that $b_1 = b_2$,
we can also derive the synchronous rule:
\begin{align}
  \label{eq:simple-rule-if-sync}
  \inferrule{\srelsyn{c^\top_1}{c^\top_2}{w^\top}\\ \srelsyn{c^\bot_1}{c^\bot_2}{w^\bot}}
  {\srelsyn{\Ifte{b_1}{c^\top_1}{c^\bot_1}}{\Ifte{b_2}{c^\top_2}{c^\bot_2}}{w^{\bullet}}}
\end{align}
where the relational specification $w^{\bullet} = \abs{\varphi\,s_{12}}{(b
  {\Leftrightarrow} b_1) \wedge (b {\Leftrightarrow} b_2) \wedge
  \Ifte{b}{w^\top\,\varphi\,s_{12}}{w^\bot\,\varphi\,s_{12}}}$ ensures that the
booleans $b_1$ and $b_2$ controlling the choice of the branch in each
computation share the same value $b$.
\ifsooner
\ch{this new rule should still be explained.}
\er{Here something should be said explaining that this rule does not come for
  free in the simplified framework.}\km{In the generic framework it does not
  come from free either}\ch{I guess this is related to having to generalize the
  framework to ``adding assumptions''. We'll do it but it seems way below
  the level of abstraction of the intro. The general structure of specifications
  wasn't even introduced here.}
\fi

For the monadic constructs \ref{rulesetTwo}, the challenge is in lifting the
binds and returns of the two computation monads $\Mone$ and $\Mtwo$ to the
specification level.
For instance, in a synchronous rule one would relate $\bind^{\Mone}~m_1~f_1$ to
$\bind^{M_2}~m_2~f_2$ by first relating computations $m_1$ and $m_2$, say via
relational specification $w^m$, and then one would 
relate the two functions $f_1$ and $f_2$
pointwise via a function $w^f$ mapping arguments in $A_1 \times A_2$ to
relational specifications:
\begin{align}
  \label{eq:bindRuleSimple}
  \inferrule{\srelsyn{m_1}{m_2}{w^m}\\
    \forall a_1,a_2
    \srelsyn{f_1\,a_1}{f_2\,a_2}{w^f\,(a_1,a_2)}}
  {\srelsyn{\bind^\Mone\,m_1\,f_1}{\bind^\Mtwo\,m_2\,f_2}{\bind^\Wrel{}\,w^m\,w^f}}
\end{align}
In the conclusion of this rule, we need a way to compose $w : \Wrel{}(A_1,A_2)$ and
$w^f : A_1 \times A_2 \to \Wrel{}(B_1,B_2)$ to obtain a relational specification for the
two binds.
We do this via a bind-like construct:
\begin{align}
  \label{eq:bindWrel-type}
  \bind^\Wrel{} : \Wrel{}(A_1,A_2) \to (A_1 \times A_2 \to \Wrel{}(B_1,B_2)) \to \Wrel{}(B_1,B_2)
\end{align}
For the concrete case of $\Wrel\St$, this bind-like construct takes the form
\[
  \bind^{\Wrel\St}\,w^m\,w^f = \abs{\varphi~(s_1,s_2)}{w^m~(\abs{~((a_1,s'_1),(a_2,s'_2))}{w^f~(a_1,a_2)~(s'_1,s'_2)~\varphi})~(s_1,s_2)}.
\]
This construct is written in continuation passing style:
the specification of the continuation $w^f$  maps  a postcondition
$\varphi : (B_1 {\times} S_1) {\times} (B_2 {\times} S_2) {\to} \prop$
to an intermediate postcondition
$(A_1 {\times} S_1) {\times} (A_2 {\times} S_2) {\to} \prop$,
then $w^m$ turns it into a precondition for the whole computation.
% , combining
% the two specifications by first passing the final postcondition $post : (B_1
% \times S_1) \times (B_2 \times S_2) \to \prop$ to the specification $w^f$ of the
% continuations and then building an intermediate postcondition of type $(A_1
% \times S_1) \times (A_2 \times S_2) \to \prop$ for the specification $w^m$ of
% the initial computation:

%
Asynchronous rules for bind can be derived from the rule above, by taking $m_1$ to
be $\ret^{\Mone} ()$ or $f_1$ to be $\ret^{\Mone}$ above and using the monadic laws
of $\Mone$ (and symmetrically for $\Mtwo$):
\begin{align}
  \label{eq:bind-right-simple-rule}
  \inferrule{
    \srelsyn{\ret^\Mone\,()}{m_2}{w^m}\\
    \forall a_2 \srelsyn{c_1}{f_2\,a_2}{w^f\,a_2}
  }{
    \srelsyn{c_1}{\bind^\Mtwo\,m_2\,f_2}{\bind^\Wrel{}\,w^m\,(\abs{((), a_2)}{w^f\,a_2})}
  }
\end{align}
\begin{align}
  \inferrule{
    \srelsyn{c_1}{m_2}{w^m}\\
    \forall a_1, a_2 \srelsyn{\ret^\Mone\,a_1}{f_2\,a_2}{w^f\,(a_1,a_2)}
  }{
    \srelsyn{c_1}{\bind^\Mtwo\,m_2\,f_2}{\bind^\Wrel{}\,w^m\,w^f}
  }
\end{align}
Finally, for the effect-specific operations \ref{rulesetThree}, we provide a recipe for
writing rules guided by our framework.
For state, we introduce
% take as primitive\ch{derive?}\km{from what? the semantics?}
the following asynchronous rules for
any $a_1,a_2$ and $s$:
\begin{align}
  \inferrule{ }{\srelsyn{\stget\,()}{\ret\,a_2}{w_{\stget^l}}} \qquad\qquad
  \inferrule{ }{\srelsyn{\ret\,a_1}{\stget\,()}{w_{\stget^r}}} \label{eq:get-simple-rules} \\
  \inferrule{ }{\srelsyn{\stput\,s}{\ret\,a_2}{w_{\stput^l}}} \qquad\qquad
  \inferrule{ }{\srelsyn{\ret\,a_1}{\stput\,s}{w_{\stput^r}}} \label{eq:put-simple-rules}
\end{align}
where $w_{\stget^l} = \abs{\varphi~(s_1,s_2)}{\varphi~((s_1,s_1),
(a_2, s_2))}$, $w_{\stget^r} = \abs{\varphi~(s_1,s_2)}{\varphi~((a_1,
s_1),(s_2,s_2))}$, $w_{\stput^l}
= \abs{\varphi~(s_1,s_2)}{\varphi~(((),s), (a_2, s_2))}$ and
$w_{\stput^r} = \abs{\varphi~(s_1,s_2)}{\varphi~((a_1,
s_1),((),s))}$.  Each of these rules describes at the specification
level the action of a basic stateful operation ($\stget{}$,
$\stput{}$) from either the left or the right computations, namely
returning the current state for $\stget{}$ or updating it for
$\stput{}$.
From these rules, we can derive two synchronous rules:
\begin{mathpar}
  \inferrule{ }{\srelsyn{\stget\,()}{\stget\,()}{w_{\stget}}}
  \and
  \inferrule{ }{\srelsyn{\stput\,s}{\stput\,s'}{w_{\stput}}}
\end{mathpar}
where $w_{\stget} =\abs{\varphi~s_1~s_2}{\varphi~((s_1,s_1), (s_2,
s_2))}$ and $w_{\stput} =\abs{\varphi~s_1~s_2}{\varphi~(((),s), ((),
s'))}$. These rules can be derived from the
\hyperref[eq:bindRuleSimple]{rule} for $\bind^{\Wrel{}}$, since by the monadic
equations we can replace for instance ${\srelsyn{\stget\,()}{\stget\,()}{ w_{\stget} }}$
by the following derivation
\begin{align*}
  \infer{\srelsyn{\ret\,()}{\stget\,()}{w_{\stget^l}} \\ \forall u:\one, s_2:S_2\enspace \srelsyn{\stget{}\,u}{\ret\,s_2}{w_{\stget^r}}}
  {\srelsyn{\bind^{\St_{S_1}}\enspace(\ret\,())\enspace\stget}{\bind^{\St_{S_2}}\enspace(\stget\,())\enspace\ret^{\St_{S_2}}}{\bind^\Wrel{\St}\enspace w_{\stget^l}\enspace(\abs{(u,s_2)}{w_{\stget^r}})}} 
\end{align*}
where the last specification reduces to $w_{\stget}$ using the definition of
$\bind^\Wrel{\St}$.
% \ch{Didn't get this. No clue what skip is and how to connect
%   the pieces together.}\km{no more skip}
% \ch{Good, but still don't see how
%   what one gets by applying the symmetric bind rule matches the above.
%   Where have all the premises of the bind rule gone? Worth showing a derivation?}
% \km{Tried my hand at it, but not convinced for now}
%  CH: I find it very informative
%

\paragraph{Simple semantics}
To define a semantics for the $\vdash$ judgment above, we generalize recent work
on (non-relational) effect observations \cite{dm4all} to the relational setting,
which raises significant challenges though.
We start from two ideas from the non-relational setting:
(1)~specifications
are drawn from a monad, ordered by precision \cite{nmb08htt, NanevskiBG13,
  ynot-icfp08, DelbiancoN13, fstar-pldi13, mumon, dm4free, dm4all} and (2)~one
can link any computation with its specification by defining a monad morphism,
\IE a mapping between two monads that respects their monadic structure.
In the case of state, an example monad morphism is
$\theta^\St(c) = \lambda \varphi~s.~ \varphi~(c~s) : \St\,A \to \W^\St\,A$,
mapping a stateful computation to the unary specification monad $\W^\St\,A = (A
{\times} S {\to} \prop) {\to} S {\to} \prop$, by running the computation and then checking whether
the postcondition holds of the result.
Inspired by \citet{Katsumata14}, \citet{dm4all}
call such monad morphisms {\em effect observations} and use them
to decouple the computational monad from the specification monad, which brings
great flexibility in choosing the specification monad and verification style
most suitable for the verification task at hand.
Intuitively, an effect observation accounts for the various choices available
when specifying computations with a particular effect, for instance total or
partial correctness, angelic or demonic nondeterminism, ghost state, \ETC
In this paper we bring this flexibility to the relational verification world.

For this, we observe that even though $\Wrel{}(A_1,A_2)$ is not a
monad, it is a {\em relative monad}~\cite{AltenkirchCU14}
% ~(\autoref{def:RelMon}) -- premature, and anyway not the right definition (should point at S2 not 3)
over the
product $(A_1,A_2) \mapsto A_1 \times A_2$, as illustrated by the type
of $\bind^\Wrel{}$ above (\ref{eq:bindWrel-type}), where the
continuation specification is passed a pair of results from the first
specification.
%
% \ch{There was no explicit $\dottimes$ in
% the type of bind above ... need to make all this more explicit since
% we want non category theory people to follow}\er{I would remove the
% functor over which it is a relative monad, for those who know relative
% monads, help is not really needed, for those who don't it is not helping.}
%
Similarly, we generalize monad morphisms to relative monads and observe that
a relative monad morphism
% ~(\autoref{def:relMonMorph}) -- premature, and anyway not the right definition (should point at S2 not 3)
$\theta_{\relmarker{}} : \Mone~A_1 \times \Mtwo~A_2 \to \Wrel{}(A_1,A_2)$
can immediately give us a semantics for the judgment above:
\[
\srelsem{c_1}{c_2}{w}{\theta_{\relmarker{}}} = \theta_{\relmarker{}}(c_1,c_2) \leq w,
\]
by asking that the specification obtained by $\theta_{\relmarker{}}$ is more precise than
the user-provided specification $w$.
In the case of state,
$\theta_{\relmarker{}}^\St(c_1,c_2) = \lambda \varphi~(s_1, s_2).~ \varphi~(c_1~s_1,c_2~s_2)$
simply runs the two computations and passes the results to the postcondition.
% \ch{Explain! Either as extension of non-relational one above, or from scratch as
%   just running the computation and calling the postcondition on that.}
%
If we unfold this, and the definition of precision for $\Wrel{\St}$
\begin{align}
  \label{eq:WrelSt-order}
 w' \leq^\Wrel{\St} w = \forall \varphi\,s_1\, s_2.~ w\,\varphi\,(s_1,s_2)
  \Rightarrow w'\,\varphi\,(s_1,s_2),
\end{align}
we obtain the standard semantics of a relational program logic for stateful
computations (but without other side-effects):
\[
\srelsem{c_1}{c_2}{w}{\theta^\St_{\relmarker{}}} = \forall \varphi\,s_1\,s_2.~ w\,\varphi\,(s_1,s_2)
  \Rightarrow \varphi\,(c_1\,s_1,c_2\,s_2)
\]

Another important point is that the relational
effect observation can help us in deriving
simple effect-specific rules, such as the ones for $\stget{}$
(\ref{eq:get-simple-rules}) and $\stput{}$ (\ref{eq:put-simple-rules}) above.
For deriving such rules, one first has to choose $c_1$ and $c_2$ (and 
we hope that the product programs of \autoref{sec:product-programs}
can provide guidance on this in the future) and
then one can simply compute the specification using $\theta$.
For instance,
$w_{\stget^l} = \abs{\varphi~(s_1,s_2)}{\varphi~((s_1,s_1), (a_2, s_2))}$ in
the first get rule (\ref{eq:get-simple-rules}) really is just
$\theta(\stget\,(),\ret\,a_2)$.
This idea is illustrated for various other effects in \autoref{sec:effect-specific-rules}.

Finally, for probabilities and sometimes nondeterminism, one has to relax the
definition of relational effect observations to account for the fact that, in
the relational setting, modular verification can have a precision cost compared
to whole-program verification.
While for relative monad morphisms the following bind law has to hold with
equality (and analogously for returns):
\[
\theta_{\relmarker{}}\,(\bind^\Mone\,m_1\,f_1, \bind^\Mtwo\,m_2\,f_2) =
\bind^\Wrel{}\,(\theta_{\relmarker{}}\,(m_1,m_2))\,(\theta_{\relmarker{}}\circ (f_1,f_2))
\]
we introduce a notion of {\em lax} relative monad morphism that allows
the left-hand-side (\IE less modular verification)
to be more precise than the right-hand-side (\IE more modular verification):
\[
\theta_{\relmarker{}}\,(\bind^\Mone\,m_1\,f_1, \bind^\Mtwo\,m_2\,f_2) \leq
\bind^\Wrel{}\,(\theta_{\relmarker{}}\,(m_1,m_2))\,(\theta_{\relmarker{}}\circ (f_1,f_2)).
\]
For instance, the {\em refinement} relational effect observation
$\theta^{\forall\exists}$ interprets two nondeterministic computations $c_1 :
\NDet\,A_1$ and $c_2 : \NDet\,A_2$ (represented as finite sets of possible
outcomes) into the relational specification monad $\Wrel{\Pure}(A_1,A_2) = (A_1
\times A_2 \to \prop) \to \prop$
% do satisfy the strict relative monad morphism laws (\EG angelic-angelic,
% demonic-demonic), the following interpretation does not
as follows:
\[
\theta^{\forall\exists}_{\relmarker{}}\,(c_1,c_2) =
  \lambda \varphi.~ \forall a_1 {\in} c_1.~ \exists a_2 {\in} c_2.~ \varphi (a_1,a_2)
\]
This interpretation is a natural generalization of subset inclusion (\IE refinement of
nondeterminism) to arbitrary relational postconditions $\varphi$, but only satisfies
the lax monad morphism law relating $\bind^\NDet~m~f = \bigcup_{a \in m}f\,a$
and $\bind^{\Wrel{\Pure}}~w~f= \lambda \varphi.~w~(\lambda(a_1,a_2).~ f~(a_1,a_2)~\varphi)$:
\[
\begin{array}{rcl}
\theta^{\forall\exists}_{\relmarker{}}\,(\bind^\NDet\,m_1\,f_1, \bind^\NDet\,m_2\,f_2) & = &
%   \lambda \varphi.~ \forall b_1 {\in} \bigcup_{a_1 \in m_1}f_1\,a_1.~
%                     \exists b_2 {\in} \bigcup_{a_2 \in m_2}f_2\,a_2.~ \varphi (b_1,b_2)\\
% & = & 
    \lambda \varphi.~ \forall a_1 {\in} m_1.~ \forall b_1 {\in} f_1\,a_1.~
                       \exists a_2 {\in} m_2.~ \exists b_2 {\in} f_2\,a_2.~ \varphi(b_1,b_2)\\
& \leq & \\
\bind^{\Wrel{\Pure}}(\theta^{\forall\exists}_{\relmarker{}}\,(m_1,m_2))\,(\theta^{\forall\exists}_{\relmarker{}}\circ (f_1,f_2)) & = &
  \lambda \varphi.~ \forall a_1 {\in} m_1.~ \exists a_2 {\in} m_2.~ \forall b_1 {\in} f_1~a_1.~ \exists b_2 {\in} f_2~a_2.~ \varphi(b_1,b_2)
\end{array}
\]
On the left-hand-side one can choose a different $a_2$ for every $b_1$, while on
the right-hand-side a single $a_2$ has to work for every $b_1$, so the two
preconditions are not logically equivalent.
Supporting such lax relational effect observations when needed is still
relatively simple, even if deriving useful effect specific rules is generally
more challenging in this case.

\paragraph{Exceptions, and why the simple semantics is not enough}
The simple construction we described so far is a natural
extension of the solution we previously proposed in the unary
setting~\cite{dm4all}. It works well for defining relational program logics
for state and nondeterminism (and
also input-output and probabilities), but it hits a limit when we
try to incorporate exceptions.
Indeed, defining a relational program logic for exceptions was an open research
problem, and our proposed solution depends on solving several non-trivial
technical challenges.
Here we begin with an analysis of the main obstruction of applying the simple
construction above to exceptions, and how that guides us to a generic
construction that can be made to work.
%% While the simple construction we described so far works well for state
%% and nondeterminism (and also input-output and probabilities),
%% it does not work for exceptions.

For relating computations that can raise exceptions, we often need to use
expressive specifications that can tell whether an exception was raised or not
in each of the computations.
For instance, such relational specifications could be drawn from:
\[
\Wrel{\Exc}(A_1,A_2) = ((A_1+E_1) \times (A_2+E_2) \to \prop) \to \prop.
\]
A predicate transformer $w : \Wrel{\Exc}(A_1,A_2)$ maps an exception-aware
postcondition $\varphi : (A_1+E_1) \times (A_2+E_2) \to \prop$ to a precondition,
which is just a proposition in $\prop$.
However, more work is needed to obtain
% just having exception-aware postconditions\ch{didn't get what this first
%   part has to do with the rest, having a beer is not enough either}
% is not enough to obtain
a \emph{compositional} proof system.
Indeed, suppose we have derivations for $\srelsyn{m_1}{m_2}{w^m}$ and $\forall a_1,a_2,
\srelsyn{f_1\,a_1}{f_2\,a_2}{w^f\,(a_1,a_2)}$ with specifications $w^m$
and $w^f\,(a_1,a_2)$ drawn from $\Wrel{\Exc}$.
In order to build a composite
proof relating $c_1 = \bind^\Exc\,m_1\,f_1$ and $c_2 = \bind^\Exc\,m_2\,f_2$ we need
compose $w^m$ and $w^f$ in some way.
If $w^m$ ensures that $m_1$ and $m_2$ terminate both normally returning values
we can compose with $w^f$ and if they both throw exceptions we can pass the
exceptions to the final postcondition.
Otherwise, a computation, say $m_1$, returns a value and the other, $m_2$,
raises an exception.
% This can be done if $w^m$
% ensures that both computations $m_1$ and $m_2$ terminate normally returning values.
% %
% However, this cannot be done if $m_1$ returns a value but $m_2$ throws
% an exception (or the other way around).
In this situation, the specification relating $c_1$ and $c_2$ needs a
specification for the continuation $f_1$ of $m_1$, but this cannot be
extracted out of $w^f$ alone.
% considering the programs
% \begin{center}
% \begin{array}{cccc}
%   $m_1 =$\;\lstinline!retExc ()!,
%   &$f_1 = $\;\lstinline!fun () -> throw e_1!,
%   &$m_2 = $\;\lstinline!throw e_2!,
%   &$f_2 = $\;\lstinline!fun _ => retSt ()!,
% \end{array}
% \end{center}
% then any derivations of $\srelsyn{m_1}{m_2}{w^m}$ and $\srelsyn{}$
%
In terms of the constructs of $\Wrel{\Exc}$, this failure is
an obstruction to complete the following tentative definition of
$\bind^{\Wrel\Exc}$:
% One cannot define $\bind^{\Wrel\Exc}$ for this relative monad though,
% as illustrated by the following failed attempt:
\begin{lstlisting}
let bindWExc \,$w^m$ ($w^f$ : A_1 \times\, A_2 -> (((B_1 + E_1) \times\, (B_2 + E_2)) -> Prop) -> Prop) ($\varphi$ : (B_1 + E_1) \times\, (B_2 + E_2) -> Prop) =
  $w^m$ (fun$x$ : (A_1 + E_1) \times\, (A_2 + E_2).
      match $x$ with
      | Inl $a_1$, Inl $a_2$ -> $w^f$ $a_1$ $a_2$ $\varphi$
      | Inr $e_1$, Inr $e_2$ -> $\varphi$ (Inr $e_1$, Inr $e_2$)
      | _ -> ???  )
\end{lstlisting}
% When both initial computations $m_1$ and $m_2$ (from the
% \hyperref[eq:bindRuleSimple]{bind rule} above) return
% values $a_1$ and $a_2$ we can pass them to $w^f$, the specification of the
% continuation.
% %
% When $m_1$ and $m_2$ raise exceptions $e_1$ and $e_2$ we can pass them
% to the postcondition $\ii{post}$, discarding the continuation.
% %
% However, there is no compositional way to complete this definition when,
% for instance, $m_1$ returns a value $a_1$ but $m_2$ raises an exception.
% The continuation $f_1\,a_1$ of $m_1$ might raise an exception so we need
% to take in account the behaviour of $f_1$ before applying the postcondition $post$.
% one
% initial computation returns a value but the other raises an exception.
% Indeed, by compositionality we expect to be able to derive from derivations of
% $\srelsyn{\throw\,e_1}{skip}{w_1}$ and $\srelsyn{skip}{\throw\,e_2}{w_2}$
% a derivation of $\srelsyn{\throw\,e_1 ; skip}{ skip ; \throw\,e_2}{\bind^{\WRel{Exc}}\,w_1\,(\abs{\_}{w_2})}$
% ultimately imposing that the postcondition does not depend on the raised
% exceptions, and not even on which side fro the left or the right raised an
% exception.\km{this is too technical and unclear...}
% However, there is nothing sensible we can do when one initial computation returns a value
% but the other raises an exception.
% \ch{TODO: Provide a more convincing argument
%   why not. For instance, could explain why trivial attempts (\EG returning True
%   or False) fail.}
%
Our solution is to pass in two independent {\em unary} (\IE non-relational)
specifications for the continuations $f_1$ and $f_2$ as additional arguments for bind:
% \ch{Not understandable what the connection is to the anyway unspecified ``more
%   complex relative monad and effect observations''}
%
\begin{lstlisting}
let bindWExc \,$w^m$ ($\ii{w}^{f_1}$ : A_1 -> ((B_1 + E_1) -> Prop) -> Prop) ($\ii{w}^{f_2}$ : A_2 -> ((B_2 + E_2) -> Prop) -> Prop) $w^f$ $\varphi$ =
  $w^m$ (fun$x$ : (A_1 + E_1) \times\, (A_2 + E_2).
      match $x$ with
      ...
      | Inl $a_1$, Inr $e_2$ -> $w^{f_1}$ $a_1$ (fun$\ii{be}$. $\varphi$ $\ii{be}$ (Inr $e_2$))
      | Inr $e_1$, Inl $a_2$ -> $w^{f_2}$ $a_2$ (fun$\ii{be}$. $\varphi$ (Inr $e_1$) $\ii{be}$)  )
\end{lstlisting}
The first new \iffull match branch \else case \fi
corresponds to \iffull a situation where \else when \fi
$m_2$ terminated with an exception whereas $m_1$ returned a value normally.
In this situation, we use the unary specification $w^{f_2}$ to further evaluate
the first computation, independently of the second one, which already terminated.
The key observation is that the operation $\bind^\Wrel{\Exc}$ can still be used to define
a relative monad, but in a more complex relational setting that we introduce in
\autoref{sec:generic}.
As a consequence of moving to this more complex setting
our relational judgment needs to also keep track of unary
specifications, and its semantics also becomes more complex.
We tame this complexity by working this out internally to a {\em relational}
dependent type theory \cite{Ton13}.
These two novel conceptual ideas (combining unary and binary
specifications, and embedding inside a relational dependent type
theory for dealing with contexts) are fundamental pieces that
make the generic framework work.
% \km{management of context for
%   instance}\ch{you refer to the relational adjective I just added above,
%   or in general to DTT?}\ch{Might want to postpone this addition for
%   a few pages where the complex setting is motivated and quickly
%   introduced?}\km{yes I think it is a bit premature here}\ch{here then!}\km{Not
%   sure what more you want to add here...}
In practice we can still implement this relational dependent type
theory inside our ambient type theory and continue using the same
tools for verification. We use \coq\ for
developing a proof of concept implementation: we represent the types in the
relational type theory using relations, and implement a set of
combinators that account for type formers such as $\Pi$-types.

\noindent
This paper makes the following \textbf{contributions}:
\begin{itemize}[leftmargin=*,nosep,label=$\blacktriangleright$]
\item We introduce the first generic framework for deriving relational program
  logics for arbitrary monadic effects, distilling the essence of previous
  relational program logics for specific effects.
  The proposed framework is highly expressive, and not only allows one to prove
  arbitrary relations between two programs with different monadic effects,
  but it also inherits the features of dependent type theory
  (higher-order logic, dependent types, polymorphism, lambdas, etc).

\item We provide a generic semantics for these relational program logics based
  on the novel observations that (1) the algebraic structure of relational
  specifications can be captured by particular relative monads, and (2) the two
  considered computations can be mapped to their specifications by
  particular relative monad morphisms we call relational effect observations.
  Our framework provides great flexibility in choosing the kind of relational
  specifications and the effect observation best suited for the verification
  task at hand.
  Finally, our generic rules are proven sound for any specification monad and
  any effect observation.
\ifsooner
\ch{Modulo some temporary cheating about the ones that need to add
    assumptions, like if-then-else}
\fi
  % \km{An important point: the soundness of these rule do not depend on the
  %     semantics given by the effect observation}\ch{Effect observations not yet
  %     introduced, so moved here.}

  % CH: blah, Moggi would be so proud:
  % We explain how the semantics of those program logics can be captured
  % algebraically by relative monads and relative monad morphisms---a notion that
  % we extend to account for various base functors---hence providing a principled
  % approach to relational reasoning in the spirit of \citet{Moggi89}'s
  % interpretation of computational effects as monads;

\item
  We show that this generic framework can be used to define relational
  program logics for effects as diverse as state,
  input-output, nondeterminism, and discrete probabilities.
%
% We illustrate our framework by proving information flow control
%   specifications for stateful programs.
% \ch{Any effects beyond state here? Also
%     anything else we proved beyond IFC?}\km{Not to my knowledge}
%
  Moreover, we show that by instantiating our framework with state and
  unbounded iteration,
  we obtain a logic expressive enough to encode a variant of
  \citepos{benton04relational} Relational Hoare Logic (\RHL) (\autoref{sec:RHL}).
  Finally, we also sketch how \citepos{NanevskiBG13} Relational Hoare Type
  Theory (RHTT) can be reconstructed on top of our framework (\autoref{sec:RHTT}).
  % as well as the Relational Hoare Type Theory of \citet{NanevskiBG13}.
  % \ch{Are state and iteration really enough for RHTT? No other effects?
  %   Also, will we really have a {\bf single} logic encoding both RHL and RHTT?}
  % \km{Not they actually don't have anything to do with each other. RHTT the plan
  %   was to show that the objects they define has (rather) naturally the
  %   structure our framework requires}

\item We identify and overcome conceptual challenges that prevented previous
  relational program logics from dealing with control effects such as exceptions
  \cite{BartheCK16}.
    % \ch{ What
    % exactly is special about exceptions? Which other effects do we expect to be
    % in this bucket? Does it have to do with fancy notions of control? Also are
    % we sure anyone tried to do exceptions (or other such effects) and failed?
    % If not need to phrase this differently.}\km{Some people have been doing
    % assertions, which could be considered as non-recoverable exceptions}\ch{
    % Right, assumes and asserts seem to me closely related to undecidable
    % exceptions, one with total the other with partial correctness. Previous
    % work just worked around the problem in a hacky way, right?}
%
  We provide a proper semantic account of exceptions and the first relational
  program logic for this effect.

% \item We identify limitations in the way exceptions are handled in relational
%   program logics\ch{Are exceptions ever handled in previous logics?
%     Or do you mean partiality?}
%   and explain how to provide a systematic account for this
%   effect;\ch{we are the first to properly do exceptions and partiality!}\km{Not
%     so sure about partiality...}\ch{Depends what one means by partiality,
%     you mean generic recursion, not errors, right?}

% \item We prove relational properties such as noninterference for a few simple
%   programs illustrating the possibility to instantiate our framework to
%   particular effects;\ch{If this is so weak not worth calling it out here}

\item We propose a monadic notion
  of product programs, and illustrate it for the state effect.
  % underlying a popular proof methodology for relational reasoning.
   % to logical rules
%% the
%%   insight it provides when designing effect-specific rules.
\end{itemize}

% \ch{More work needed here, as already explained:}
% \ch{
% Contributions
% \begin{itemize}
% \item including all the ways in which our framework is general
% \item including calling out everything that's new and/or challenging
% \item see board pictures + plan.org
% \end{itemize}}

% \ch{Benjamin Pierce \href{https://cseweb.ucsd.edu/~npolikarpova/plmw/plmw2018-curse-of-knowledge.pdf}{recommends} using the main idea and contributions
%   as the main landmarks to guide the reader of the paper.
%   Use a running example to illustrate the main idea. For instance
%   state for the simple setting, and exceptions for the generic one.

%   For him a good abstract includes:
%   Main idea + brief summary of contributions.

%   For him a good introduction only includes:
%      Motivating example
%    + Main idea
%    + Contributions -- See his slide about this!
%    + Enough background to understand the Contributions
%    + Nothing else!}

\paragraph{Outline}
After recalling how computational monads can express a wide range of effects,
\autoref{sec:simplified} introduces relational specification monads and effect
observations, on top of which we build a simplified variant of our relational
reasoning framework, which we illustrate for state, input-output,
nondeterminism, discrete probabilities, and unbounded iteration,
and also with proofs of noninterference.
\autoref{sec:generic} then extends this simplified setting to account for
effects including exceptions, based on a relational dependent type theory
and also using relative monads as a unifying tool for the two settings.
\autoref{sec:embedRelProgramLogics} explains how to embed \RHL{} and the
connection to RHTT.
In \autoref{sec:product-programs} we discuss product programs,
before reviewing related work in \autoref{sec:related} and
concluding in \autoref{sec:conclusion}.
The ideas of this paper are supported by an accompanying \coq{}
development \ifanon in the anonymous supplementary material \fi
providing a proof of concept
implementation\ifanon\else~(available at \url{https://gitlab.inria.fr/kmaillar/dijkstra-monads-for-all/tree/relational})\fi~
that includes both the simplified and generic frameworks.

% \ch{Add something about the Coq
%   development and supplementary materials / appendix}

% \ch{Could try to motivate this better. This is practically useful!?  Saying this
%   might raise too high expectations about the current paper though, so maybe
%   keep that for future work?}\km{I would rather keep it to: we propose a
%   theoretical framework, backed by a POC in Coq}

\section{Simplified Framework}
\label{sec:simplified}

In this section we introduce a simple framework for relational reasoning about
monadic programs based on (1) relational specification monads,
capturing relations between monadic programs, and (2) relational effect
observations, lifting a pair of computations to their specification.
By instantiating this framework with specific effects, we show how the specific
rules of previous relational program logics can be recovered in a principled way
and illustrate by example how these rules can be used to prove relational
properties of monadic programs, such as noninterference.
But first, we recall the monadic presentation of a few effects such as state,
exceptions, and nondeterminism.

\subsection{From Effects to Monads}
\label{sec:monads}

The seminal work of \citet{Moggi89} proposes using computational monads
to encapsulate effects. A monad is a parametrized type $\M\,A$
equipped with two operations $\ret^\M : A \to \M\,A$, sending a value
to an effectful context, and $\bind^\M : \M\,A \to (A \to \M\,B) \to
\M\,B$, sequentially composes an effectful computation returning values in $A$
with a continuation returning values in $B$, resulting in a $B$-valued
computation. Crucially, these operations obey 3 laws -- unitality of
$\ret$ with respect to $\bind$ and associativity of $\bind$ --
ensuring that any combinations of $\ret^\M$ and $\bind^\M$ can be seen
as a linear sequence of computations.
% \ch{surprising explanation, not clear what
%   you mean ``sequence of computations'' and why that has anything to do with
%   these laws. How about just L, R unit and associativity?}\km{The laws impose a
%   total order on the sequence of computations, like a list}
% \begin{align*}
%   \bindM\,(\retM\,a)~f &= f\,a
%   &\bindM\,m~\retM &= m
%   &\bindM\,m~(\abs{x}{\bindM\,(f\,x)~g}) &= \bindM~(\bindM\,m\,f)~g
% \end{align*}
\begin{mathpar}
  \bindM\,(\retM\,a)~f = f\,a
  \and
  \bindM\,m~\retM = m
  \and
  \bindM\,m~(\abs{x}{\bindM\,(f\,x)~g}) = \bindM~(\bindM\,m\,f)~g
\end{mathpar}

A considerable number of effects are captured by monads, including stateful
computations, exceptions, interactive input-output,
nontermination,
% \km{partiality?}\ch{for me the word partiality seems now too
%   overloaded, and also interacts badly with partial vs total correctness}
% \km{Yes but we do not do any recursion for now in this paper.. Unbounded
%   iteration?}\ch{This list wasn't about what we do in this paper}
%
nondeterminism, and continuations \cite{BentonHM00}.
Each monad comes with specific {\em operations} \cite{PP:NCDM} that allow the computation to
perform the actual effects that the monad provides. To
fix notation, we recall the basic monads corresponding to the effects
that we will use in the rest of the paper.

\paragraph{Stateful computations}
State passing functions $\St\,A = S \to A \times S$ are used to model
state, where $S$ is the type of the state.
The functions $\ret^\St$ and $\bind^\St$ are defined as
\begin{center}
  \lstinline!let retSt a : St A = fun s. (a,s)!\qquad
  \lstinline!let bindSt (m:St A) (f:A -> St B) : St B = fun s. let (a,s') = m s in f a s'!
\end{center}
% \[
% \ret^\St~a = \lambda s . (a, s)\ ,
% \qquad\qquad
% \bind^\St~m~f = \lambda s . {\bf let}~(a, s') = m~s~{\bf in}~f~a~s'\ .
% \]
This monad comes with two operations
\begin{center}
  \lstinline!let get : St S = fun s. (s,s)!\qquad\qquad
  \lstinline!let put (s:S) : St one = fun s_0. ((), s)!
\end{center}
% \[
% \stget : \St\,S = \lambda s . (s, s)\ , \qquad\qquad
% \stput : S \to \St\,\one = \lambda s . \lambda s' . ((), s)\ .
% %% \stget : \St\,S\ ,\qquad\qquad \stput : S \to \St\,\one\ .
% \]
that permit reading and updating the state.
A particular case of state are stores with many locations of a
particular type $\Val$. If $\mathcal{L}$ is a set of locations, then a
computations with a store of type $S = \mathcal{L} \to \Val$
are expressed by monad $\St_S$. In this case, we have custom
operations that are parameterized by the location which we are
accessing in the store:
\begin{center}
  \lstinline!let getL (l:loc) : St Val = fun s. (s l,s)!\qquad\qquad
  \lstinline!let putL (l : loc) (v : Val) : St one = fun s. ((), upd s l v)!
\end{center}
where \lstinline!let upd s l_1 v = fun l_2. if l_2 = l_1 then v else s l_2!.

\paragraph{Exceptions}
Computations potentially throwing exceptions of type $E$ are captured
by the type constructor $\Exc\,A = A + E$. The monadic operations are
\begin{lstlisting}
  let retExc a : Exc A = Inl a
  let bindExc (m:Exc A) (f:A -> Exc B) : Exc B = match m with | Inl a -> f a | Inr e -> Inr e
\end{lstlisting}
% \begin{center}
%   \lstinline!let retExc a : Exc A = Inl a!\qquad\qquad
%   \lstinline!let bindExc (m:Exc A) (f:A -> Exc B) : Exc B = match m with | Inl a -> f a | Inr e -> Inr e!
% \end{center}
% \[
% \ret^\Exc~a = \inl~a\ ,
% \qquad\qquad
% \bind^\Exc~m~f = \case{m}{\inl\;a}{f~a}{\inr\;e}{e}\ .
% \]
% \km{Do we need any explanation for match?}\er{if we are using code-style I do not think there is a need for it}
% where the expression $\case{e}{\inl\;a}{f(a)}{\inl\;b}{g(b)}$ denotes
% the usual match or case expression for disjoint unions which behaves
% as $f(a)$ or $g(b)$ depending on whether $e$ is of the form $\inl\;a$
% or $\inr\;b$.
The operations provided are throwing and catching exceptions\footnote{Catching
  exceptions is the primary example of a \emph{handler} \cite{PlotkinP09}; we
  use here the term operation in a wide sense englobing both \emph{algebraic
    operations} (that we present as \emph{generic effects} \cite{PP:NCDM}) and handlers.}:
\begin{lstlisting}
let throw (e:E) : Exc zero = Inr e
let catch (m:Exc A) (mexc : E -> Exc A) : Exc A = match m with | Inl a -> Inl a | Inr e -> mexc e
\end{lstlisting}
% \[\throw : E \to \Exc\,\zero = \lambda e . \inr~e\ . \]

\paragraph{Interactive Input-output}
Computations doing interactive input of type $I$ and output of type
$O$ are captured using monads as well. The type constructor has a
tree-like form
\begin{lstlisting}
type IO A = | Ret : A -> IO A | Input : (I -> IO A) -> IO A | Output : O -> IO A -> IO A
\end{lstlisting}
which consists of three possible cases: either we are done with a
return value (\lstinline!Ret!), or we expect a new input and then
continue (\lstinline!Input!), or we output and the continue
(\lstinline!Output!). The monadic function \lstinline!retIO!
constructs a unique leaf tree using \lstinline!Ret!
and \lstinline!bindIO! does tree grafting.
%% \begin{lstlisting}
%% let retIO (a : A) : IO A = Ret a
%% let rec bindIO (m : IO A) (f : A -> IO B) = match m with
%%     | Ret a -> f a
%%     | Input k -> Input (fun i . bindIO (k i) f)
%%     | Output o k -> Output o (bindIO k f)
%% \end{lstlisting}
The operations perform input and output, and they are directly
captured using the corresponding constructors.
\begin{center}
\lstinline!let input : IO I = Input (fun i . retIO i)!\qquad\qquad
\lstinline!let output (o : O) : IO one = Output o (retIO ())!
\end{center}
We call this monad the input-output monad on $(I, O)$.
%% Note that \lstinline!input! and \lstinline!output! with $I = O = S$
%% have the same signature as \lstinline!get! and \lstinline!put! in the
%% monad $\St$ above. However, for the input-output monad we do not
%% expect additional laws such as that outputing twice is equivalent to
%% output only once the last value.

% \er{This got too long}

\paragraph{Nondeterminism}
The finite powerset $\NDet\,X = \mathcal{P}_\ii{fin}(X)$ models nondeterministic
computations as a set of possible outcomes. The return operation maps a value
$v$ to the singleton $\left\{ v \right\}$, and the bind operation uses union to
collect all results, \IE $\bind^\NDet~m~f = \bigcup_{v \in m}f\,v$. The
operation $\choice = \{ \true, \false \} : \NDet\,\bool$ nondeterministically
select a boolean value, whereas the operation $\fail{} : \NDet\,\zero$ does not
return any value. Repeating this operation, we can nondeterministically ${\choose
: (n:\bN) \to \fin\,n}$ an element of a finite set $\fin\,n$.

\paragraph{Imp-like effect}
To capture the syntax of simple imperative programs,
manipulating state and unbounded iteration, we introduce the $\Imp$ monad:
\begin{lstlisting}[mathescape]
type Imp A = | Ret : A -> Imp A $\hspace{2cm}$ | DoWhile : Imp bool -> Imp A -> Imp A
$\hspace{1.84cm}$| Get : (S -> Imp A) -> Imp A $\hspace{0.68cm}$| Put : S -> Imp A -> Imp A                    
\end{lstlisting}
Besides the monadic operations and the stateful ones, the $\Imp$ monad is built
to offer an operation
\begin{lstlisting}
let do_while (body : Imp bool) : Imp one  = DoWhile body (Ret ()) 
\end{lstlisting}
The expected semantics of this operation is to take a
computation \ls$body$ and to iterate \ls$body$ as long as it returns
true, so that the following equation -- which does not hold in
$\Imp$\ch{confusing, so where else it holds?}
-- is satisfied
\begin{center}
\lstinline!do_while body!\quad$=$\quad\lstinline!bindImp body (fun b . if b then do_while body else retImp ())!
\end{center}
When defining functions out of $\Imp$,
we will thus make sure that it holds in the target.

% this condition when
% providing a predicate semantic for the computational monad $\Imp$, as it is done
% through an effect observation. \km{Need to explain more here}\ch{Right,
%   this last phrase is not understandable, but not sure here
%   is the right place to go into such details.}
%  \km{for reference}
% \begin{lstlisting}
% let retImp (a:A) : Imp A = Ret A
% let rec bindImp (m:Imp A) (f:A -> Imp B) =
%   match m with
%   | Ret a -> f a
%   | DoWhile body k -> DoWhile body (bindImp k f)
%   | Get k -> Get (fun s -> bindImp (k s) f)
%   | Put s k -> Put s (bindImp k f)  
% \end{lstlisting}

% Equivalently we might wish to speak about while-loops, in contrast to do-while loops.
% This can be accomplished with the following operator
% \begin{lstlisting}
% let while (cond: Imp bool) (body : Imp unit) : Imp unit =
% 	do_while (bindImp cond fun b ->
% 	        if b then (body ; retImp true ) else retImp false)
% \end{lstlisting}

\paragraph{Probabilities}
A probabilistic computation is a sub-probability distribution on possible
outcomes, \IE{} for a countable type $A$, $\Prob\,A$ represents functions
$f : A \to \interval$ (where we write $\interval$ for the unit interval $[0;1]$)\ch{Why
  not simply use $[0;1]$ everywhere? For me
  $\interval$ doesn't really look like an I, but like a tiny little box.}
such that $\sum_{a\in A} f\,a \leq 1$.
Restricting our attention to discrete probabilities, the monad
structure on $\Prob$ is known as the \emph{Giry monad}~\cite{Giry}. 
The Dirac distribution at $v$ assigning weight $1$ to $v$ and $0$ to any other
value implements returns. Binding a distribution $m : \Prob\,A$ to a function
$f : A \to \Prob\,B$ amounts to computing the distribution on $B$ given by
$\abs{y}{\Sigma_{x \in \supp{m}} f\,x\,y}$.
We can consider various basic distributions on countable spaces as operations,
for instance $\flip : \interval \to \Prob\,\bool$ provides a Bernoulli
distribution $\mathcal{B}_p$ on booleans (with parameter $p \in \interval$)%  and $\unif : n:\bN
% \to \Prob\,(\texttt{fin}\,n)$ provides a uniform distribution on the finite type
% $\fin\,n$ with $n$ elements
.

\subsection{Specifications as (Relative) Monads}
\label{sec:simple-specs}

An important idea in the non-relational verification setting is to encapsulate the
specification of a monadic computation inside a monad \cite{nmb08htt, NanevskiBG13,
  ynot-icfp08, DelbiancoN13, fstar-pldi13, mumon, dm4free, dm4all},
giving the same algebraic footing to both computations and
specifications. For instance, stateful computations returning values
in $A$ are elements of a state monad $\St\,A = S \to (A \times S)$
and can be given specifications drawn from the monad $\W^\St\,A = (A
\times S \to \prop) \to (S \to \prop)$ equipped with the monad structure given
by
\begin{lstlisting}
let retWSt (a:A) : WSt A = fun $\varphi$ s. $\varphi$ (a,s)
let bindWSt (wm : WSt A) (wf : A -> WSt B) : WSt B = fun $\varphi$ s. wm (fun a. wf a $\varphi$) s
\end{lstlisting}
Intuitively, a specification $w : \W^\St\,A$ is a predicate transformer mapping
postconditions, which are predicates on the return value and final state, to
preconditions, which are predicates on the initial state. The monadic structure
on $\W^\St$ provides a canonical way to describe the monadic rules of a
non-relational program logic, \IE
\begin{align}
  \label{eq:dijstra-monads-typing-rules}
 \inferrule{\vdash v : A }{\vdash \ret^\St\,v: \St\,A~\{~\ret^{\W^\St}\, v~\}} 
  \qquad&\qquad
 \inferrule{\vdash m : \St\,A~\{~w^m~\} \\
   a:A\vdash f\,a : \St\,B~\{~w^f~\}}
 {\vdash \bind^\St\,m\,f : \St\,B~\{~\bind^{\W^\St}\,w^m\,w^f~\}}
\end{align}
This is, in fact, the main idea behind Dijkstra monads \cite{fstar-pldi13,
  mumon, dm4free, dm4all, Jacobs15}, which additionally internalize
$\St\,A~\{w\}$ as a computation type.

Now returning to the relational setting, a relational specification for a pair of
stateful computations $c_1 : \St_{S_1}\,A_1$ and $c_2 : \St_{S_2}\,A_2$
consists of a predicate transformer $w$ mapping postconditions
relating two pairs of a result value and a final state to a precondition relating
two initial states, \IE
\begin{align}
  \label{eq:srelspecState}
\Wrel{\St}(A_1,A_2)  =
  ((A_1 \times S_1) \times (A_2 \times S_2) \to \prop) \to
  S_1 \times S_2 \to \prop.
\end{align}
$\Wrel{\St}$ does not posses the monad structure its unary variant has.
To begin with it is not even an endofunctor: it
takes {\em two} types as input and produces one.
However, in order to derive a relational program logic, we need operations
playing the role of $\ret^{\W^\St}$ and $\bind^{\W^\St}$ in the unary
rules~(\ref{eq:dijstra-monads-typing-rules}).
In detail, we need a specification covering the case of two returns, as well as
a combinator producing a specification for a pair of $\bind^\St$ out of
specifications for the subcomputations.
% there are still terms providing specifications when the two
% computations are returns, or combining specifications for sequencing,
% that is 
In the particular case of $\Wrel{\St}$, the monadic operations of the unary
variant $\W^{\St}$ can be naturally extended to the relational setting providing such combinators:
\begin{lstlisting}[mathescape]
let retWrelSt (a_1,a_2):A_1 \times\, A_2 : WrelSt (A_1,A_2) = fun $\varphi$ (s_1, s_2). $\varphi$ ((a_1,s_1), (a_2,s_2))
\end{lstlisting}
\begin{lstlisting}[mathescape]
let bindWrelSt (wm : WrelSt (A_1,A_2)) (wf:A_1 \times\, A_2 -> WrelSt (B_1,B_2)) : WrelSt (B_1,B_2) =
  fun $\varphi$ (s_1,s_2). wm (fun ((a_1,s_1'),(a_2,s_2')). wf (a_1, a_2) $\varphi$ (s_1',s_2'))
\end{lstlisting}
These operations satisfy equations analogous to the monadic ones and are part of a
relative monad structure in the sense of \citet{AltenkirchCU14}. The relational
specifications for state $\Wrel{\St}$ are also naturally ordered by
$\leq^\Wrel{\St}$ (see (\ref{eq:WrelSt-order}) in \autoref{sec:intro})
% \[
% w' \leq^{\Wrel{\St}} w \quad=\quad \forall \varphi~(s_1,s_2).~ w~\varphi~(s_1, s_2)
%   \Rightarrow w'~\varphi~(s_1, s_2)
% \]
and this ordering is compatible with the relative monad structure, as long as we
restrict our attention to \emph{monotonic} predicate transformers, a condition
that we will assume from now on for all monads on predicate transformers.
We call such a monad-like structure equipped with a compatible ordering a
\emph{simple relational specification monad}.
% \km{I don't really like the ``pre'' part because we won't be refining
%   this but rather extend it; any better idea? after talking with exe : simple,
%   elementary, basic, light}
\begin{definition}%[simple relational specification monad]
  \label{def:sRelSpecMon}
  A {\em simple relational specification monad} consist of
  \begin{itemize}
  \item for each pair of types $(A_1,A_2)$, a type $\Wrel{}(A_1,A_2)$
    equipped with a preorder $\leq^\Wrel{}$
  \item an operation $\ret^\Wrel{} : A_1 \times A_2 \to \Wrel{}(A_1,A_2)$
  \item an operation $\bind^\Wrel{} : \Wrel{}(A_1,A_2) \to (A_1\times A_2 \to
    \Wrel{}(B_1,B_2)) \to \Wrel{}(B_1,B_2)$ monotonic in both arguments
  \item satisfying the 3 following equations
    \begin{mathpar}
      \bind^\Wrel{}\,(\ret^\Wrel{}\,(a_1,a_2))~w^f = w^f\,(a_1,a_2)
      \and
      \bind^\Wrel{}\,w^m~\ret^\Wrel{} = w^m
      \and
      \bind^\Wrel{}\,(\bind^\Wrel{}\,w^m\,w^f)~w^g = \bind^\Wrel{} w^m (\abs{x}{\bind^\Wrel{}\,(w^f\,x)~w^g})
    \end{mathpar}
    for any $a_1 : A_1, a_2 : A_2, w^f : A_1 \times A_2 {\to}\Wrel{}(B_1,B_2),
    w^m:\Wrel{}(A_1,A_2), w^g :B_1\times B_2{\to}\Wrel{}(C_1,C_2)$.
    % \begin{align*}
    %   \bind^\W\,(\ret^\W\,(a_1,a_2))~w^f &= w^f\,(a_1,a_2)\\  
    %   \bind^\W\,w^m~\ret^\W &= w^m\\  
    %   \bind^\W\,(\bind^\W\,w^m\,w^f)~w^g &= \bind^\W w^m (\abs{x}{\bind^\W\,(w^f\,x)~w^g})
    % \end{align*}
  \end{itemize}
\end{definition}
A simple way to produce various examples of simple relational specification
monads besides $\Wrel{\St}$ is to start from a non-relational
specification monad $\W$ in the
sense of \citet{dm4all}, that is a monad equipped with a compatible order, and
to compose it with the function $(A_1,A_2)\mapsto A_1 \times A_2$. A result of
\citet{AltenkirchCU14} (prop. 2.3.(1)) then ensures that $\Wrel{}(A_1,A_2) =
\W(A_1\times A_2)$ is a simple relational specification monad. In the following
paragraphs, we illustrate this construction with a few concrete instances
showing the flexibility of this construction. Depending on the property we want to
verify and the desired verification style, we can pick relational specification
monads among many different alternatives.
For instance, choosing a simpler relational specification monad can often
simplify verification, but also have less expressive power than more
sophisticated variants.
Similarly, relational weakest preconditions are better suited for
(semi-)automated verification, but relational pre-/postconditions are more
intuitive to humans and make the connection to established relational program
logics more evident.

\paragraph{Backward predicate transformer} A stateless version
of $\Wrel{\St}$ is the \iffull backward \fi predicate transformer
\[\Wrel{\Pure}(A_1,A_2) = (A_1 \times A_2 \to \prop) \to \prop\]
equipped with monadic operations and order derived from the monotonic
continuation monad. We call this simple relational specification monad
$\Pure$ because it naturally applies to the relational verification of pure code,
however it can also be useful to verify effectful code as we will see
for nondeterministic computations in
\autoref{sec:effect-specific-rules}.
% \ch{Should still explain the Pure part. All of the sudden we got rid of effects,
%   and that wasn't explained. Especially, since it's not so simple, since even
%   this effectless specification monad can still be used for instance for
%   specifying effects such as nondeterminism (angelic, demonic).}

%% \paragraph{Weakest precondition predicate transformer}

\paragraph{Pre-/postconditions}
Specifications written in terms of pre-/postconditions are simpler to
understand than their predicate transformer equivalents. We
show that relational specifications written as pre-/postcondition
also form a relational specification monad. The type constructor
\[
\PPrel{\Pure}(A_1, A_2) = \prop \times (A_1 \times A_2 \to \prop)
\]
models a pair consisting of a precondition in $\prop$ and a postcondition, that
is a relation on final values of two computations. There is a natural ordering between such
pairs, namely
\[(\pre_1, \post_1) \leq^{\PPrel{\St}} (\pre_2, \post_2) \quad \iff\quad
    \begin{array}{l}
      \pre_2\Rightarrow \pre_1 \enspace \wedge\\
      \forall (a_1 : A_1)(a_2 : A_2). \post_1 (a_1, a_2) {\Rightarrow} \post_2 (a_1, a_2).   
    \end{array}
\]
% $(\pre_1, \post_1) \leq (\pre_2, \post_2)$ if and only
% if $\pre_2 {\Rightarrow} \pre_1$ and $\forall (a_1 : A)(a_2 : A_2). \post_1
% (a_1, a_2) {\Rightarrow} \post_2 (a_1, a_2)$.
The monadic structure is given by
\begin{lstlisting}[mathescape]
let retPP (a_1, a_2) = ( True,\,\, fun (a_1', a_2'). a_1 = a_1' /\ a_2 = a_2' )
let bindPP (pre, post) f = 
  let pre' = pre /\ forall a_1, a_2 . post (a_1, a_2) ==> $\pi$_1 (f (a_1, a_2)) in
  let post' (b_1, b_2) = exists a_1, a_2 . post (a_1, a_2) /\ $\pi$_2 (f (a_1, a_2)) (b_1, b_2) in
  (pre', post')
\end{lstlisting}
The return operation results in a trivial precondition and a postcondition
holding exactly for the given arguments, whereas $\bind^{\PPrel{\Pure}}$
strengthens the precondition of its first argument so that the postcondition of the
first computation entails the precondition of the continuation.
% \lstinline!ret (a_1, a_2) =(True, fun (a_1', a_2') .a_1 = a_1' /\ a_2 = a_2')!,
% and \lstinline!bind (pre, post) f = (pre', post')! where 
% \lstinline!pre' = pre /\ forall a_1, a_2 . post (a_1, a_2) ==> pi_1 (f (a_1, a_2))! and
% \lstinline!post' (b_1, b_2) = exists a_1, a_2 . post (a_1, a_2) /\ pi_2 (f (a_1, a_2)) (b_1, b_2)!.

\paragraph{Stateful pre-/postconditions}
Continuing on pre-/postconditions, we consider a stateful variant of
$\PPrel{\Pure}$:
\[
\PPrel{\St}(A_1,A_2) = (S_1 \times S_2 \to \prop) \times ((S_1 \times A_1 \times S_1) \times (S_2 \times A_2 \times S_2) \to \prop)
\]
These are pairs, where the first component consists of a precondition
on a pair of initial states, one for each sides, while the second component is a
postcondition formed by a relation on triples of an initial state, a final value
and a final state.

The simple relational monadic specification
structure is similar to the one of $\PPrel{\Pure}$, threading in the state where
necessary, and specifying that the initial state does not change for return:
% The order structure $\pair{\pre_1}{\post_1}
% \leq \pair{\pre_2}{\post_2}$ holds when $\forall (i_1 : S_1) (i_2 : S_2).\,
% \pre_2\,\pair{i_1}{i_2} {\implies} \pre_1\,\pair{i_1}{i_2}$ and $\forall (i_1
% f_1 : S_1) (i_2 f_2 : S_2) (a_1 : A_1) (a_2 : A_2).\,
% \post_1\,\pair{\left(a_1,i_1,f_1\right)}{\left(a_2, i_2, f_2\right)} {\implies}
% \post_2\,\pair{\left(a_1,i_1,f_1\right)}{\left(a_2, i_2, f_2\right)}$. The
% monadic function for returning values is
\begin{center}
\lstinline[mathescape]!let retPPrelSt (a_1,a_2) = (fun (s_1, s_2) . True,\,\,\,\,\,\, fun (($s^i_1$, a_1', $s^f_1$),($s^i_1$, a_2', $s^f_2$)) . a_1 = a_1' /\ a_2 = a_2' /\ $s^i_1$ = $s^f_1$ /\ $s^i_2$ = $s^f_2$)!.
\end{center}
% and the sequential composition is given by
% \begin{lstlisting}
% let bind (pre, post) f =
%   let pre' (i_1, i_1) = pre (i_1, i_2) /\ forall a_1, a_2, f_1, f_2 . post ((a_1, i_1, f_1), (a_2, i_1, f_2)) ==> pi_1 (f (a_1, a_2)) (f_1, f_2) in
%   let post' ((b_1, i_1, f_1), (b_2, i_2, f_2)) = exists a_1, a_2, m_1, m_2 . post ((a_1, i_1, m_1), (a_2, i_2, m_2))
%                                                /\ pi_2 (f (a_1, a_2)) ((b_1, m_1, f_1), (b_2, m_2, f_2))) in
%     (pre', post')
% \end{lstlisting}
% \km{This is a bit overwhelming for not much to earn. Wouldn't it be better to
%   spend a bit more time on the pure variant (that's simpler) and just explain
%   that the stateful one (that's used later on) is just a variation where state
%   is threaded?}

There is a natural embedding of stateful pre-/postconditions $(\pre, \post):
\PPrel{\St}(A_1,A_2)$ into stateful backward predicate transformers
$\Wrel{\St}(A_1,A_2)$ given by
\[
\abs{\varphi~(s^i_1,s^i_2)}{\pre (s^i_1, s^i_2) \wedge \forall a_1, a_2, s^f_1, s^f_2 . \post~((s^i_1, a_1, s^f_1),(s^i_2, a_2, s^f_2)) {\Rightarrow} \varphi~((a_1,s^f_1),(a_2,s^f_2))} : \Wrel{\St}(A_1,A_2).
\]

  %% Program Definition fromPrePost {A1 A2}
  %%         (pre : S1 -> S2 -> SProp)
  %%         (post : A1 -> S1 -> S1 -> A2 -> S2 -> S2 -> SProp)
  %%   : dfst (RelSt S ⟨A1,A2⟩) :=
  %%   fun s0=> ⦑fun p => pre (res1 s0) (res2 s0) s/\
  %%                forall a1 a2 s, post a1 (res1 s0) (res1 s) a2 (res2 s0) (res2 s)
  %%                           -> p ⟨⟨a1,a2⟩, s⟩⦒.
  %% Next Obligation. intuition. Qed.

\paragraph{Errorful backward predicate transformer}
While exceptions turn out to be complex in general, a coarse approach is
still possible
 using the simple relational monad 
% In the introduction we have seen that for exceptions,
% it is not possible to come up with a \ls$bindExc$ operation for
% the type constructor
% \[
% \Wrel{\Exc}(A_1,A_2) = ((A_1+E_1) \times (A_2+E_2) \to \prop) \to \prop.
% \]
% In particular, this means it is not possible to give a simple
% relational specification monad structure to this type constructor.\ch{!}%
% \er{?}\ch{see definition 1 above, but please rephrase
%   if it's confusing}\er{sorry, still not clear for me,  but rephrased}
% %
% A simpler type constructor that does have a simple relational monad structure is
% given by
\begin{align}
  \label{eq:wrelerr}
\Wrel{\Err}(A_1,A_2) = ((A_1\times A_2 + \one) \to \prop) \to \prop.
\end{align}
This construction represents a predicate transformer that works on
either successful computations, or on an indication that at least one of
the computations threw an exception, but losing the information of which of the
two sides raised the exception. We can actually show that, under mild assumptions,
no simple relational specification monad accounting for exceptions can
distinguish the three situations where the left, the right, or both programs are
raising exceptions. Intuitively, this is due to the fact that the
two programs are supposed to run independently, but the simple relational
specification monad impose some amount of synchronization. We return to
$\Wrel{\Exc}$ from \autoref{sec:intro}
and solve this problem in \autoref{sec:generic}, while previous
relational program logics have generally been stuck
with weak specification monads in the style of
$\Wrel{\Err}$ above \cite{BartheCK16}.

\paragraph{Input-output backward predicate transformer}
A relational specification monad similar to $\Wrel{\St}$ can be used
to specify interactive I/O computations. For
relating two computational monads on input-output sets $(I_1, O_2)$
and $(I_2, O_2)$, we use
\begin{align}
  \label{eq:wrelio}
\Wrel{\IO}(A_1, A_2) = (A_1 \times A_2 \to \mathrm{list}(\mathcal{E}_1) \times \mathrm{list}(\mathcal{E}_2) \to \prop) \to \mathrm{list}(\mathcal{E}_1) \times \mathrm{list}(\mathcal{E}_2) \to \prop
\end{align}
where $\mathcal{E}_1 = I_1 + O_1$ and $\mathcal{E}_2 = I_2 + O_2$
represent a log element of possible input-output
behaviour. Intuitively, a specification of type $\Wrel{\IO}(A_1, A_2)$
is a backward predicate transformer that transforms a postcondition on
the output results and the I/O history into a precondition describing
the I/O history before running the computations. Alternative relational
specification monads for input-output are easily defined following the
discussion in~\citet{dm4all}.

\paragraph{Quantitative predicate transformers} The backward predicate
transformer $\Wrel{\Pure}$ generalizes to quantitative settings were
propositions are replaced by a notion of resource. We use a
particular case of this generalization as the relational specification
monad for probabilities, restricting to monotonic additive
continuous\footnote{as maps between $\omega$-cpo} maps in the type~\cite{AP-M:2006,Fai17}
\[\Wrel{\Prob}(A_1,A_2) = (A_1 \times A_2 \to \interval) \to \interval \]

\ch{Shouldn't we also cite Alejandro's work on weakest pre-expectations~\cite{AguirreBHKKM19}?}

\subsection{Relational Semantics from Effect Observations}
\label{sec:effect-observations}

% \ch{Should better relate to what was said in intro, not explain everything as if
%   it was new.}

The relational judgment $\srelsyn{c_1}{c_2}{w}$ should assert that monadic
computations $c_1:\Mone{}A_1$ and $c_2:\Mtwo{}A_2$ satisfy a relational
specification  $w:\Wrel{}(A_1,A_2)$ drawn from a simple relational specification
monad. What does this judgment mean in our semantic framework? Certainly it
requires a specific connection between the computational monads $\Mone{}$,
$\Mtwo{}$ and the simple relational specification monad $\Wrel{}$.
In the non-relational setting, this is accomplished by an \emph{effect
  observation}, \IE a monad morphism from the computational monad to the
specification monad \cite{Katsumata14, dm4all}.
An effect observation accounts for the various choices available when specifying
a particular effect, for instance total or partial correctness in the case of
errors or recursion, angelic or demonic interpretations of nondeterministic
computations, or connecting ghost state with actual state or with past IO events.
In the relational setting, we introduce \emph{relational effect observations},
families of functions respecting the monadic structure, defined here
from first principles, but arising as an extension of monad morphisms
as we will show in \autoref{sec:RelMon}.

\begin{definition}%[simple relational effect observation]
  \label{def:sRelEffObs}
  A {\em simple lax relational effect observation} $\theta_{\relmarker{}}$ from computational monads $\Mone,\Mtwo$ to a
  simple relational specification monad $\Wrel{}$ is given by
  \begin{itemize}
  \item for each pair of types $A_1,A_2$ a function $\theta_{\relmarker{}} : \Mone\,A_1 \times
    \Mtwo\,A_2 \to \Wrel{}(A_1,A_2)$
  \item such that
    \begin{align*}
      \theta_{\relmarker{}}\,(\ret^\Mone\,a_1,\ret^\Mtwo\,a_2) &\leq^{\Wrel{}} \ret^{\Wrel{}}\,(a_1,a_2)\\
      \theta_{\relmarker{}}\,(\bind^\Mone\,m_1\,f_1, \bind^\Mtwo\,m_2\,f_2) &\leq^{\Wrel{}}
      \bind^\Wrel{}\,(\theta_{\relmarker{}}\,(m_1,m_2))\,(\theta_{\relmarker{}}\circ (f_1,f_2))
    \end{align*}
    % \begin{mathpar}
    %   \theta\,(\ret^\Mone\,a_1,\ret^\Mtwo\,a_2) = \ret^{\W}\,(a_1,a_2)
    %   \and
    %   \theta\,(\bind^\Mone\,m_1\,f_1, \bind^\Mtwo\,m_2\,f_2) =
    %   \bind^\W\,(\theta\,(m_1,m_2))\,(\theta\circ (f_1,f_2))
    %   % \theta \circ \mathcal{F}(\revBind{f}{\T_1}) = \revBind{(\theta \circ
    %   %   \mathcal{F}f \circ \varphi^{-1})}{\T_2} \circ \theta
    % \end{mathpar}
  \end{itemize}
  We say that $\theta_{\relmarker}$ is a {\em simple strict relational effect
    observation} if these two laws hold with equality.
\end{definition}
\ifsooner
\km{After some proving in Coq, it turns out that we only use the lax effect obs
  in 2 cases : (1) when doing exceptions and making a difference between the
  cases where one program raises or the 2 program raise (2) when doing state
  with a store and requiring that some part of the store is at always
  synchronized between the two programs. The first one is arguably more adapted
  in the complex setting, and the second one is a bit funny because the effect
  observation is only enforcing that the initial and final states satisfy the
  invariant, not the intermediate states (or rather it does not check that there
  exists at least one such synchronization) and that's where the lax character
  is coming from... So do we really need this feature for the current work? For
  now reverted to equalities}\km{After some work with Exe, turns out that any
  relator/lax extension of a monad gives a *lax* rel. spec. monad}
\fi

As explained in the introduction,
for stateful computations % ---\IE
% taking $\Mone{}A_1 = \St_{S_1}A_1,
% \Mtwo{}A_2=\St_{S_2}A_2$, and $\Wrel{}(A_1,A_2) = \Wrel{\St}(A_1,A_2)$
% (\hyperref[eq:srelspecState]{def. \ref*{eq:srelspecState}})---
a simple strict relational effect observation targeting $\Wrel{\St}$
runs the two computations and passes the results to the
postcondition:
\begin{align}
  \label{eq:state-effect-observation}
  \theta^\St_{\relmarker{}}(c_1,c_2) = \abs{\varphi\,(s_1,s_2)}{\varphi(c_1\,s_1, c_2\,s_2)}.
\end{align}
A more interesting situation happens when interpreting nondeterministic
computations $(c_1,c_2):\NDet{}\,A_1\times \NDet{}\,A_2$ into the relational specification monad $\Wrel{\Pure}(A_1,A_2)$.
Two natural simple strict relational effect observations are given by
\begin{align}
  \label{eq:nd-effect-observation}
  \theta_{\relmarker{}}^\forall(c_1,c_2) &= \abs{\varphi}{\forall a_1\in c_1, a_2 \in c_2.\> \varphi(a_1,a_2)},&
  \theta_{\relmarker{}}^\exists(c_1,c_2) &= \abs{\varphi}{\exists a_1\in c_1, a_2 \in c_2.\> \varphi(a_1,a_2)}.
\end{align}
The first one $\theta_{\relmarker{}}^\forall$ prescribes
that all possible results from the left and right computations have to
satisfy the relational specification, corresponding to a demonic
interpretation of nondeterminism, whereas the angelic
 $\theta_{\relmarker{}}^\exists$ requires at least one final value on each
sides to satisfy the relation.

These examples are instances of
the following theorem, which allows to lift unary effect observations
to simple strict relational effect observations.
To state it, we first recall that two monadic computations $c_1 : \M\,A_1$ and
$c_2 : \M\,A_2$ \emph{commute}
% the image of two monad morphism commute. -- blah
%
\cite{Fuhrmann02,BowlerGLS13}
when
\[
\bind^\M~c_1~\left(\abs{a_1}{\bind^\M~c_2~\left(\abs{a_2}{\ret^\M (a_1, a_2)}\right)}\right) =
\bind^\M~c_2~\left(\abs{a_2}{\bind^\M~c_1~\left(\abs{a_1}{\ret^\M (a_1, a_2)}\right)}\right).
\]
The intuition is that executing $c_1$ and then $c_2$ is the same as
executing $c_2$ and then $c_1$.%% When
%% $\theta_1 : \M_1 \to \M$ and $\theta_2 : \M_2 \to \M$ are monad
%% morphisms, we say that \emph{the images of $\theta_1$ and $\theta_2$
%% commute} if for all $c_1 : \M_1\,A_1$ and $c_2 : \M_2\,A_2$, we have
%% that $\theta_1(c_1)$ commutes with $\theta_2(c_2)$.
\begin{theorem}
\label{thm:effect-commutation}
  Let $\theta_1 : \M_1 \to \W$ and $\theta_2 : \M_2 \to \W$ be unary
  effect observations, where $\M_1$ and $\M_2$ are computational
  monads and $\W$ is a (unary) specification monad. We denote with
  $\Wrel{}(A_1,A_2) = \W\,(A_1\times A_2)$ the simple strict relational specification
  monad derived from $\W$ (see~\autoref{sec:simple-specs}\ch{where exactly?}). If for all
  $c_1 : \M_1\,A_1$ and $c_2 : \M_2\,A_2$, we have that
  $\theta_1(c_1)$ and $\theta_2(c_2)$ commute, then the following
  function $\theta_{\relmarker{}} : \Mone{}\,A_1 \times \Mtwo{}\,A_2 \to \Wrel{}(A_1,A_2)$ is a simple
  relational effect observation
  %% images of $\theta_1$ and $\theta_2$ commute
\[
  \theta_{\relmarker{}}(c_1, c_2)
  = \bind^\W~\theta_1(c_1)~\left(\abs{a_1}{\bind^\W~\theta_2(c_2)~\left(\abs{a_2}{\ret^\W
(a_1, a_2)}\right)}\right).
\]
\end{theorem}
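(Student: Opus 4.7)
The plan is to verify the two defining conditions of a simple (strict) relational effect observation, namely preservation of returns and of binds, directly by equational reasoning, exploiting the hypotheses that $\theta_1$ and $\theta_2$ are (unary) monad morphisms and that $\theta_1(c_1)$ and $\theta_2(c_2)$ commute in $\W$. Throughout I will use the explicit formula for $\bind^\Wrel{}$ on $\Wrel{}(A_1,A_2) = \W(A_1\times A_2)$ provided by the construction from \citet{AltenkirchCU14} that we invoked in \autoref{sec:simple-specs}, namely $\bind^\Wrel{} w\, k = \bind^\W w\, (\lambda(a_1,a_2).\, k(a_1,a_2))$, and similarly $\ret^\Wrel{}(a_1,a_2) = \ret^\W(a_1,a_2)$.

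For the return law, I would unfold $\theta_{\relmarker{}}(\ret^{\M_1}a_1,\ret^{\M_2}a_2)$, rewrite $\theta_i(\ret^{\M_i}a_i) = \ret^\W a_i$ using that each $\theta_i$ preserves returns, and then collapse the two $\bind^\W$s with the left unit law of $\W$; this yields $\ret^\W(a_1,a_2) = \ret^\Wrel{}(a_1,a_2)$ with equality, so a fortiori the required inequality.

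The bind law is the main work. Starting from the left-hand side $\theta_{\relmarker{}}(\bind^{\M_1}m_1 f_1,\bind^{\M_2}m_2 f_2)$, I would use preservation of bind by $\theta_1$ and $\theta_2$ to turn $\theta_i(\bind^{\M_i}m_i f_i)$ into $\bind^\W \theta_i(m_i)(\theta_i \circ f_i)$, and then apply associativity of $\bind^\W$ twice to reach a normal form in which the four computations $\theta_1(m_1)$, $\theta_1(f_1 a_1)$, $\theta_2(m_2)$, $\theta_2(f_2 a_2)$ appear in the order $m_1, f_1, m_2, f_2$. On the right-hand side, I would unfold $\bind^\Wrel{}$ and $\theta_{\relmarker{}}$ twice and again use associativity and the left unit law, obtaining the same four computations in the order $m_1, m_2, f_1, f_2$. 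The two normal forms differ only in the order of $\theta_1(f_1 a_1)$ and $\theta_2(m_2)$ under the outer $\bind^\W\theta_1(m_1)$, and it is exactly here that the commutation hypothesis applies: swapping these two computations (with $(a_1,a_2)$ in scope) yields the equality of the two sides. This gives the bind law with equality, so $\theta_{\relmarker{}}$ is in fact strict.

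The main obstacle is lining up the two nested binds so that the commutation hypothesis can be invoked on the correct pair of subcomputations; making the bookkeeping readable will require picking convenient names for the bound variables and applying associativity in a deliberate direction on each side. No order-theoretic subtleties arise, since the two laws hold with strict equality under these hypotheses.
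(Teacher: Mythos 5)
Your proof is correct and is essentially the argument the paper intends (the paper omits the proof, relegating it to the accompanying Coq development): both laws follow by unfolding $\theta_{\relmarker{}}$, using that $\theta_1,\theta_2$ are monad morphisms, normalizing with associativity and the unit laws of $\W$, and invoking commutation exactly to swap $\theta_1(f_1\,a_1)$ past $\theta_2(m_2)$. The only step you compress is that the commutation equation is stated with the continuation $\ret^\W(a_1,a_2)$, so to swap the two computations under the remaining continuation you must post-compose with it and apply associativity and the left unit law once more -- a routine extension that does not affect the argument.
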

Moreover, a partial converse for this theorem holds: given a simple
relational effect observation $\theta_{\relmarker}
: \Mone{}\,A_1 \times \Mtwo{}\,A_2 \to \Wrel{}(A_1,A_2)$ where
$\Wrel{}$ is a lifting of a unary specification monad (\IE
$\Wrel{}(A_1, A_2) = \W(A_1 \times A_2)$), then there exist commuting
unary effect observations $\theta_1 : \Mone{} \to \W$ and $\theta_2
: \Mtwo{} \to \W$ such that $\theta_{\relmarker}$ is equal to the
simple relational effect observation obtained from
applying \autoref{thm:effect-commutation} to these.

%% The examples we can obtain from the previous theorem are always simple
%% relational effect observations that hold with strict equality (\IE
%% non-lax). For obtaining \emph{lax} effect observations, we have an
%% alternative construction that works on \emph{relators}.
Another class of examples of {\em lax} effect observations,
covering for instance the refinement observation for nondeterminism $
\theta^{\forall\exists}_{\relmarker{}}\,(c_1,c_2) = \lambda \varphi.~ \forall
a_1 {\in} c_1.~ \exists a_2 {\in} c_2.~ \varphi (a_1,a_2)$ from \autoref{sec:intro},\ch{
  This subsection doesn't really show how to do this example}
is provided by the following theorem that connects \emph{lax} effect observations to
\emph{relators} $\Gamma$ over the monad $M$~\citep{LagoGL17, Gavazzo18} which lift
relations on values to relations on monadic computations:
\[
\Gamma \quad:\quad (A_1 \times A_2 \to \prop) \longrightarrow M A_1 \times M A_2 \to \prop.
\]
\begin{theorem}
  A relator $\Gamma$ over a monad $M$ induces a simple lax relational
  effect observation of the form $\theta^\Gamma_{\relmarker{}} : M
  A_1 \times M A_2 {\to} \Wrel{\Pure}(A_1, A_2)$.
\end{theorem}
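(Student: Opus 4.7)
The plan is to define the candidate observation as
\[
  \theta^\Gamma_{\relmarker{}}(c_1,c_2) \;=\; \lambda \varphi.\; \Gamma(\varphi)(c_1,c_2),
\]
i.e.\ use $\Gamma$ itself to build, for each relational postcondition $\varphi$ on values, the proposition that the two computations are related by $\Gamma(\varphi)$. Since $\Gamma$ sends a relation on $A_1\times A_2$ to a relation on $MA_1\times MA_2$, this has exactly the type $MA_1\times MA_2 \to \Wrel{\Pure}(A_1,A_2)$ required.

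Next I would verify the two laws from Definition~\ref{def:sRelSpecMon}/Definition~\ref{def:sRelEffObs}, relying on the standard relator axioms of~\cite{LagoGL17,Gavazzo18}: monotonicity in the underlying relation, a return law stating that $(a_1,a_2)\in R$ implies $(\ret^M a_1,\ret^M a_2)\in\Gamma(R)$, and a bind/compatibility law stating that whenever $(m_1,m_2)\in\Gamma(R)$ and $(f_1\,a_1,f_2\,a_2)\in\Gamma(S)$ for every $(a_1,a_2)\in R$, then $(\bind^M m_1 f_1,\bind^M m_2 f_2)\in\Gamma(S)$. For the return law, I must show that for every $\varphi$, $\varphi(a_1,a_2)\Rightarrow\Gamma(\varphi)(\ret^M a_1,\ret^M a_2)$, which is precisely the return axiom of $\Gamma$ applied to the singleton instance of $\varphi$.

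For the bind law I must show, for every $\varphi$, that
\[
  \Gamma\bigl(\lambda(a_1,a_2).\,\Gamma(\varphi)(f_1\,a_1,f_2\,a_2)\bigr)(m_1,m_2)
  \;\Longrightarrow\;
  \Gamma(\varphi)(\bind^M m_1 f_1,\bind^M m_2 f_2).
\]
Setting $R(a_1,a_2) := \Gamma(\varphi)(f_1\,a_1,f_2\,a_2)$ and $S := \varphi$, the hypothesis says $(m_1,m_2)\in\Gamma(R)$, while the second premise of the relator's bind axiom, namely $(a_1,a_2)\in R \Rightarrow (f_1\,a_1,f_2\,a_2)\in\Gamma(S)$, holds by definition of $R$. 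Applying the bind axiom yields the desired conclusion. Monotonicity of $\theta^\Gamma_{\relmarker{}}(c_1,c_2)$ in $\varphi$ (needed so that $\theta^\Gamma_{\relmarker{}}$ actually lands in monotonic predicate transformers) follows from monotonicity of $\Gamma$ in its relation argument.

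The only potentially subtle point is that the exact axiomatisation of a relator varies slightly between sources (some formulations quantify over functors, include a diagonal/identity axiom, or require a converse-closure), so the main care needed is to pin down which axioms we use and to check that exactly the return and bind clauses stated above suffice; the laxness of $\theta^\Gamma_{\relmarker{}}$ (inequality rather than equality in the bind law) is what lets us get away with the one-sided relator axiom rather than a strict monad-morphism-like property, mirroring the refinement example $\theta^{\forall\exists}_{\relmarker{}}$ from the introduction, which arises as the instance $\Gamma(\varphi)(c_1,c_2) = \forall a_1\in c_1.\,\exists a_2\in c_2.\,\varphi(a_1,a_2)$ for the finite powerset monad.
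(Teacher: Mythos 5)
Your construction is exactly the paper's: $\theta^\Gamma_{\relmarker{}}$ is obtained by swapping the two arguments of $\Gamma$, and the two lax laws are discharged by the relator's return and bind compatibility axioms (the paper's proof is a one-liner saying precisely this, whereas you spell out the instantiation $R(a_1,a_2)=\Gamma(\varphi)(f_1\,a_1,f_2\,a_2)$, $S=\varphi$, which is the correct way to fill in that detail). Your closing remark about laxness being what makes the one-sided relator axioms suffice also matches the paper's framing.
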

\begin{proof}
  The carrier of $\theta^\Gamma_{\relmarker{}} : M\,A_1\times M\,A_2 \to (A_1
  \times A_2 \to \prop) \to \prop$ is obtained by swapping the arguments of
  $\Gamma$, while the two inequalities are direct consequences of the compatibility
  of the relator $\Gamma$ with the monad $M$.
\end{proof}
Relators provide interesting examples of relational effect observations
for nondeterminism and for probabilities. Relational effect observation extend
relators by providing the possibility of relating two different computational
monads, as well as having more sophisticated specifications with ghost state or
exceptional postconditions. Conversely, relators preserve -- in a lax sense --
identities and relational composition.\ch{Is this really a ``conversely'',
  given that relators are just a special case?}\ch{I anyway don't understand this claim
  and how it's supposed to show a connection to relational effect observations.}

\km{Should explain that this result provides us other examples too}
\ifsooner
% copy paste of what we had in the conclusion about relators
There seems to be a strong connection between such relators and the effect
observations going into one of the simplest relational specification monads we
consider: $(A_1 \times A_2 \to \prop) \to \prop$.
Such an effect observation has type
$$
M A_1 \times M A_2 \to (A_1 \times A_2 \to \prop) \to \prop,
$$
which is isomorphic to the type of the relator $\Gamma$ above
(this is obvious to see by just swapping the two arguments).
While further investigating this connection is very interesting, since relators
are inherently lax this requires first working out the theory of lax effect
observations, for which the relative monad morphism laws hold with $\leq$
instead of $=$ (see the end of \autoref{sec:effect-observations}).
While we expect such an extension to our framework to be possible and generally
useful, the technical development is involved even for the simple setting of
\autoref{sec:simplified}, so we leave it for future work
(\autoref{sec:conclusion}).
\fi

In general, given a simple lax relational effect observation $\theta_{\relmarker{}} :
\Mone{},\Mtwo{} \to \Wrel{}$,
we define the semantic relational judgment by 
\begin{align}
  \label{eq:simple-semantic-judgement-def}
 \srelsem{c_1}{c_2}{w}{\theta_{\relmarker{}}}\quad=\quad\theta_{\relmarker{}}\,(c_1,c_2) \leq^\Wrel{} w, 
\end{align}
where we make use of the preorder given by $\Wrel{}$.
The following 3 subsections explain how to derive sound rules for a relational logic
parameterized by the computational monads $\Mone{}, \Mtwo{}$, the simple
relational specification monad $\Wrel{}$, and the simple lax relational effect
observation $\theta_{\relmarker{}}$.

\subsection{Pure Relational Rules}
\label{sec:simple-pure-relational-rules}

We start with rules coming from the ambient dependent type theory. Even though
the semantics of the relational judgment depends on the choice of an
effect observation, the soundness of the basic pure rules introduced in
\autoref{fig:pure-relational-rules} is independent from both the computational
monads and effects observation.
Indeed, the proof of soundness of these follows from applying the adequate
dependent eliminator coming from the type theory.
\begin{figure}[h]
\begin{mathpar}
  \inferrule*[left=$\bool$-Elim]{\Ifte{b}{\srelsyn{c_1}{c_2}{w^\top}}{\srelsyn{c_1}{c_2}{w^\bot}}}
  {\srelsyn{c_1}{c_2}{\Ifte{b}{w^\top}{w^\bot}}}
  \and
  \inferrule*[left=$\zero$-Elim{\footnote[2]{}}]{w \leq \dot{\bot}}
  {\srelsyn{c_1}{c_2}{w}}
  \and
  \inferrule*[left=$\bN$-Elim]{
    n : \bN \\ w = \kwd{elim}^\bN\;w_0\;w_{suc}\\
    \srelsyn{c_1[0/n]}{c_2[0/n]}{w_0}\\
    \forall n:\bN,~\srelsyn{c_1}{c_2}{w\,n} \quad\Rightarrow\quad \srelsyn{c_1[\kwd{S}\,n/n]}{c_2[\kwd{S}\,n/n]}{w_{suc}\,(w\,n)}}
  {\srelsyn{c_1}{c_2}{w\;n}}
\end{mathpar}
  \caption{Pure relational rules}
  \label{fig:pure-relational-rules}
\end{figure}
\ch{TODO: The rules of \autoref{fig:pure-relational-rules} should be explained}

These rules can then be tailored as explained in the introduction to derive
asynchronous (\ref{eq:simple-rule-if-left}) or synchronous
(\ref{eq:simple-rule-if-sync}) rules more suited for applications.
For some of the derived rules,
there is, however, an additional
requirement on the simple relational specification monad,
so that we can strengthen preconditions. This small mismatch in the theory,
already present in the unary setting of \citet{dm4all} on top of which
we work, could be solved by adopting a richer definition of
specification monads, for instance taking inspiration in the work of
\citet{Gavazzo18}, and is left as future work.
\ch{So how far are we now from actually solving it?
  It's the kind of statement that weakens our contribution.}
\ch{Basically half(!) of this subsection is spent bashing our own work.
  It would be a lot more productive to explain it first, and also explaining why
  being in a dependent type theory is cool.}

\subsection{Generic Monadic Rules}
\label{sec:simple-generic-monadic-rules}
\footnotetext[2]{Assuming that $\Wrel{}$
      contains a top element $\dot{\bot}$ that entails falsity of the
      precondition; this is the case for all our examples.}
Given any computational monads $\Mone, \Mtwo$ and a simple relational
specification monad $\Wrel{}$, we introduce three rules governing the monadic
part of a relational program logic (\autoref{fig:smonRules}).
\begin{figure}[h]
\begin{mathpar}
  \inferrule*[left=Ret]{ a_1: A_1\\ a_2 : A_2 }{\srelsyn{\ret^\Mone\,a_1}{\ret^\Mtwo\,a_2}{\ret^\Wrel{}\,(a_1,a_2)}}
  \and
  \inferrule*[left=Weaken] {\srelsyn{c_1}{c_2}{w}\\ w\leq w'}
  {\srelsyn{c_1}{c_2}{w'}}
  \and
  \inferrule*[left=Bind]{\srelsyn{m_1}{m_2}{w^m}\\
    \forall a_1,a_2
    \srelsyn{f_1\,a_1}{f_2\,a_2}{w^f\,(a_1,a_2)}}
  {\srelsyn{\bind^\Mone\,m_1\,f_1}{\bind^\Mtwo\,m_2\,f_2}{\bind^\Wrel{}\,w^m\,w^f}}
\end{mathpar}
  \caption{Generic monadic rules in the simple framework}
  \label{fig:smonRules}
\end{figure}
Each of these rules directly corresponds to one aspect of the
simple relational specification monad and are all synchronous.
As explained in the introduction (\ref{eq:bind-right-simple-rule}), it is then
possible to derive asynchronous variants using the monadic laws of the
computational monads.
\begin{theorem}[Soundness of generic monadic rules]
\label{thm:simple-generic-soundness}
 The relational rules in \autoref{fig:smonRules} are sound with respect to any
 lax relational effect
 observation $\theta_{\relmarker{}}$, that is 
 \[\srelsyn{c_1}{c_2}{w} \qquad{\Rightarrow{}}\qquad \forall\theta_{\relmarker{}},~\srelsem{c_1}{c_2}{w}{\theta_{\relmarker{}}}.\]
\end{theorem}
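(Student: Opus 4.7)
The plan is to proceed by induction on the derivation of $\srelsyn{c_1}{c_2}{w}$, treating one case per rule in \autoref{fig:smonRules}. For each case, unfold the semantic judgment definition (\ref{eq:simple-semantic-judgement-def}) to reduce the goal to an inequality $\theta_{\relmarker{}}(c_1,c_2) \leq^{\Wrel{}} w$ in the preorder of $\Wrel{}$, then dispatch it using the two laws in \autoref{def:sRelEffObs} together with the preorder structure and the monotonicity of $\bind^{\Wrel{}}$.

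For the Ret case, the goal after unfolding is exactly $\theta_{\relmarker{}}(\ret^{\Mone}\,a_1, \ret^{\Mtwo}\,a_2) \leq^{\Wrel{}} \ret^{\Wrel{}}(a_1,a_2)$, which is the first (return) law of a simple lax relational effect observation, so this case is immediate. For the Weaken case, the induction hypothesis gives $\theta_{\relmarker{}}(c_1,c_2) \leq^{\Wrel{}} w$, and the side condition provides $w \leq^{\Wrel{}} w'$; transitivity of the preorder $\leq^{\Wrel{}}$ yields the goal $\theta_{\relmarker{}}(c_1,c_2) \leq^{\Wrel{}} w'$.

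The main case is Bind. From the induction hypotheses we obtain $\theta_{\relmarker{}}(m_1,m_2) \leq^{\Wrel{}} w^m$ and, pointwise for all $a_1,a_2$, $\theta_{\relmarker{}}(f_1\,a_1, f_2\,a_2) \leq^{\Wrel{}} w^f\,(a_1,a_2)$. Chaining two inequalities, first apply the bind law of a lax relational effect observation to get
\[
\theta_{\relmarker{}}(\bind^{\Mone}\,m_1\,f_1,\bind^{\Mtwo}\,m_2\,f_2) \leq^{\Wrel{}} \bind^{\Wrel{}}(\theta_{\relmarker{}}(m_1,m_2))(\theta_{\relmarker{}} \circ (f_1,f_2)),
\]
then use monotonicity of $\bind^{\Wrel{}}$ in both arguments (guaranteed by \autoref{def:sRelSpecMon}) to replace $\theta_{\relmarker{}}(m_1,m_2)$ by $w^m$ in the first argument and the pointwise family $\theta_{\relmarker{}} \circ (f_1,f_2)$ by $w^f$ in the second, obtaining $\bind^{\Wrel{}}\,w^m\,w^f$. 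Transitivity of $\leq^{\Wrel{}}$ closes the case.

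The only subtlety I anticipate is keeping the pointwise inequality for the continuations in a form compatible with the second-argument monotonicity of $\bind^{\Wrel{}}$; this is purely a bookkeeping concern, since monotonicity in the second argument is stated pointwise in \autoref{def:sRelSpecMon}. Notably, no assumption beyond lax effect observation is needed; if $\theta_{\relmarker{}}$ happens to be strict, the first inequality in the Bind case becomes an equality, but the argument goes through unchanged.
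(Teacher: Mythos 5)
Your proof is correct and follows essentially the same route as the paper's: each rule is discharged by the corresponding law of a lax relational effect observation, with transitivity of the preorder handling \textsc{Weaken}. You are in fact slightly more explicit than the paper in the \textsc{Bind} case, where the paper's one-line argument elides the appeal to monotonicity of $\bind^{\Wrel{}}$ and transitivity needed to pass from $\bind^{\Wrel{}}\,(\theta_{\relmarker{}}(m_1,m_2))\,(\theta_{\relmarker{}}\circ(f_1,f_2))$ to $\bind^{\Wrel{}}\,w^m\,w^f$.
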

\begin{proof}
  For rules \textsc{Ret} and \textsc{Bind}, we need to prove that
  $\theta_{\relmarker{}}(\ret^\Mone{}\,a_1,\ret^\Mtwo{}a_2) \leq \ret^\W{}(a_1,a_2)$ and
  $\theta_{\relmarker{}}\,(\bind^\Mone\,m_1\,f_1, \bind^\Mtwo\,m_2\,f_2) \leq
  \bind^\W\,(\theta_{\relmarker{}}\,(m_1,m_2))\,(\theta_{\relmarker{}}\circ
  (f_1,f_2))$, which are exactly the laws of a lax relational effect observation.
  For \textsc{Weaken}, we need to show that $\theta_{\relmarker{}}(c_1,c_2)
  \leq w'$ under the assumptions that $\theta_{\relmarker{}}(c_1,c_2)\leq w$ and $w \leq
  w'$ so we conclude by transitivity.
\end{proof}

\subsection{Effect-Specific Rules}
\label{sec:effect-specific-rules}
The generic monadic rules together with the rules coming from the
ambient type theory allow to derive relational judgments for the main
structure of the programs. However, these rules are not enough to
handle full programs written in the computational monads $\Mone{}$ and
$\Mtwo{}$, as we also need rules to reason about the specific effectful
operations that these monads provide.
The soundness of effect-specific relational rules is established with
respect to a {\em particular} choice of relational
effect observation $\theta_{\relmarker{}} : \Mone{},\Mtwo \to \Wrel{}$.
Consequently, we make essential use of $\theta_{\relmarker{}}$ to devise
effect-specific rules.
The recipe was already illustrated for state in the introduction: first pick a
pair of effectful {\em algebraic} operations (or $\ret$ for the asynchronous rules),
unfold their definition, and then compute a sound-by-design relational
specification for this pair by simply applying $\theta_{\relmarker{}}$.
%
% As effect-specific relational rules are expected to be sound with
% respect to a particular choice of an effect observation $\theta
% : \Mone{},\Mtwo \to \Wrel{}$, we make essential use of $\theta$ to
% calculate the specifications appearing in these rules.
% The procedure works by picking the form of the effects we are interested in relating
% in the rule, and then applying the effect observation to the
% pair. The resulting specification will give us the specification we
% want for this pair of computations.
%
By following this recipe, we are decoupling the problem of choosing the
computations on which these rules operate (\EG~synchronous vs.  asynchronous
rules to which we return in \autoref{sec:product-programs}) from the problem of
choosing sensible specifications, which is captured in the choice of
$\theta_{\relmarker{}}$.

%
% \ch{In general, we spoke way too little about this choice so far. Can't expect
%   everyone to have read \cite{dm4all}, which spends half a paper on that, since
%   it's really important.}\km{But for this paper, it's only at this point that it
% gets important, no?}
% \ch{I've added explanations starting from the intro}

% \km{How do we come up with these rules? My intuition is that the ``right''
%   rules are coming from the basic operations of a product program (relative)
%   monad, but I didn't manage to make that intuition precise for now...}
% \ch{Need to add at least some {\em intuition} on how to come up with these
%   rules, otherwise all we can say is magically invent rules and prove them
%   wrt the semantics given by our framework. In any case, we should set
%   expectations straight.}

% \ch{Many of our rules directly mirror the relational effect observation.
%   Can we say that's how we found them?}

%% The choice of these rules reflect a particular understanding of the
%% intertwining of the two computational monad $\Mone{}$ and $\Mtwo{}$ inside the
%% relational specification monad $\Wrel{}$. We will return to this problem in
%% \autoref{sec:product-programs}, explaining how computational intuition on
%% product programs can help building such rules. In this section however we just
%% take the rules out of thin air and show their soundness with respect to the
%% chosen simple relational effect observation $\theta$.

\paragraph{Nondeterministic computations}
The two relational effect observations $\theta_{\relmarker{}}^\forall$ and
$\theta_{\relmarker{}}^\exists$ provide different relational rules for the operation
$\choice{}$. As an example of how the recipe works, suppose that we
want to come up with an asymmetric rule for nondeterministic
computations that works on the left program, and which is sound with
respect to $\theta_{\relmarker{}}^\forall$. This means that the conclusion will be of
the form $\srelsyn{\choice{}}{\ret{}\,a_2}{w_{\choice^l}}$ for some $w_{\choice^l}
: \PPrel{\Pure}$. To obtain $w_{\choice^l}$, we apply the effect observation to the
two computations involved in the rule
\[
w_{\choice^l} = \theta_{\relmarker{}}^\forall\left(\choice{},\ret{}\,a_2\right)
= \abs{\varphi}{\forall b \in \{ \true, \false \}, a \in \{ a_2 \}.~ \varphi(b, a)}
= \abs{\varphi}{\varphi(\true, a_2) \land \varphi(\false, a_2)},
\]
obtaining a rule that is trivially sound:
\begin{mathpar}
  \inferrule*[left=DemonicPickLeft]{ }{\srelsyn{\choice{}}{\ret{}\,a_2}{\abs{\varphi}{\varphi(\true,a_2) \wedge \varphi(\false, a_2)}}}.
\end{mathpar}
Similarly for \fail, we compute $w_{\fail^l}$ for $\srelsyn{\fail{}}{\ret{}\,a_2}{w_{\fail^l}}$ as follows:
\[
w_{\fail^l} = \theta_{\relmarker{}}^\forall\left(\fail{},\ret{}\,a_2\right) = \abs{\varphi}{\forall b \in \{ \true, \false \}, a \in \emptyset.~ \varphi(b, a)} = \abs{\varphi}{\top},
\]
\begin{mathpar}
  \inferrule*[left=DemonicFailLeft]{ }{\srelsyn{\fail{}}{\ret{}\,a_2}{\abs{\varphi}{\top}}}.
  %% \and
  %% \inferrule*[left=DemonicRight]{ }{\srelsyn{\ret{}\,a_1}{\choice{}}{\abs{\varphi}{\varphi(a_1, \true) \wedge \varphi(a_1, \false)}}}
  % \infer{ }{\srelsyn{\choice{}}{\choice{}}{\abs{\varphi}{\forall b_1\,b_2:\bool, \varphi(b_1,b_2)}}}
  % \and
  % \infer{ }{\srelsyn{\choice{}}{\choice{}}{\abs{\varphi}{\exists b_1\,b_2:\bool, \varphi(b_1,b_2)}}}
\end{mathpar}
Following the same approach, we can come up with an asymmetric rule
on the right as well as a symmetric one. For concreteness, we show
the symmetric rule for the effect observation $\theta_{\relmarker{}}^\exists$:
\begin{mathpar}
  \inferrule*[left=Angelic]{ }{\srelsyn{\choice{}}{\choice{}}{\abs{\varphi}{\varphi(\true,\true) \vee \varphi(\true, \false) \vee \varphi(\false, \true) \vee \varphi(\false, \false)}}}.
\end{mathpar}
\ch{TODO: It would also be good to give the rules for $\theta^{\forall\exists}_{\relmarker{}}$}
Taking inspiration from the sample rule in~\cite{BartheEGHSS15}, we introduce a
rule for the refinement effect observation
$\theta^{\forall\exists}_{\relmarker{}}$ using an auxilliary function to select the
elements in correspondence:
% In a similar spirit to the sample rule For $\theta^{\forall\exists}_{\relmarker{}}$, we can come up with a rule
% taking a choice function $h : \fin\,n \to \fin\,m$ as premise garbage garbage
%  garbage
\begin{mathpar}
  \inferrule*[left=Refinement]{ h : \fin\,n \to \fin\,
    m}{\srelsyn{\choose{}\,n}{\choose{}\,m}{\abs{\varphi}{\forall k. \varphi(k, h\,k)}}}.
\end{mathpar}

%% \begin{mathpar}
%%   \inferrule*[left=Demonic]{\srelsyn{c^i_1}{c^j_2}{w^{i,j}}}{\srelsyn{c^1_1 \wedge c^2_1}{c^1_2 \wedge c^2_2}
%%     {\abs{\varphi}{\wedge_{i,j}w^{i,j}\,\varphi}}
%%     %{w^{1,1}\wedge w^{1,2} \wedge w^{2,1} \wedge w^{2,2}}
%%   }
%%   \and
%%   \inferrule*[left=Angelic]{\srelsyn{c^i_1}{c^j_2}{w}}{\srelsyn{c^1_1 \wedge c^2_1}{c^1_2 \wedge c^2_2}{w}}
%% \end{mathpar}

% \ch{The rules called demonic and angelic are impossible to understand: strange
%   notations ($c^1_1 \wedge c^2_1$?) and superscripts without any sense.
%   Explanations needed, but maybe also better notations and probably bug fixes
%   too.}\er{$c_1 \wedge c_2$ has not been explained at this point, has it? Maybe
%   we can stick only to the ones with $\choice{}$? temporarily commented}
% \km{Yes, they were stale experiments}

% \paragraph{Stateful computations} The rules for $\stget$ and $\stput$~(\ref{eq:get-simple-rules},
% \ref{eq:put-simple-rules} on page \pageref{eq:get-simple-rules}) were already discussed in the
% introduction. The specification in these rules are obtained by
% following the recipe, applying the effect observation $\theta^\St$
% (see~\autoref{sec:intro}) to the commands in the conclusion of the
% rule.
% \ch{There's not more to say here, right?}

% \km{depends on how much is in the intro}\ch{Quite a bit I guess}

\paragraph{Exceptions using $\Wrel{\Err}$}
Taking $\Mone{}$ and $\Mtwo{}$ to be exception monads on exception sets $E_1$
and $E_2$, and the relational specification monad $\Wrel{\Err}$
(\autoref{eq:wrelerr} on page \pageref{eq:wrelerr}), we have an
effect observation interpreting any thrown exception
as a unique erroneous termination situation, that is
\begin{lstlisting}
let $\theta_{\relmarker}^{\Err}$ ((c_1, c_2) : Exc A_1 \times Exc A_2) : WrelErr (A_1,A_2) =
 fun $\varphi$.  match c_1, c_2 with | Inl a_1, Inl a_2 -> $\varphi$ (Inl (a_1, a_2)) | _, _ -> $\varphi$ (Inr ()) 
\end{lstlisting}
Under this interpretation we can show the soundness of the following rules:

% \ch{Too many different notations below for inl and inr}\km{somewhat fixed}

\newsavebox\catchspecbox
\savebox\catchspecbox{\lstinline[mathescape]!fun $\varphi$. $w$ (fun a_0. match a_0 with Inl a -> $\varphi\hspace{2pt}$(Inl a) | Inr () -> $w^{\maltese}\,\varphi$)!}
\begin{mathpar}
  \inferrule*[left=ThrowL]{ }{\srelsyn{\throw\,e_1}{\ret\,a_2}{\abs{\varphi}{\varphi(\inr\,())}}}
  \and
  \inferrule*[left=ThrowR]{ }{\srelsyn{\ret\,a_1}{\throw\,e_2}{\abs{\varphi}{\varphi(\inr\,())}}}
  \and
  \inferrule*[left=Catch]{
    \srelsyn{c_1}{c_2}{w}\\
    \forall e_1\,e_2\srelsyn{c^{\maltese}_1\,e_1}{c^{\maltese}_2\,e_2}{w^\maltese}\\
    \forall e_1\,a_2 \srelsyn{c^{\maltese}_1\,e_1}{\ret\,a_2}{w^\maltese}\\
    \forall a_1\,e_2 \srelsyn{\ret\,a_1}{c^{\maltese}_2\,e_2}{w^\maltese}
  }{\srelsyn{\catch\,c_1\,c^{\maltese}_1}{\catch\,c_2\,c^{\maltese}_2}
    {\usebox{\catchspecbox}}}
%    {w_\catch\,w\,w^{\maltese}}}
\end{mathpar}
The rules \textsc{ThrowL} and \textsc{ThrowR} 
can be derived using the recipe above,
but the exceptions have to be
conflated to the same exceptional result $\inr\,()$, a situation that is forced
by the choice of relational effect observation and a weak specification monad.
As a consequence, the \textsc{Catch} rule considers one successful
case and three exceptional cases. The specification in the conclusion
takes a postcondition $\varphi$ and computes a precondition by running
the transformer $w$ on a new postcondition that depends on the result
of $c_1$ and $c_2$. If both computations were successful, then
this new postcondition is simply the original $\varphi$. If an
exception was thrown (in any of the sides or both), then the new
postcondition is computed using the transformer $w^{\maltese}$,
which specifies the three exceptional cases.
The specification for \textsc{Catch} does not follow mechanically from
$\theta^\Err_{\relmarker{}}$ using our recipe, since it is a handler and not an
algebraic operation.
% \ch{Yet, is there any intuition on how you came up with this?
%   Or at least explaining intuitively what the 3 cases are?}
\ch{In our last discussion it seemed unclear if this is the best rule.
  In the complex setting we have a different rule?}

\iflater
\ch{Can we prove once and for all that our recipe always works for algebraic
  operations?}
\fi

\paragraph{Input-output computations}
Let $\Mone{}$ and $\Mtwo{}$ be the input-output monads
on $(I_1, O_1)$ and $(I_2, O_2)$ respectively (\autoref{sec:monads}).
We want an effect observation on the relational
specification monad $\Wrel{\IO}$ (\autoref{eq:wrelio} on
page \pageref{eq:wrelio}):
\[
\theta^{\IO}_{\relmarker{}} : \Mone~A_1 \times \Mtwo~A_2 \to \Wrel{\IO}(A_1,A_2)
\]
Notice that $\Wrel{\IO}(A_1, A_2) = \W^{\IO}(A_1 \times A_2)$, where $\W^{\IO}$ is a unary specification monad defined by
\[
\W^{\IO}(A) = (A \to \mathrm{list}(\mathcal{E}_1) \times \mathrm{list}(\mathcal{E}_2) \to \prop) \to \mathrm{list}(\mathcal{E}_1) \times \mathrm{list}(\mathcal{E}_2) \to \prop
\]
By applying \autoref{thm:effect-commutation} to unary effect
observations $\theta^{\IO}_1 : \Mone{} \to \W^{\IO}$ and
$\theta^{\IO}_2 : \Mtwo{} \to \W^{\IO}$, we obtain the desired relational effect
observation $\theta^{\IO}_{\relmarker{}}$. The unary effect
observation $\theta^{\IO}_1$ is defined by recursion on the
computation trees ($\theta^{\IO}_2$ is analogous):
\begin{lstlisting}[mathescape]
let rec $\theta_1^{\IO}$ (c : $\Mone{}$ A) : $\W^{\IO}$ A = match c with
    | Ret x -> $\ret^{\W^{\IO}}$ x
    | Input k -> $\bind^{\W^{\IO}}$ (fun $\varphi$ ($h_1$, $h_2$) . forall i, $\varphi$ i (Inl i :: $h_1$, $h_2$)) (fun i . $\theta_1^{\IO}$ (k i))
    | Output o k -> $\bind^{\W^{\IO}}$ (fun $\varphi$ ($h_1$, $h_2$) . $\varphi$ () (Inr o :: $h_1$, $h_2$)) (fun () . $\theta_1^{\IO}$ k)
\end{lstlisting}
The relational rules we get by applying
our recipe to $\ioread$ and $\iowrite$ are the following:
\begin{mathpar}
  \inferrule*[left=InputL]{ }{\srelsyn{\ioread}{\ret\,a_2}{\abs{\varphi, (h_1, h_2)}{\forall i_1 \in I_1, \varphi~(i_1, a_2)~(\inl~i_1 :: h_1, h_2)}}} \and
  \inferrule*[left=OutputL]{ }{\srelsyn{\iowrite~o_1}{\ret\,a_2}{\abs{\varphi, (h_1, h_2)}{\varphi~((), a_2)~(\inr~o :: h_1, h_2)}}}
\end{mathpar}

\clearpage
\paragraph{Unbounded iteration}
Specifications for imperative programs as modeled by the $\Imp{}$ monad from
\autoref{sec:monads} come in two flavors.
This is reflected here by two unary effect observations:
% As well studied in the literature for
% imperative programs as modelled by the $\Imp{}$ monad
% \am{weird sentence imho}, unary effect observations
% for unbounded iteration come in two flavors:
a first one for total correctness $\theta^\Tot$ {\em ensuring} the termination of a
program; and a second one for partial correctness $\theta^\Part$
{\em assuming} the termination of a program.
We explain how this situation extends to the relational setting, focusing on
partial correctness, but the same methodology applies to total correctness.
Concretely, we define a simple strict relational effect observation 
\begin{align*}
\theta^\Part_{\relmarker{}} : \Imp~A_1 \times \Imp~A_2 \to \Wrel{\St}(A_1,A_2)
\end{align*}
by applying \autoref{thm:effect-commutation} to a unary  effect observation
$\theta^\Part$ defined using the domain structure with which $\W^{\St}$ 
is naturally endowed. From basic domain theoretic results,\km{reference necessary}\ch{+1}
$\W^{\St}$ can be endowed with a least fixpoint combinator
$\texttt{fix} : (\W^{\St}\,\bool \to \W^{\St}\,\bool) \to \W^{\St}\,\bool$, used
to define
\begin{lstlisting}[mathescape]
let $\theta^\Part$ (c : Imp A) : $\W^{\St}$ A = match c with
    | Ret x -> $\ret^{\W^{\St}}$ x $\hspace{1.3cm}$| Get k -> fun $\varphi$ s . $\theta^\Part$ (k s) $\varphi$ s $\hspace{1.3cm}$| Put s' k -> fun $\varphi$ s . $\theta^\Part$ k $\varphi$ s'
    | DoWhile body k -> 
      let loop (w : $\W^{\St}$ bool) = $\bind^{\W^{\St}}$ ($\theta^\Part$ body) (fun b. if b then w else $\ret^{\W^{\St}}$ $\false$) in 
      $\bind^{\W^{\St}}$ (fix loop) (fun undersc \,. $\theta^\Part$ k)
\end{lstlisting}
  %% Fixpoint θunL (A : Type) (c: Imp A) {struct c} : Wun A :=
  %%     match c with
  %%     | ImpRet a => @ret Wun _ a
  %%     | ImpGet k => WungetL (fun s => θunL (k s))
  %%     | ImpSet s k => WunsetL s (θunL k)
  %%     | ImpDoWhile body k =>
  %%       bind (tarksi_fix (loop (θunL body))) (fun _ => (θunL k))
  %%     end.
  %%   Program Definition WungetR {A : Type} (k : S -> Wun A) : Wun A :=
  %%   ⦑fun post s0 => Spr1 (k (nsnd s0)) post s0⦒.
How does $\theta^\Part$ work? In the first three cases, it trivially
returns in the \lstinline!Ret! branch,
evaluates a continuation to the current state in the \lstinline!Get!
branch, and evaluates a continuation with an updated state in the \lstinline!Put!
branch.
The interesting part is in the \lstinline!DoWhile! branch, where the
\texttt{body} is repeatedly run using \texttt{fix} as long as the
guard returns $\true$.
%.
We proved by induction on $c$ that $\theta^\Part$ is a monad morphism. 
\autoref{thm:effect-commutation} asks for
two monad morphisms whose images commute. We provided those morphisms
by tweaking a bit the definition of $\theta^\Part$: we embed in\ch{embed what into what?}
a variation of $\W^\St$ that accounts for a pair of states, the first
$\theta^\Part_1: \Imp \to \W^{\St}$ uses the left state and
the second $\theta^\Part_2 : \Imp \to \W^{\St}$ uses the
right state. Applying \autoref{thm:effect-commutation}, we obtain the
definition of $\theta^\Part_{\relmarker{}}$.
%Finally, by tweaking a bit the definition, we extend $\theta^\Part$ to

This simple relational effect observation
$\theta^\Part_{\relmarker{}}$ captures partial correctness in the
following sense: intuitively,
$\srelsem{\{\>\psi\>\}~c_1}{c_2}{\varphi}{\theta^\Part_{\relmarker{}}}$
implies that if $\psi(s_1,s_2)$ holds and the \emph{two} programs
$c_1$ and $c_2$ terminate on these initial states $s_1,s_2$, then the
postcondition holds of the final states. This judgment using
pre-/postconditions is expressed in terms of the usual judgment by
applying the translation into stateful backward predicate transformers
(see \autoref{sec:simple-specs}). On top of this
$\theta^\Part_{\relmarker{}}$, we devise a rule for \dowhile{} using
an invariant $\invariant_{b_1,b_2}:S \times S \to \prop$:
\newcommand\bbody{\texttt{body}}
\begin{align}
  \label{eq:simple-rules-dowhile}
  \infer{\srelsyn{\{\>\invariant_{\true,\true}\>\}~\bbody_1}{\bbody_2}{\abs{(\_,b_1,s_1)\,(\_,b_2,s_2)}{b_1 = b_2 \wedge \invariant_{b_1,b_2}(s_1,s_2)}}}
  {\srelsyn{\{\>\invariant_{\true,\true}\>\}~\dowhile\,\bbody_1}{\dowhile\,\bbody_2}{\abs{(\_,(),s_1)\,(\_,(),s_2)}{\invariant_{\false{},\false{}}(s_1,s_2)}}}
  % \infer{\srelsyn{body_1}{body_2}{\phi_{\true,\true}, \abs{(\_,b_1,s_1)\,(\_,b_2,s_2)}{b_1 = b_2 \wedge \phi_{b_1,b_2}(s_1,s_2)}}}
  % {\srelsyn{\texttt{do\_while}\,body_1}{\texttt{do\_while}\,body_2}{\phi_{\true,\true}, \abs{(\_,(),s_1)\,(\_,(),s_2)}{\phi_{\false{},\false{}}(s_1,s_2)}}}
\end{align}
This rule is synchronous in the sense that the bodies always yield the same boolean
values. Consequently the two loops run the same number of steps. The
postcondition ensures that if the loop terminates, then the invariant
$\invariant_{\false,\false}$ holds.

\ch{Is a symmetric while rule enough? Does it match what's happening in RHT?}

\paragraph{Probabilistic computations} For discrete probabilistic computations
modeled by the monad $\Prob$, a first idea would be to use a unary effect
observation and appeal once again to \autoref{thm:effect-commutation}. This
simple strict relational effect observation however does not validate a rule
correlating two $\flip$ operations with an arbitrary coupling between the
Bernoulli distributions on each side~\cite{BartheGB09}. A posteriori this is not so surprising,
since the commutation hypothesis of \autoref{thm:effect-commutation} implies
that the effects on each side are observed in an independent fashion.

Hence we rely on a more sophisticated \emph{lax} relational effect observation
$\theta^\Prob : \Prob\,A_1 \times \Prob\,A_2 \to \W^\Prob(A_1,A_2)$ defined as
\[\theta^\Prob(c_1,c_2) = \abs{\varphi}{\inf_{d \sim c_1,c_2}
    \sum_{a_1:A_1,a_2:A_2} d(a_1,a_2) \cdot \varphi(a_1,a_2)}\] where we write
$d \sim c_1,c_2$ to specify a \emph{coupling} $d$ of the two distributions $c_1$
and $c_2$, \IE a distribution on $A_1 \times A_2$ such that the marginals
satisfy $\Prob(\pi_1)d = c_1$ and $\Prob(\pi_2)d = c_2$. Since we are taking the infimum
over all such couplings, the resulting relational effect observation is
necessarily lax and the conditions of linearity and continuity imposed on
$\W^\Prob$ are needed to show the monadic inequalities.\ch{you mean the lax monad morphism laws?}
Using this relational effect observation we
straightforwardly validate the following rule for correlating two
sampling operations as in ($\times$)pRHL~\cite{BartheGB09,BartheGHS17}.
\begin{align*}
  \infer{ d \sim \mathcal{B}_p, \mathcal{B}_q}{\srelsyn{\flip\,p}{\flip\,q}{\abs{\varphi}{\textstyle \sum_{b_1,b_2 : \bool} d(b_1,b_2)\cdot \varphi(b_1,b_2)}}}
\end{align*}

\subsection{Example: Noninterference}

As a specific example of the simplified framework, we explore
\emph{noninterference}, a popular relational property for information flow
control systems \cite{NanevskiBG13, clarkson10hyp, SabelfeldM03,
  AntonopoulosGHKTW17, BartheEGGKM19, BanerjeeNN16}.
The noninterference property
 dictates that the public outputs of a program cannot
depend on its private inputs. Formally, and in its most basic form, we
can capture this property by classifying the store's locations by two
security levels: \emph{high} for private information
and \emph{low} for public information.
By $s =_L s'$ we express that the
two stores $s$ and $s'$ are equal for all low locations.
% , and anagously for high locations using $s =_H s'$ -- never heard if H equivalence
We use $s \stackrel{p}{\leadsto}
s'$ to denote that the execution of a program $p$ on a store $s$ ends
in store $s'$.
The noninterference property is then written as
\[
\forall s_i, s_i', s_o, s_o'.\quad s_i =_L s_i' \wedge s_i \stackrel{p}{\leadsto} s_o \wedge s_i' \stackrel{p}{\leadsto} s_o' \implies s_o =_L s_o'
\]
A typical solution for enforcing noninterference is to define a static
type system which is capable of rejecting obviously interferent
programs \cite{SabelfeldM03}.
For example, such a type system can rule out interferent programs such as
\begin{center}
\ls$if h > 0 then l := 1 else l := 0$
\end{center}
% \begin{lstlisting}
% if h > 0 then l := 1 else l := 0
% \end{lstlisting}
where \verb#h# is a high reference and \verb#l# is a low one. However,
the static nature of these type systems restricts the family of programs
that we can show noninterferent. A characteristic example of this
limitation is the following noninterferent program:
\begin{center}
  \ls$if h = 1 then l := h else l := 1$
\end{center}
% \begin{lstlisting}
% if h = 1 then l := h else l := 1
% \end{lstlisting}
Relational program logics such as \citepos{benton04relational} \RHL{} provide a
less restrictive framework for proving non-interference, as the proof can rely
on information accumulated during the derivation steps.
We follow the approach of relational program logics and show how
noninterference proofs can be done in our framework.
\ch{what's this example about? this should be said upfront,
  beyond ``it's about noninterference'', which is too vague and doesn't
  really relate to our framework}%
We restrict ourselves to programs with conditionals
but without while-loops. In \autoref{sec:RHL}, we
will show a complete embedding of \RHL{},
%  in the extended version of
% the framework,\ch{\bf Seems wrong:
%   \autoref{sec:RHL} says it's still the simple setting!}
including iteration.
For now though, we assume that we are working
with a memory consisting of locations $\mathcal{L}
= \{ \kwd{l}, \kwd{h} \}$ storing natural numbers, and consider the
data in $\kwd{h}$ to be private and the data in $\kwd{l}$ to be
public. As discussed in \autoref{sec:monads}, these stateful computations can be
captured using the monad $\St_S$ where $S
= \mathcal{L} \to \mathbb{N}$.
% Following the monadic
% discipline, in order to manipulate the data in the store, we need to
% explicitly call the operations that serve for this purpose, which in
% this case they take the following form:
% \[
% \stget : \mathcal{L} \to \St(\mathbb{N}) \qquad \stput : \mathcal{L} \times \mathbb{N} \to \St(\one)
% \]
% \km{I have the impression that part of the previous paragraph should go to
% 2.1}\er{moved there}
The program above can be represented using this monad as follows:\ch{Fonts
  for l and h not consistent}
\begin{center}
  \ls$c = let x = get h in if x = 1 then put l x else put l 1 : St one$
\end{center}
% \begin{lstlisting}
% c = let x = get h in if x = 1 then put l x else put l 1 : St one
% \end{lstlisting}
% \[
% c := \bind~{\stget(\kwd{H})}~{\left(\abs{h}{\Ifte{h = \kwd{1}}{\stput(\kwd{L},h)}{\stput(\kwd{L},\kwd{1})}}\right)} : \St\,\one
% \]
%
% To express the noninterference property in our framework, we need two
% computational monads and a specification monad.
We instantiate our framework with the computational monad $\St_S$
on both sides, and use the simple relational specification monad
$\Wrel{\St}$ from~\autoref{sec:intro}.
The judgment we establish to prove noninterference is
\begin{mathpar}
%% \bind^{\St_S}~{\stget^{\St_{S}}(\kwd{H})}~{\left(\abs{h}{\Ifte{h = \kwd{1}}{\stput^{\St_{S}}(\kwd{L},h)}{\stput^{\St_{S}}(\kwd{L},\kwd{1})}}\right)}
%% \bind^{\St_{S}}~{\stget^{\St_{S}}(\kwd{H})}~{\left(\abs{h}{\Ifte{h = \kwd{1}}{\stput^{\St_{S}}(\kwd{L},h)}{\stput^{\St_{S}}(\kwd{L},\kwd{1})}}\right)}
\srelsyn{c}
        {c}
        {\abs{\varphi~(s^i_1,s^i_2)}{s^i_1~\kwd{l} = s^i_2~\kwd{l} \wedge \forall~s^f_1~s^f_2 . s^f_1~\kwd{l} = s^f_2~\kwd{l} \implies \varphi~(((),s^f_1),((),s^f_2)}}{}
\end{mathpar}
This weakest precondition transformer comes from taking the pre-/postcondition pair
\begin{align*}
\abs{(s_1,s_2)&}{s_1\,\kwd{l} = s_2\,\kwd{l}} : S {\times} S \to \prop &
\abs{(s^i_1,(),s^f_1)~(s^i_2,(),s^f_2)&}{s^f_1\,\kwd{l} = s^f_2\,\kwd{l}} : (S {\times} \one {\times} S) {\times} (S {\times} \one {\times} S) \to \prop
% pre &= \abs{(s_1,s_2)}{s_1~\kwd{l} = s_2~\kwd{l}} : S {\times} S \to \prop\quad \\
% post &= \abs{(s^i_1,(),s^f_1)~(s^i_2,(),s^f_2)}{s^f_1~\kwd{l} = s^f_2~\kwd{l}} : (S {\times} \one {\times} S) {\times} (S {\times} \one {\times} S) \to \prop
\end{align*}
and translating it to its predicate transformer form following the
description in~\autoref{sec:simple-specs}. The proof derivation
consists of applying the $\textsc{Bind}$ rule after a weakening, and
later applying the asymmetric conditional rules
(see page~\pageref{eq:simple-rule-if-left})
for covering the four cases.

A similar example of noninterference can be done by changing the state
monad $\St$ for the input-output monad $\IO$ described
in~\autoref{sec:monads}. In this case, input and output channels are
classified as high or low, and the noninterference policy is spelled
out in terms of these by using the specification monad $\Wrel{\IO}$ or
one of its variants.

\ifsooner
\ch{The next paragraph is pure speculation. It would be good to have some solid
  examples here.}
\fi

Finally, an interesting characteristic of our framework is that we can easily adapt
the setting to handle more than one effect at the same time. %%  the language\ch{?} to new
%% effects. 
For example, if we are interested in modeling both IO and state with
noninterference, then it is enough to apply the state monad transformer to the
IO monad, and replace the relational specification monad
$\Wrel{\St}$ by a monad which takes into account the input-output in
the specifications as well.

% \ch{This subsection did a huge introduction for something very underwhelming.
%   In absence of a better fix, I would at least propose to shrink the
%   introduction, by \EG moving RHL details to 4.1, to not raise so big
%   expectations. I think we can explain this example in our framework without
%   first explaining RHL.}

\section{Generic Framework}
\label{sec:generic}

While the simple framework works well for a variety of effects, it falls short of
providing a convincing treatment of control effects such as exceptions.
% or non-termination.\ch{It's unclear in what sense non-termination would
%   be a control effect. Also, does this mean that our embedding of RHL
%   in the simple setting is also unconvincing?}
This limitation is due to the fact that simple relational specification monads
merge tightly together the specification of two independent computations.
We now explain how to overcome these limitations starting with the example of
exceptions, and how it leads to working inside a relational dependent type
theory.
Informed by the generic constructions on relative monads underlying
the simple setting, we derive a notion of relational specification
monad and relational effect observation in this enriched setting.
% \er{are we using the complex word? ``enriched''?}\km{done}
%
These relational specification monads require an important amount of
operations so we introduce relational specification monad transformers
for state and exceptions, simplifying the task of building complex
relational specification monad from simpler ones. As a consequence,
we can easily combine exceptions with any of the effects
already handled by the simplified framework
(\EG state, nondeterminism, IO, and probabilities).

\subsection{Exceptional Control Flow in Relational Reasoning}

We explained in \autoref{sec:effect-specific-rules} how to prove relational
properties of programs raising exceptions, as long as we give up on the
knowledge of which program raised an exception at the level of
relational specifications. This restriction prevents us from even stating
natural specifications such as simulations: ``if the left program raises,
so does the right one''.

In order to go beyond this unsatisfying state of affairs, we consider a type of
relational specifications allowing to write specifications
consisting of predicate transformers mapping a postcondition on pairs of
either a value or an exceptional final state to a proposition:
\[\Wrel{\Exc}(A_1,A_2) = ((A_1+E_1)\times(A_2+E_2) \to \prop) \to \prop.\]
For instance,
the specification of simulation above can be stated as
\begin{align*}
  \abs{\varphi}{\forall ae_1 ae_2. (\inr{}?\>ae_1 \Rightarrow \inr{}?\>ae_2) \Rightarrow \varphi(ae_1,ae_2)}:
  \Wrel{\Exc}(A_1,A_2),
\end{align*}
where \lstinline!Inr? ae = match ae with Inr _ -> True | _ -> False!.

As explained in the introduction, this type does not admit a monadic
operation $\bind\,w^m\,w^f$ using only a continuation of type $w^f :
A_1\times A_2 \to \Wrel{\Exc}(B_1,B_2)$ due to the fact that $w^m$
could result in an intermediate pair consisting of a normal value on
one side and an exception on the other side. Our solution is to
provide to $\bind^\Wrel{\Exc}$ the missing information it needs
in such cases. To that purpose, we use the unary
specification monads $\Wone{\Exc}A_1 = (A_1+E_1 {\to} \prop) {\to}
\prop$ and $\Wtwo{\Exc}A_2 = (A_2+E_2{\to}\prop){\to}\prop$ to provide
independent specifications of each program. With the addition of
these, we can write a combinator that relies on the unary specifications
when the results of the first computations differ (one raise an
exception and the other returns).
\begin{lstlisting}[mathescape]
val bindWrelExc : WrelExc (A_1,A_2) -> (A_1 -> WExc1 B_1) -> (A_2 -> WExc2 B_2) -> 
                                   (A_1 \times A_2 -> WrelExc (B_1,B_2)) -> WrelExc (B_1,B_2)
let bindWrelExc \,wm (f_1 : A_1 -> ((B_1 + E_1) -> Prop) -> Prop) (f_2 : A_2 -> ((B_2 + E_2) -> Prop) -> Prop) f =
  fun ($\varphi$  : (B_1 + E_1) -> Prop).
    wm (fun ae : (A_1 + E_1) \times\, (A_2 + E_2).
        match ae with
        | Inl a1, Inl a2 -> f a1 a2 $\varphi$ $\hspace{1.95cm}$| Inl a_1, Inr e_2 -> f_1 a_1 (fun be -> $\varphi$ be (Inr e_2))
        | Inr e1, Inr e2 -> $\varphi$ (Inr e1, Inr e2) $\hspace{0.9cm}$ | Inr e_1, Inl a_2 -> f_2 a_2 (fun be -> $\varphi$ (Inr e_1) be))  
\end{lstlisting}

\km{Should provide the example of a fast but incomplete check of a property
  followed by a complete one, raising when the check fails, and trying
  to prove NI}\er{how would one prove at this point without rules for
  this bind?}\km{good point...}

\subsection{A Problem of Context}
In order to keep track of these unary specifications drawn from $\Wone{\Exc}$
and $\Wtwo{\Exc}$ in the relational proofs,
we extend the relational judgment to
\[\crelsyn{}{c_1}{w_1}{c_2}{w_2}{w_{\relmarker{}}}.\]
Here, $w_1 : \Wone\Exc\,A_1$ is a unary specification for $c_1 : \Exc_{1}\,A_1$,
symmetrically $w_2 : \Wtwo\Exc A_2$ is a unary specification for
$c_2 : \Exc_{2}\,A_2$, and $w_{\relmarker{}} : \Wrel\Exc(A_1,A_2)$ specifies the
relation between the programs $c_1$ and $c_2$.
% \er{in this intuition, is $w$ repeating stuff from $w_1$ and $w_2$?}\km{no? Or
% at least I didn't mean it (and I am not able to achieve it either...)}
%
Using this richer judgment, we would like a rule for sequencing computations as
follows, where a bold variable $\boldsymbol{w}$ stands for the triple
$(w_1,w_2,w_{\relmarker{}})$:
\[
  \infer{\crelsyn{}{m_1}{w^m_1}{m_2}{w^m_2}{w^m_{\relmarker}}\\
    \forall a_1,a_2\crelsyn{}{f_1\;a_1}{w^f_1\;a_1}{f_2\;a_2}{w^m_2\;a_2}{w^f_{\relmarker{}}\;a_1\;a_2}
  }{\crelsyn{}{\bind^{\Exc_1}\;m_1\;f_1}{\bind^{\Wone\Exc}\;w^m_1\;w^f_1}{\bind^{\Exc_2}\;m_2\;f_2}{\bind^{\Wtwo\Exc}\;w^m_2\;w^f_2}
    {\bind^{\Wrel\Exc}\;\boldsymbol{w^m}\;\boldsymbol{w^f}}}
%    {\bind^{\Wrel\Exc}\;w^f_1\;w^f_2\;w^f\;w^m}}
\]

What would the semantics of such a relational judgment be?
A reasonable answer at first sight is to state formally the previous intuition
in terms of unary and relational effect observations:\ch{Why not add the effect
  observations $\theta^\Exc_1,\theta^\Exc_2,\theta^\Exc$ on the sem symbol?}
\[\crelsem{}{c_1}{w_1}{c_2}{w_2}{w_{\relmarker{}}}\quad =\quad
  \theta^\Exc_1\,c_1 \leq w_1 \enspace\wedge\enspace \theta^\Exc_2\,c_2 \leq w_2
  \enspace\wedge\enspace \theta^\Exc_{\relmarker}(c_1,c_2) \leq w_{\relmarker{}}
\]
However this naive attempt does not validate the rule for sequential
composition above. The problem lies in the management of context. To
prove the soundness of this rule, we have in particular to show that
$\theta^\Exc_1\,(\bind^\Exc_1\,m_1\,f_1)\leq
\bind^{\Wone\Exc}\,w^m_1\,w^f_1$ under the hypothesis
$\theta^\Exc_1\,m_1 \leq w^m_1\wedge \ldots$ and $\forall a_1, a_2,
\theta^{\Wone\Exc}\,(f_1\,a_1) \leq w^f_1\, a_1 \wedge \ldots$, in
particular the second hypothesis requires an element $a_2 : A_2$ that
prevents\footnote{Instead of insisting that
  $\crelsyn{}{c_1}{w_1}{c_2}{w_2}{w_{\relmarker{}}}$ proves the correctness of $c_1$
  and $c_2$ with respect to $w_1$ and $w_2$ we could try to presuppose
  it, however this idea does not fare well since it would require a
  property akin of cancellability with respect to bind
  $\theta^\Exc_1\,(\bind^\Exc_1\,m_1\,f_1)\leq
  \bind^{\Wone\Exc}\,w^m_1\,w^f_1 \Rightarrow \theta^\Exc_1\,m_1 \leq
  w^m_1$ that has no reason to hold in our examples.} us from
concluding by monotonicity of $\bind^\Wone\Exc$.

This problematic hypothesis\er{the second one?} only depends on the part of
the context relevant for the left program and not on the full context,
so we introduce structured contexts $\Gamma =
(\Gamma_1,\Gamma_2)$ in our judgments, where $\Gamma_1$ and
$\Gamma_2$ are simple contexts. The judgment
$\crelsyn{\Gamma}{c_1}{w_1}{c_2}{w_2}{w_{\relmarker}}$ now presupposes that $\Gamma_i \vdash
c_i : \M_i\,A_i$, $\Gamma_i \vdash w_i : \W_i$ ($i = 1,2$) and that
$\Gamma_1,\Gamma_2\vdash w_{\relmarker} : \Wrel{}(A_1,A_2)$.   
The semantics of this judgment is given by
\begin{align}
  \label{eq:crelsem-def}
  \crelsem{\Gamma}{c_1}{w_1}{c_2}{w_2}{w_{\relmarker{}}} \quad=\quad
  \left (
    \begin{array}{c}
      \forall \gamma_1 : \Gamma_1, \theta_1(c_1\,\gamma_1) \leq w_1\,\gamma_1,\\
      \forall \gamma_2 : \Gamma_2, \theta_2(c_2\,\gamma_2) \leq w_2\,\gamma_2,\\
      \forall (\gamma_1, \gamma_2) : \Gamma_1 \times \Gamma_2,
      \theta_{\relmarker}(c_1\,\gamma_1, c_2\,\gamma_2) \leq
      w_{\relmarker{}}(\gamma_1, \gamma_2)
    \end{array}
  \right )
\end{align}
A conceptual understanding of this interpretation that will be useful in the
following is to consider $\Gamma$ as a (trivial) relation $\br\Gamma =
(\Gamma_1,\Gamma_2, \abs{(\gamma_1:\Gamma_1)(\gamma_2:\Gamma_2)}\one)$ instead
of a pair and define the family of relations $\br\Theta(\boldsymbol{\gamma}) = (\Theta_1(\gamma_1), \Theta_2(\gamma_2),\Theta_{\relmarker{}}\boldsymbol{\gamma})$ dependent over $\br\Gamma$:
\begin{mathpar}
      \Theta_1(\gamma_1 : \Gamma_1) = \theta_1(c_1\,\gamma_1) \leq w_1\,\gamma_1,\and
      \Theta_2(\gamma_2 : \Gamma_2) = \theta_2(c_2\,\gamma_2) \leq w_2\,\gamma_2,\and
      \Theta_{\relmarker{}}(\boldsymbol{\gamma}:\Gamma)(w_1:\Theta_1\,\gamma_1, w_2:\Theta_2\,\gamma_2) = \theta_{\relmarker}(c_1\,\gamma_1, c_2\,\gamma_2) \leq w_{\relmarker{}}\boldsymbol{\gamma}.
\end{mathpar}
Then the relational judgment
$\crelsem{\Gamma}{c_1}{w_1}{c_2}{w_2}{w_{\relmarker{}}}$ can be
interpreted as a dependent function $\left(\boldsymbol{\gamma}: \br\Gamma\right) \to
\br\Theta\,\boldsymbol{\gamma}$ in an appropriate relational dependent type theory.

\subsection{A Relational Dependent Type Theory}
\label{sec:relational-dependent-type-theory}

Adding unary specifications in the relational judgment enables a full
treatment of exceptions, however
%we lose along the way the simple relational rules presented in
the pure rules of section \autoref{sec:simple-pure-relational-rules}
do not deal with a structured context $\br\Gamma = (\Gamma_1,\Gamma_2, \Gamma_{\relmarker{}})$.
In order to recover rules dealing with such a context, we apply the same recipe
internally to a relational dependent type theory as described by \citet{Ton13}.
In practice, this type theory could be described as a syntactic model in the sense of
\citet{BoulierPT17}, that is a translation from a source type theory to a target
type theory that we take to be our ambient type theory, where a type in the
source theory is translated to a pair of types and a relation between them. We
call the resulting source type theory \RDTT{} and describe part\ch{?} of its
construction in \autoref{fig:syntaxRDTT}.
% \er{is coproduct translated? primitive bool?}\km{added both; sum is definable
% from bool, but it may be good to introduce it explicitly indeed...}
A systematic construction of \RDTT{} at
the semantic level is obtained by considering the category with families
$\Span{\TypeCat}$ consisting of families of types and functions
indexed by the span $\left(1 \leftarrow \relobj{} \rightarrow
  2\right)$\er{category?}\km{As a diagram, it's a span, no?}, a special case of
\citet{Shulman14,KL18}.
\begin{figure}
  \begin{mathpar}
    \br{A},\br{B},\br{\Gamma} ::= \br\zero \alt \br\one \alt \br\bool \alt \br\bN
    \alt \br{A}+\br{B} \alt (\boldsymbol{a}: \br{A}) \times \br{B}\;\boldsymbol{a} \alt
    (\boldsymbol{a}:\br{A}) \to \br{B}\;\boldsymbol{a}
    \\
    \sem{-} \text{ maps a relational type }\br{A}\text{ to its underlying
    representation } \sem{\br{A}} = (A_0,A_1,A_r)\and
    \sem{\br\zero} = (\zero{}, \zero, =)
    \and
    \sem{\br\one} = (\one{}, \one, =)
    \and
    \sem{\br\bool} = (\bool, \bool, =)
    \and
    \sem{\br\bN} = (\bN, \bN, =)
    \and
    \sem{\br{A}+\br{B}} = \left (
      \begin{array}{l}
        ab_1 : A_1 + B_1\\
        ab_2 : A_2 + B_2 \\
        \case{(ab_1,ab_2)}{(\inl\,a_1,\inl\,a_2)}{A_{\relmarker{}}\,a_1\,a_2}{(\inr\,b_1,\inr\,b_2)}{B_{\relmarker{}}\,b_1\,b_2\alt
        (\_,\_)~.~\zero}
      \end{array}
    \right )
    \and
    \sem{(\boldsymbol{a}: \br{A}) \times \br{B}\;\boldsymbol{a}} =
    \left (\begin{array}{l}
    (a_1, b_1) : (a_1 : A_1) \times B_1\;a_1,\\
    (a_2,b_2) : (a_2 : A_2) \times B_2\;a_2,\\
    (a_r : A_r\;a_1\;a_2) \times B_r\, a_1\;a_2\;a_r\; b_1\;b_2
    \end{array}\right )
  \and
    \sem{(\boldsymbol{a}: \br{A}) \to \br{B}\;\boldsymbol{a}} =
    \left (\begin{array}{l}
    f_1 : (a_1 : A_1) \to B_1\;a_1,\\
    f_2 : (a_2 : A_2) \to B_2\;a_2,\\
    (a_1 : A_1)(a_2:A_2)(a_r : A_r\;a_1\;a_2) \to B_r\, a_1\;a_2\;a_r\; (f_1\;a_1)\;(f_2\;a_2)
    \end{array}\right )
  \end{mathpar}
  \caption{Syntax of \RDTT{} and translation to base type theory}
  \label{fig:syntaxRDTT}
\end{figure}

Moving from our ambient type theory to \RDTT{} informs us on how to define rules
coming from the type theory.
% Indeed everything is straightforward as long as we
% interpret it in the complex theory.
For instance, generalizing the rule for if-then-else, we can use the
motive $P(\boldsymbol{ab}:\br{A} + \br{B}) = \br\Theta(\boldsymbol{ab}) :
\br\Type$ on the dependent eliminator for sum type
% in order to do a case
% splitting on a sum type, the usual eliminator rule
\begin{figure}
  \begin{align*}
    \sem{\texttt{elim\_sum}}:~&(P_1 : A_1 + B_1 \to \Type) \to (P_2 : A_2 + B_2 \to \Type) \to\\
                              &(P_{\relmarker{}} : \forall (ab_1 :A_1 + B_1) (ab_2:A_2+B_2), (\br{A}+\br{B})_{\relmarker{}}\,ab_1\,ab_2 \to \Type) \to\\
                              &(\forall (a_1:A_1), P_1\,(\inl\,a_1)) \to (\forall (a_2:A_2), P_2\,(\inl\,a_2)) \to \\
                              &(\forall a_1\, a_2\,(a_{\relmarker{}} : \br{A}\,a_1\,a_2), P_{\relmarker{}}\,(\inl\,a_1)\,(\inl\,a_2)\,a_{\relmarker{}})\\
                              &(\forall (b_1:B_1), P_1\,(\inr\,b_1)) \to (\forall (b_2:B_2), P_2\,(\inr\,b_2)) \to \\
                              &(\forall b_1\, b_2\,(b_{\relmarker{}} : \br{B}\,b_1\,b_2), P_{\relmarker{}}\,(\inr\,b_1)\,(\inr\,b_2)\,b_{\relmarker{}}) \to\\
                              &\forall ab_1\,ab_2\,(ab_{\relmarker{}}: (\br{A}+\br{B})_{\relmarker{}}\,ab_1\,ab_2), P_{\relmarker{}}\,ab_1\,ab_2\,ab_{\relmarker{}}
  \end{align*}
  \caption{Relational translation of the eliminator for sum types}
  \label{fig:rel-trans-sum-type}
\end{figure}
\[\texttt{elim\_sum} : (P :(\br{A} + \br{B}) {\to} \br\Type) \to (a:\br{A} {\to}
  P\,a) \to (b:\br{B} {\to} P\,b) \to (x: \br{A}+\br{B}) {\to} P\,x\]
to obtain a rule for case splitting. This eliminator translates to the large
term in~\autoref{fig:rel-trans-sum-type} \ifappendix{}described in the appendix~\autoref{sec:appendix-rdtt-translation}\fi~that induces the following relational rule using
$\boldsymbol{w^l} = (w^l_1, w^l_2, w^l_{\relmarker{}})$, 
$\boldsymbol{w^r} = (w^r_1, w^r_2, w^r_{\relmarker{}})$
and the relational specifications of the conclusion -- where we abbreviate
pattern matching with a $\texttt{case}$ construction -- as arguments to the eliminator
\[
  \infer{
    \crelsyn{\Gamma,\boldsymbol{a} : \br{A}} {c_1[\inl\;a_1/ab_1]}{w^l_1} {c_2[\inl\;a_2/ab_2]}{w^l_2} {w^l_{\relmarker{}}[a_{\relmarker{}}/ab_{\relmarker{}}]}\\
    \crelsyn{\Gamma,\boldsymbol{b} : \br{B}} {c_1[\inr\;b_1/ab_1]}{w^r_1} {c_2[\inr\;b_2/ab_2]}{w^r_2} {w^r_{\relmarker{}}[b_{\relmarker{}}/ab_{\relmarker{}}]}
  }
  {\crelsynL{\br\Gamma,\boldsymbol{ab} : \br{A}{+}\br{B}}
    {c_1}{\case{ab_1}{\inl\;a_1}{w^l_1}{\inr\;b_1}{w^r_1}}
    {c_2}{\case{ab_2}{\inl\;a_2}{w^l_2}{\inr\;b_2}{w^r_2}}
    {\caseL{ab_1,ab_2}{\inl\;a_1,\inl\;a_2}{w^l_{\relmarker{}}}{\inr\;b_1,\inr\;b_2}{w^r_{\relmarker{}}}}}
\]
As in the simple setting, we can then refine this rule to obtain synchronous or
asynchronous rules specifying a required shape for the programs $c_1,c_2$.

\subsection{Relative Monads and Monad Morphisms}
\label{sec:RelMon}
% \er{maybe this can wait until later? maybe a subsection connecting the two frameworks by this unifeid view. I don't see how this subsection here is helping the following two subsections, except by saving the work of explicitly writing the equations}
Before giving the general framework able to derive monadic rules
dealing with exceptions, we return to the notions of relative monads
and relative monad morphisms, since these will be the common
underlying concept relating the simple and generic frameworks.

\begin{definition}[relative monads \cite{AltenkirchCU14}]
  \label{def:RelMon}
  Let $\I, \C$ be categories and ${\J : \I {\to} \C}$ a functor between
  these. A $\J$-relative monad is given by 
  \begin{itemize}
  \item for each $A\in\I$, an object $\T\,A\in\C$
  \item for each $A\in\I$, a morphism $\ret^\T_A \in \C(\J\,A;\T\,A)$
  \item for each $A,B\in\I$, a function $\revBindT{(-)} : \C(\J\,A;\T\,B) \to \C(\T\,A;\T\,B)$
  \item satisfying the 3 following equations
    \begin{mathpar}
      \revBindT{f} \circ \ret^\T_A = f
      \and
      \revBindT{(\ret^\T_A)} = \id_{\T\,A}
      \and
      \revBindT{g} \circ \revBindT{f} = \revBindT{(\revBindT{g}\circ f)}
    \end{mathpar}
  \end{itemize}
\end{definition}
Noting $\TypeCat$ for the category of types and functions of our ambient type
theory, $\Poset$ for the category of preordered sets and monotonic
functions, and $\Discr : \TypeCat \to \Poset$ the functor equipping a type with
its discrete preorder structure , a simple relational specification monad could be described as a
relative monad $\Wrel{} : \TypeCat^2 {\to} \Poset$ over the functor $\Discr
\circ \times : \TypeCat^2 {\to} \Poset$ sending a pair of types $(A_1,A_2)$
to their product $A_1 \times A_2$ equipped with a discrete preorder.
The monotonicity condition imposed on $\bind^\Wrel{}$ amounts to require that
all the structure is enriched in $\Poset$\footnote{$\TypeCat^2$ can be enriched
  over $\Poset$ by change-of-enrichment through the monoidal functor $\Discr$.}
\cite{kelly1982basic}.
\ifsooner
The general study of enriched relative
monads is outside the scope of this paper, but we will use these intuitions to
define the general notion of relative specification monads and effect
observations in the setting of the relational dependent type theory.
\fi

Simple relational effect observations from $\Mone{},\Mtwo{}$ to $\Wrel{}$ can
also be interpreted as instances of relative monad morphisms. First, a pair of
computational monads $\Mone,\Mtwo$ yields a monad $\Mone{}\otimes\Mtwo{} :
\TypeCat^2 \to \TypeCat^2$ acting on pairs of types, that is
$\Mone\otimes\Mtwo\,(A_1,A_2) = (\Mone\,A_1, \Mtwo\,A_2)$ with monadic
structures provided by each sides. Second, by proposition 2.3 of
\citet{AltenkirchCU14}, the monad $\Mone{}\otimes\Mtwo{}$ is a relative monad on
the identity functor $\Id_{\TypeCat^2}$. A simple relational effect observation
is a relative monad morphism from $\Mone{}\otimes\Mtwo$ to $\Wrel{}$ over the
functor $\Discr\circ \dottimes$.
\begin{definition}[Lax relative monad morphism]
  \label{def:relMonMorph}
  Let $\I,\C_1,\C_2$ be categories enriched over $\OrdCat$ and $\J_1 : \I \to
  \C_1, \J_2 : \I \to \C_2, \mathcal{F} : \C_1 \to \C_2$ be $\OrdCat$-enriched
  functors such that $\varphi : \mathcal{F} \circ \J_1 \cong \J_2$. A {\em
    lax relative monad morphism} from a $\J_1$-relative monad $\T_1 : \I \to \C_1$
  to a $\J_2$-relative monad $\T_2 : \I \to \C_2$ is
  \begin{itemize}
  \item a family of morphisms $\theta_A : \mathcal{F}\circ \T_1\,A \to \T_2\,A$
    indexed by objects $A \in \mathcal{I}$,
  \item such that
    \begin{align}
      \label{eq:relative-monad-morphism-ineq}
      \theta \circ \mathcal{F}\,\ret^{\T_1} \leq \ret^{\T_2} \circ \varphi
      \qquad&\qquad
      \theta \circ \mathcal{F}(\revBind{f}{{\T_1}}) \leq \revBind{(\theta \circ
        \mathcal{F}f \circ \varphi^{-1})}{{\T_2}} \circ \theta
    \end{align}
    We say that $\theta$ is a \emph{relative monad morphism} when the last two
    conditions are equalities.
  \end{itemize}
\end{definition}

Crucially, this definition of relative monad morphism generalizes the notion
defined by \citet{AltenkirchCU14} by enabling different base functor, a relative
monad analog to the monad opfunctors of \citet{Street72}.\footnote{However, in
  contrast to the situation of monads, these relative monad morphisms do not
  dualize well.} Up to the enrichment,
we recover the definition of \citet{AltenkirchCU14} by taking $\J_1 = \J_2$,
$\mathcal{F} = \Id$ and $\varphi = \id$.

% Triples $(\mathcal{C}, \mathcal{J}, \mathcal{T})$ of a base functor $\mathcal{J}
% : \mathcal{I} \to \mathcal{C}$ and a $\mathcal{J}$ relative monad $\mathcal{T}$
% together with lax relative monads morphisms form a category\footnote{}.

\subsection{Relational Specification Monads, Relational Effect Observations}
\label{sec:rel-spec-monad-rel-effect-obs}

Motivated by the case of exceptions, we now define the general notion of a
relational specification monad. This definition is obtained by instantiating the
definitions of an (enriched) relative monad to our relational dependent type
theory, ensuring that we obtain a theory uniform with the simple setting, and
crucially that we can use a similar methodology to introduce relational rules.
We note $\Span{\OrdCat}$ for the category of relations between ordered types,
$\Jprod : \Type \times \Type \to \Span{\OrdCat}$ for the functor defined on
objects by $\Jprod(A_1,A_2) = \Discr\,A_1 {\xleftarrow{\pi_1}} \Discr\,A_1{\times}
\Discr\,A_2 {\xrightarrow{\pi_2}} \Discr\,A_2$ and $\pi_{1,2} : \Span{\OrdCat}
\to \Type \times \Type$ sending a relation
$A_1{\leftarrow}A_{\relmarker{}}{\rightarrow}A_2$ to its legs $(A_1,A_2)$.
\begin{definition}%[Relational specification monad]
  \label{def:cRelSpecMon}
  A {\em relational specification monad} consist of
  a pair of unary specification monads $\Wone{}, \Wtwo{} : \Type \to \OrdCat$
  and a relative monad $\boldsymbol{\W{}} : \Type \times \Type \to
  \Span{\OrdCat}$ over $\Jprod$ lifting $\Wone{}, \Wtwo{}$, that is such that
  $\pi_{1,2} \circ \boldsymbol{\W{}} = \Wone{}\times \Wtwo{}$, and
  whenever $\pi_{1,2}(f) = (f_1,f_2)$
  \begin{mathpar}
    \pi_{1,2}(\ret^{\boldsymbol{\W{}}}) = (\ret^\Wone{},\ret^\Wtwo{}),
    \and
    \pi_{1,2}(\bind^{\boldsymbol{\W{}}}\,f) =
    (\bind^\Wone{}\,f_1,\bind^\Wtwo{}\,f_2).
  \end{mathpar}
\end{definition}

In components, a relational specification monad over unary specification monads $\Wone{}, \Wtwo{}$ consists
of a relation $\Wrel{}(A_1,A_2) : \Wone{}A_1 \to \Wtwo{}A_2 \to
\Type$ equipped with a preorder $\leq^\W$, and operations 
\begin{align*}
  \ret^\Wrel{} : (a_1,a_2)&:A_1 {\times} A_2 \to \Wrel{}(A_1,A_2)\,(\ret^\Wone{}\,a_1)\,(\ret^\Wtwo{}\,a_2)\\
  \bind^\Wrel{} :\quad\>\, w^m_1&:\Wone{}A_1 \to w^m_2:\Wtwo{}A_2 \to w^m_{\relmarker{}}:\Wrel{}(A_1,A_2)\,w^m_1\,w^m_2 \to\\
  w^f_1&:(A_1 \to \Wone{}B_1) \to w^f_2: (A_2 \to \Wtwo{}B_1) \to \\
  w^f_{\relmarker{}}&:(((a_1,a_2):A_1\times A_2) \to \Wrel{}(B_1,B_2)\,(w^f_1\,a_1)\,(w^f_2\,a_2)) \to\\
  \Wrel{}&(B_1,B_2)\,(\bind^\Wone{}w^m_1\,w^f_1)\,(\bind^\Wtwo{}w^m_2\,w^f_2) 
\end{align*}
satisfying equations analogous to the monadic laws.

If these operations look complex, in most of our examples the relation
$\Wrel{}(A_1,A_2)\,w_1\,w_2$ is independent of $w_1$ and $w_2$.
This happens for our leading example of exceptions, but also for any relational
specification monad constructed out of a simple relational specification monad.
Indeed, we can associate to any simple relational specification monad $\Wrel{}$
the relational specification monad $\W(A_1,A_2) =
(\Wrel{}(A_1,\one),\Wrel{}(\one,A_2), \abs{w_1\,w_2}{\Wrel{}(A_1,A_2)})$. The
monadic operations just discard the superfluous arguments.
We now turn to the definition of a relational effect observation.
\begin{definition}[Lax relational effect observation]
  A {\em lax relational effect observation} from $M_1,M_2$ to the relational
  specification monad $\boldsymbol{\W{}}$ over $\Wone{}, \Wtwo{}$ is a lax
  relative monad morphism $\boldsymbol{\theta}$ from $M_1\otimes
  M_2$ to $\boldsymbol{\W{}}$ over $\Jprod$. A relational effect observation is
  \emph{strict} when the inequalities~(\ref{eq:relative-monad-morphism-ineq}) hold as equalities.
\end{definition}
Explicitly, such a lax relational effect observation
$\boldsymbol{\theta}$ of three components $\theta_1,\theta_2, \theta_{\relmarker{}}$
where $\theta_1 : \Mone \to \Wone{}$, $\theta_2 : \Mtwo{} \to \Wtwo{}$
are (plain) monad morphisms, and 
\[\theta_{\relmarker{}}\quad:\quad
((m_1,m_2):\Mone{}A_1\times\Mtwo{}A_2) \to \Wrel{}(A_1,A_2)\,(\theta_1\,m_1)\,(\theta_2\,m_2)\] 
verify the two inequations with respect to the monadic operations
\begin{align*}
  \theta_{\relmarker{}}(\ret^\Mone{}\,a_1,\ret^\Mtwo{}\,a_2) &\leq \ret^\Wrel{}\,(a_1,a_2)
  : \Wrel{}(A_1,A_2)\,(\theta_1\,(\ret^\Mone{}\,a_1))\,(\theta_2\,(\ret^\Mtwo{}\,a_2))\\
  \theta_{\relmarker{}}(\bind^\Mone{}m_1\,f_1,\bind^\Mtwo{}m_2\,f_2) &\leq \bind^\Wrel{}\,(\theta_1\,m_1)\,(\theta_2\,m_2) \,(\theta_{\relmarker{}}\,m_{\relmarker{}})
  ~\theta_1{\circ}f_1\enspace\theta_2{\circ}f_2\enspace\theta_{\relmarker{}}{\circ}(f_1\times f_2)
\end{align*}
Given a relational effect observation $\boldsymbol{\theta} : \Mone{}\otimes\Mtwo{}
\to \boldsymbol{\W}$, we can define in full generality the semantics of the
relational judgment by the \autoref{eq:crelsem-def}.\ch{I'm confused,
  all this fuss about relational DTT just to define this without
  using the RDTT function space, as promised at the end of 3.2?}
We introduce the
generic monadic rules in \autoref{fig:complexRelRules}, and similarly
to the simple setting obtain the following soundness theorem.
\begin{theorem}[Soundness of monadic rules]
 The relational rules in \autoref{fig:complexRelRules} are sound with
 respect to any lax relational effect observation $\boldsymbol{\theta}$, that is 
 \[\crelsyn{\br\Gamma}{c_1}{w_1}{c_2}{w_2}{w_{\relmarker{}}}\quad\Rightarrow{}\quad
 \forall \boldsymbol{\theta},\enspace \crelsemext{\br\Gamma}{c_1}{w_1}{c_2}{w_2}{w_{\relmarker{}}}{\boldsymbol{\theta}}\]
\end{theorem}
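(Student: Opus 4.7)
The plan is to proceed by induction on the derivation of $\crelsyn{\br\Gamma}{c_1}{w_1}{c_2}{w_2}{w_{\relmarker{}}}$, unfolding the semantics according to (\ref{eq:crelsem-def}) into its three pointwise conjuncts. The two unary conjuncts can be discharged independently by the corresponding non-relational soundness argument of \cite{dm4all}, since the components $\theta_1 : \Mone{} \to \Wone{}$ and $\theta_2 : \Mtwo{} \to \Wtwo{}$ of $\boldsymbol{\theta}$ are by definition plain monad morphisms, and the triple judgment projects cleanly onto two unary Dijkstra-style judgments through \autoref{def:cRelSpecMon}. So the essential new work is in discharging the third, relational, conjunct for each of the generic monadic rules of \autoref{fig:complexRelRules}, which mirror those of \autoref{fig:smonRules}.

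For the \textsc{Ret} rule, the required inequality $\theta_{\relmarker{}}(\ret^{\Mone}\,a_1, \ret^{\Mtwo}\,a_2) \leq \ret^{\Wrel{}}(a_1,a_2)$, taken pointwise in $\boldsymbol\gamma$, is exactly the first clause of \autoref{def:relMonMorph}. For \textsc{Weaken}, the argument reduces to transitivity of $\leq^{\boldsymbol{\W}}$, composing the hypothesis $w_{\relmarker{}} \leq w'_{\relmarker{}}$ (and its unary analogues) with the inductive assumption. For the \textsc{Bind} rule, the argument factors into two steps: first, apply the lax relative monad morphism inequality to bound $\theta_{\relmarker{}}(\bind^{\Mone}\,m_1\,f_1,\bind^{\Mtwo}\,m_2\,f_2)$ by $\bind^{\Wrel{}}\,(\theta_{\relmarker{}}(m_1,m_2))\,(\theta_{\relmarker{}}\circ(f_1\times f_2))$; second, apply monotonicity of $\bind^{\Wrel{}}$ in both arguments, using the inductive hypotheses $\theta_{\relmarker{}}(m_1,m_2) \leq w^m_{\relmarker{}}$ and $\forall a_1, a_2.\ \theta_{\relmarker{}}(f_1\,a_1, f_2\,a_2) \leq w^f_{\relmarker{}}(a_1,a_2)$ coming from the two premises, and then conclude by transitivity.

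The main obstacle lies in the dependency structure of \autoref{def:cRelSpecMon}: the relational specification $w_{\relmarker{}}$ is typed over its unary counterparts $w_1, w_2$, so the monotonicity of $\bind^{\boldsymbol{\W}}$ in the \textsc{Bind} case can only be applied after the corresponding unary inequalities have been secured, and the continuations fed to $\bind^{\Wrel{}}$ must be aligned along them. Handling this bookkeeping coherently is precisely what working internally to \RDTT{} affords us: the three layers of the semantics package together as a single dependent function over $\br\Gamma$, and the proof becomes a direct transcription, parameterised pointwise by the context, of the simple-setting argument for \autoref{thm:simple-generic-soundness}.
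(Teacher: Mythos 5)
Your proof is correct and follows essentially the same route as the paper, which proves the generic theorem "similarly to the simple setting" (\autoref{thm:simple-generic-soundness}): \textsc{Ret} and \textsc{Bind} reduce to the two lax relative monad morphism inequalities combined with monotonicity of $\bind^{\boldsymbol{\W}}$, and \textsc{Weaken} is transitivity, all instantiated pointwise in the context $\br\Gamma$. Your explicit treatment of the unary conjuncts via the non-relational soundness of the components $\theta_1,\theta_2$ and your remark on the dependency bookkeeping are consistent with the paper's intended argument.
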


\begin{figure}
  \begin{mathpar}
    \inferrule*[left=Weaken]{\crelsyn{\br\Gamma}{c_1}{w_1}{c_2}{w_2}{w_{\relmarker{}}}\\
      w_1 \leq^\Wone{} w'_1 \\
      w_2 \leq^\Wtwo{} w'_2 \\
      w_{\relmarker{}} \leq^\Wrel{} w'_{\relmarker{}}}
    {\crelsyn{\br\Gamma}{c_1}{w'_1}{c_2}{w'_2}{w'_{\relmarker{}}}}
    \and
    \inferrule*[left=Ret]{\Gamma_1 \vdash a_1:A_1\\ \Gamma_2 \vdash a_2:A_2}
    {\crelsyn{\br\Gamma}{\ret^\Mone{}a_1}{\ret^\Wone{}a_1}{\ret^\Mtwo{}a_2}{\ret^\Wtwo{}a_2}{\ret^\Wrel{}(a_1,a_2)}}
    \and
    \inferrule*[left=Bind]{\crelsyn{\br\Gamma}{m_1}{w^m_1}{m_2}{w^m_2}{w^m}\\
      \crelsyn{\br\Gamma, \boldsymbol{a} : \br{A}}{f_1\;a_1}{w^f_1\;a_1}{f_2\;a_2}{w^m_2\;a_2}{w^f\;\boldsymbol{a}}
    }{\crelsynL{\br\Gamma}{\bind^\Mone{}\;m_1\;f_1}{\bind^\Wone{}\;w^m_1\;w^f_1}{\bind^\Mtwo{}\;m_2\;f_2}{\bind^\Wtwo{}\;w^m_2\;w^f_2}{\bind^\Wrel{}\;\boldsymbol{w^m}\;\boldsymbol{w^f}}}
    % \infer{\crelsyn{\Gamma}{m_1}{w^m_1}{m_2}{w^m_2}{w^m}\\
    %   \crelsyn{\Gamma, (a_1,a_2,a_w) : \br{A}}{f_1\;a_1}{w^f_1\;a_1}{f_2\;a_2}{w^m_2\;a_2}{w^f\;a_1\;a_2\;a_w}
    % }{\crelsynL{\Gamma}{\bind\;f_1\;m_1}{\bind\;w^f_1\;w^m_1}{\bind\;f_2\;m_2}{\bind\;w^f_2\;w^m_2}{\bind\;w^f_1\;w^f_2\;w^f\;w^m_1\;w^m_2\;w^m}}
  \end{mathpar}
  \caption{Generic monadic rules in the full relational setting}
  \label{fig:complexRelRules}
\end{figure}

\subsection{Relational Specification Monad Transformers}
\label{sec:relational-specification-monad-transformers}

Having a category of relational specification monads and relative monad
morphisms between them, we define a \emph{relational specification monad
  transformer} to be a pointed endofunctor on this category~\cite{LuthG02}.
Under mild assumptions, the usual state and exception transformer lifts to this
setting, yielding in each case both a left-variant and a right-variant applying
either to the left type $A_1$ or right one $A_2$ of a relational specification
monad $\W(A_1,A_2)$. Since the two variants are symmetric, we only detail the
left ones.

\paragraph{Adding state} The usual state monad transformer maps a monad $\M$ to the
monad $\StT(\M)\,A = S \to \M\,(A\times S)$. The left relational state monad transformer
$\StT_{\relmarker}$ maps a relational specification monad $\boldsymbol{\W}\,(A_1,A_2) = (\Wone{}\,A_1,
\Wtwo{}\,A_2, \abs{w_1\,w_2}{\Wrel{}(A_1,A_2)\,w_1\,w_2})$ to the relational specification monad
with carrier
\[\StT_{\relmarker{}}(\boldsymbol{\W})(A_1,A_2) \enspace =\enspace (\StT(\Wone{})\,A_1,\enspace \Wtwo{}A_2,\enspace
  \abs{w_1\,w_2}{(s_1:S_1) \to \Wrel{}(A_1\times S_1,A_2)\,(w_1\,s_1)\,w_2})\]

The monadic operations on $\StT_{\relmarker{}}(\W)_{1}$ are given by the usual
state transformer. The added data resides in the $\ret$ and $\bind$ operations
responsible for the relational part:
\begin{lstlisting}[mathescape]
let $\ret^{\StT(\W)}_{\relmarker{}}$ (a_1,a_2) : (s_1: S_1) -> W_rel (A_1\times\,S_1,A_2) ($\ret^{StT(W)_1}$ (a_1,s_1)) ($\ret^\Wtwo{}$ a_2) = fun s_1. $\ret^\Wrel{}$ ((a_1,s_1), a_2)

let $\bind^{\StT(\W)}_{\relmarker{}}$ (m_1 : $\StT(\W)$_1 A_1) (m_2 : W_2 A_2) (m_rel : $\StT(\W)$_rel (A_1,A_2) m_1 m_2)
               (f_1 : A_1 -> $\StT(\W)$_1 B_1) (f_2 : A_2 -> W_2 B_2)
               (f_rel : (a_1,a_2):A_1\times\,A_2 -> $\StT(\W)$_rel (B_1, B_2) (f_1 a_1) (f_2 a_2))
               : $\StT(\W)$_rel (B_1,B_2) ($\bind^{\StT(\W)_1}$ m_1 f_1) ($\bind^\Wtwo{}$ m_2 f_2) =
  fun s_1. $\bind^\Wrel{}$ (m_1 s_1) m_2 (m_rel s_1) (fun (a_1,s_1'). f_1 a_1 s_1') f_2 (fun ((a_1,s_1'), a_2). f_rel (a_1,a_2) s_1')
\end{lstlisting}

\paragraph{Adding exceptions}
In a similar flavor, the exception monad transformer $\ExcT$ mapping a monad
$\M$ to $\ExcT(\M)A=\M(A + E_1)$ gives raise to the carrier its relational
specification monad counterpart $\ExcT_{\relmarker{}}(\boldsymbol{\W})(A_1,A_2) =
(\ExcT(\Wone{})A_1, \Wtwo{}A_2, \Wrel{}(A_1+E_1, A_2))$.

% The exception monad transformer requires an
% additional structure on relational specification monad.
% \km{FIXME}
% If the presence of the operations $\tau_1$ and $\tau_2$ can seem surprising,
% they ensure in general that $\Wone{}$ and $\Wtwo{}$ are indeed unary
% specification monads, and are useful in practice when building transformers (see
% \autoref{sec:relational-specification-monad-transformers}).
% % 
However, in order to define the bind operation we need to restrict our attention
to relational specification monad for which unary specification can be lifted to
relational ones. This is provided by the structure of two maps:
\begin{align*}
  \tau_1 : w_1:\Wone{}(A_1,\one) \to \Wrel{}(A_1,\one)\,w_1\, (\ret^\Wtwo{}\,()),\\
  \tau_2: w_2:\Wtwo{}(\one,A_2) \to \Wrel{}(\one,A_2)\, (\ret^\Wone{}\,())\,w_2.
\end{align*}
such that pairing each of them with identity provides a
monad morphism, that is $(\id,\tau_1) :\Wone{}(A,\one) \to
(w:\Wone{}(A,\one)) \times \Wrel{}(A,\one)\,w\,(\ret^\Wtwo{}\,())$ respects the
monadic equations. Any relational specification monad induced by a simple one has a canonical such
structure, taking $\tau_1$ and $\tau_2$ to be identity. The state transformer
and the exception transformer also preserve this structure.

Assuming that $\boldsymbol{\W}$ is equipped with $\tau_1,\tau_2$, we define the
return and bind operation on $\ExcT(\Wrel{})$ as:
\begin{lstlisting}
let $\ret^{\ExcT(\W)_{\relmarker{}}}$ (a_1,a_2) : W_rel (A_1 + E_1, A_2) ($\ret^{\ExcT(\W)_{1}}$ a_1) ($\ret^{\W_{2}}$ a_2) = $\ret^{\W_{\relmarker{}}}$ (Inl a_1, a_2)

let $\bind^{\ExcT(\W)_{\relmarker{}}}$ (m_1 : $\ExcT(\W)$_1 A_1) (m_2 : W_2 A_2) (m_rel : $\ExcT(\W)$_rel (A_1,A_2) m_1 m_2)
               (f_1 : A_1 -> $\ExcT(\W)$_1 B_1) (f_2 : A_2 -> W_2 B_2) 
               (f_rel : (a_1,a_2):A_1\times\,A_2 -> $\ExcT(\W)$_rel (B_1, B_2) (f_1 a_1) (f_2 a_2))
               : $\ExcT(\W)$_rel (B_1,B_2) ($\bind^{\ExcT(\W)_{1}}$ m_1 f_1) ($\bind^{\W_{2}}$ m_2 f_2) =
  $\bind^{\W_{\relmarker{}}}$ m_1 m_2 m_rel (fun ae_1. match ae_1 with | Inl a_1 -> f_1 a_1 | Inr e_1 -> $\ret^{\W_{1}}$ (Inr e_1)) f_2
          (fun ae_1 a_2 . match ae_1 with 
                      | Inl a_1 -> f_rel a_1 a_2
                      | Inr e_1 -> $\bind^{\W_{\relmarker{}}}$ (tau_2 (f_2 a_2)) (fun ((), b_2) . $\ret^{\W_{\relmarker{}}}$ (Inr e_1, b_2)))
\end{lstlisting}

\begin{figure}
  \begin{mathpar}
    \inferrule*[left=]{
    }{\crelsyn{\br\Gamma}{\throw\,e_1}{\abs{\varphi_1}{\varphi_1\,(\inr\,e_1)}}{\ret^\Exc\,a_2}{\ret^{\Wtwo{\Exc}}a_2}{\abs{\varphi}{\varphi\,(\inr\,e_1,\inl\,a_2)}}}
    \and
    \inferrule*[left=]{ }{\crelsyn{\br\Gamma}{\ret^\Exc\,a_1}{\ret^{\Wone{\Exc}}a_1}{\throw\,e_2}{\abs{\varphi_2}{\varphi_2\,(\inr\,e_2)}}{\abs{\varphi}{\varphi\,(\inl\,a_1,\inr\,e_2)}}}
    \and
    \inferrule*[left=]{
      \crelsyn{\br\Gamma}{c_1}{w_1}{c_2}{w_2}{w_{\relmarker{}}}\\
      \crelsyn{\br\Gamma}{c^{err}_1}{w^{err}_1}{c^{err}_2}{w^{err}_2}{w^{err}_{\relmarker{}}}
    }{\crelsyn{\br\Gamma}{\catch\,c_1\,c^{err}_1}{w^\catch\,w_1\,w^{err}_1}{\catch\,c_2\,c^{err}_2}{w^\catch\,w_2\,w^{err}_2}{w^\catch_{\relmarker{}}\,
      w_{\relmarker{}}\,\boldsymbol{w^{err}}}}
  \end{mathpar}
  \vspace{0.2cm}
\begin{lstlisting}
        let $w^\catch$ (w : WExc A) (werr : E -> WExc A) : W A = 
          fun $\varphi$. w (fun ae. match ae with | Inl a -> $\ret^{\W^\Exc}$ a $\varphi$ | Inr e -> werr e $\varphi$)

        let $w^\catch_{\relmarker{}}$ (w:WrelExc (A_1,A_2)) (werr_1 : E_1 -> WExc_1 A_1) (werr_2 : E_2 -> WExc_2 A_2) 
                      (werr_rel : E_1 \times\, E_2 -> WrelExc (A_1,A_2)) : WrelExc (A_1,A_2) =
          fun $\varphi$. w (fun (ae_1, ae_2). match ae_1, ae_2 with
            | Inl a_1, Inl a_2 -> retWrelExc (a_1,a_2) $\varphi$
            | Inr e_1, Inl a_2 -> werr_1 e_1 (fun ae_1 -> $\varphi$ ($ae_1$, Inl $a_2$))
            | Inl a_1, Inr e_2 -> werr_2 e_2 (fun ae_2 -> $\varphi$ (Inl $a_1$, $ae_2$))
            | Inr e_1, Inr e_2 -> werr_rel (e_1,e_2) $\varphi$)
\end{lstlisting}
  % | Inr e_1, Inl a_2 -> bindWrelExc (tau_1 (werr_1 e_1)) (fun ae_1. retWrel (ae_1, Inl a_2) $\varphi$)
  \caption{Rules for exceptions}
  \label{fig:full-rules-exceptions}
\end{figure}

Putting these monad transformer to practice, we can finally define the full
relational specification monad for exceptions validating the rules in
\autoref{fig:full-rules-exceptions} by first lifting the simple relational
$\Wrel{\Pure}$ and applying the exception transformers on both
left and right sides. Further, applications would involve
specifications relating state and exceptions with rollback state.
\ch{TODO: Ram used an asynchronous catch rule. Show that instead?
  Can that be used to derive the synchronous rule?
  The other way the asynchronous rule can probably be derived by catching a return/raise.}
The structure provided by $\tau_1,\tau_2$ is a technical requirement, and we
leave to further investigation the conceptual understanding of this structure in
the setting of relational specification monads.

\km{Now we need to use them in at least one example for instance mixing state+Exn and exn+state}

\section{Embedding relational program logics}
\label{sec:embedRelProgramLogics}

\subsection{Relational Hoare Logic}
\label{sec:RHL}

As explained in the introduction, \citet{benton04relational}'s seminal
relational Hoare logic (\RHL{}) is at the origin of many works on
relational program logics (see also \autoref{sec:related}). We present
here a syntactic embedding of
\RHL{}, showing that our simple framework can host usual program logics.

Concretely, we define a translation from \While-language to monadic
programs using the $\Imp$ monad, and show that the translation of
all \citet{benton04relational}'s rules (with the exception of two
partial equivalence specific ones) are admissible in our framework
using the effect observation $\theta^\Part_{\relmarker{}}$.

The translation from direct-style imperative programs to monadic ones follows
closely \citepos{Moggi89} interpretation of call-by-value in his monadic
metalanguage. The $\Imp$ monad of \autoref{sec:effect-specific-rules}
directly interprets read and write, and
while loops are translated using the following definable combinator
\begin{lstlisting}
let while (guard: Imp bool) (body : Imp one) : Imp one =
  do_while (bindImp guard (fun b . if b then bindImp body (fun () . retImp $\true$) else retImp $\false$))
\end{lstlisting}

The proofs of admissibility for the various rules exhibit a recurrent pattern.
We first use weakening to adapt the specification obtained through the
translation to an appropriate shape for the rules of our logic. Then we use
the pure and generic monadic rules to decompose the programs on both sides.
Finally, effect-specific rules together with admissibility of the premises
finish the proof.

An easy corollary of our proof is that \citet{benton04relational}'s relational
rules are valid for our partial correctness interpretation. However, our
interpretation treats non-termination in a slightly different way from his
semantics. Indeed, our partial correctness semantics relates two programs
whenever one of them diverges, whereas his requires both program to have the
same divergence behaviour. A main difference is that our semantics is more
compositional and allows to compute a \emph{precise} specification by applying
$\theta^\Part_{\relmarker}$ to the parts and combining the results, while for
Benton's semantics this will not produce precise specifications. This makes
additional rules sound with respect to our semantics, allowing for instance to
derive that $\srelsyn{\cskip}{\cloop}{\abs{\varphi (s_1^i, s_2^i)}{\forall a_1,
    a_2, s_1^f, s_2^f . \varphi((a_1, s^f_1), (a_2, s^f_2))}}$ (equivalent to
$\srelsyn{\{\>\top\>\}~\cskip}{\cloop}{\top}$ in pre-/postcondition form),
although it is of course a choice whether one wants a semantics that validates
such rules or not. Another difference is that Benton's semantics assumes a
classical logic, in which one can "decide" termination, while our semantics
easily works in a constructive logic. We leave as future work to investigate if
Benton's semantics can be successfully expressed using a simple \emph{lax}
effect observation.

\subsection{Relational Hoare Type Theory}
\label{sec:RHTT}

\citet{NanevskiBG13} introduce Relational Hoare Type Theory (\RHTT{}) for the
specific goal of proving noninterference properties of programs.
\RHTT{} builds upon powerful but specific % ad-hoc -- way too negative
semantic objects embedded in the type theory of \coq{} to support
specifications relating two runs of a single program. We explain here how we can
reconstruct their model with a relational specification monad and an effect
observation. This connection between the two frameworks could help extending
\RHTT{} to other effects, for instance exceptions.\km{Do we want to say that the
Hoare/Dijkstra monad technology deployed there could help verifying hyperproperties?}

\paragraph{A model of state and partiality} The effects supported by \RHTT{} are
manipulation of a structured heap -- a refined version of the simple state monad
of \autoref{sec:monads} -- and partiality. In order to model these effects, a
close variant of the following monad is used
\[\M\,A = (p : heap {\to} \prop) \times (f:(r: \leq\!p) \to A {\to} heap {\to} \prop) {\times} \texttt{coherent}(f)\]
where $\leq\!p = \{r:heap {\to} \prop \mid \forall h, r\,h {\Rightarrow} p\,h
\}$ and the predicate $\texttt{coherent}$ specifies that $f$ is defined by its
value on singleton predicates consisting of only one heap. Using
predicates enables the definition of fixpoint operators, in the same
fashion as we did in our interpretation of while loops for
the \lstinline!Imp!  effect in~\autoref{sec:effect-specific-rules}.

The relational specification used by \citet{NanevskiBG13} is a variation on the simple relational
monad of stateful pre- and postconditions from \autoref{sec:simple-specs} where
the precondition only takes one input heap corresponding to the fact only one
program is considered at a time.
\[
\PPrel{}{}(A_1,A_2) = (heap \to \prop) \times{} (heap \times{} heap \to A_1 \times{} A_2 \to heap \times{} heap \to \prop)
\]

Taking the same computational monad $\M$ on both sides, that is $\M_1 = \M_2 =
\M$, we define the following simple relational effect observation $\theta : M,M
\to \PPrel{}$
\begin{align*}
  \theta (c_1, c_2) &= (\abs{h_0}{\pi_1\,c_1\,h_0 \wedge \pi_1\,c_2\,h_0},\\
  &\qquad\abs{(h_1, h_2) (a_1,a_2) (h_1',h_2')}{\pi_1\,c_1\,h_1 \wedge \pi_2 c_1 h_1 a_1 h_1'\wedge \pi_1\,c_2\,h_2 \wedge \pi_2 c_2 h_2 a_2 h_2'})
\end{align*}
\km{Conclusion?}

\section{Product programs}
\label{sec:product-programs}

\ch{Product programs as alternatives to relational logics vs
  product programs are closely related to the relational logics proofs
  (that's how I put it in related work)}

The product programs methodology is an approach to prove relational
properties that can serve as an alternative to relational program
logics~\cite{BartheDR11, BartheCK16}. In this section we show
how to understand this methodology from the point of view of our
framework.
Product programs reduce the problem of verifying relational properties on two programs
$c_1$ and $c_2$ to the problem of verifying properties on a single \emph{product
  program} $c$ capturing at the same time the behaviors of $c_1$ and $c_2$.
%
% The basic idea behind the product programs methodology is to reduce
% the problem of verifying relational properties on two programs to the
% problem of verifying properties on only one program. The key concept
% here is the notion of \emph{product program}: a product program $c$ of
% two programs $c_1$ and $c_2$ consists of a single program capturing at
% the same time the behaviors of $c_1$ and $c_2$.
%
To prove a relational property $w$ on programs $c_1$ and $c_2$, the methodology
tells us to proceed as follows. First, we construct a product program $c$ of
$c_1$ and $c_2$. Then, by standard methods, we prove that the program $c$
satisfies the property $w$ seen as a non-relational property. Finally, from a
general argument of soundness, we can conclude that $\varphi$ must hold on $c_1$
and $c_2$.
In what follows, we show how these three
steps would be understood in our framework if we wanted to prove
$\srelsem{c_1}{c_2}{w}{\theta}$.

%% In this section we investigate an alternative methodology to
%% relational rules by means of \emph{product
%% programs}~\cite{BartheDR11}. A product program $c$ of two programs
%% $c_1$ and $c_2$ consists of a single program capturing at the same
%% time the behaviors of $c_1$ and $c_2$. To prove relational properties
%% on $c_1$ and $c_2$, the product program approach tell us to build a
%% product of $c_1$ and $c_2$, prove a unary property on this product and
%% then deduce from that proof, by a general theorem, that a relational
%% property hold of $c_1$ and $c_2$.

First of all, we need a notion of product program. In the setting of
monadic programs, we capture a product program of $c_1: \Mone{}A_1$
and $c_2 : \Mtwo{}A_2$ as a program $c:\P{}(A_1,A_2)$, where $\P{}$ is
a relative monad over $(A_1,A_2) \mapsto A_1 \times A_2$ (see \autoref{sec:RelMon}). We can think
of $c : \P{}(A_1,A_2)$ as a single computation that is computing both
a value of type $A_1$ and a value of type $A_2$ at the same time. We
expect $\P{}$ to support the effects from both $\Mone$ and $\Mtwo$,
mixing them in a controlled way. As a concrete example, we can define
products of stateful programs -- $\Mone{}A_1 = \St_{S_1}A_1$ and $\Mtwo{}A_2
= \St_{S_2}A_2$ -- inhabiting the relative monad $\P^\St(A_1,A_2) =
\St_{S_1 \times S_2}(A_1 \times A_2)$.
To complete the definition of product programs, we also need to explain
when a concrete product program $c : \P{}(A_1,A_2)$ is capturing the
behavior of $c_1 : \Mone{}A_1$ and $c_2 : \Mtwo{}A_2$. We propose to
capture this in a relation $\prodrel{c_1}{c_2}{c}$ that exhibits the
connection between pairs of computations and their potential
product programs. This relation should be closed under the
monadic construction of the effects, that is
\[
\inferrule{a_1 : A_1 \\ a_2 : A_2}{\prodrel{\ret^\Mone{}~a_1}{\ret^\Mtwo{}~a_2}{\ret^\P{}~(a_1,a_2)}}\quad
\inferrule{\prodrel{m_1}{m_2}{m_\relobj} \\ \forall a_1~a_2, \prodrel{f_1~a_1}{f_2~a_2}{f_\relobj~(a_1, a_2)}}{\prodrel{\bind^\Mone{}~m_1~f_1}{\bind^\Mtwo{}~m_2~f_2}{\bind^\P{}~m_\relobj~f_\relobj}}
\]
but also spells out how particular effects that $\P{}$ supports
correspond to the effects from $\Mone$ and $\Mtwo$.

%% A relation $\prodrel{c_1}{c_2}{c}$ exhibits the connection between
%% pairs of monadic programs $(c_1,c_2) : \Mone{}A_1 \times \Mtwo{}A_2$
%% and product programs $c : \P{}(A_1,A_2)$. We expect this relation to
%% be closed under the monadic construction of the effects. For instance,
%% we expect to have a rule like
%% \[
%% \inferrule{a_1 : A_1 \\ a_2 : A_2}{\prodrel{\ret^\Mone{}~a_1}{\ret^\Mtwo{}~a_2}{\ret^\P{}~(a_1,a_2)}}
%% \]
%% that explains that the product program returning two values is
%% capturing the behaviour of returning values in each of the monads
%% separately.

Second, to fully reproduce the product program methodology, we need to
explain how specifications relate to product programs.
We can use simple relational specification monads~(\autoref{sec:simple-specs})
for specifying the properties on products programs. The lifting of unary
specification monads described there extends to unary effect observations,
providing an important source of examples of effect observations for product
programs.
% In a way similar to the construction explained in 
% As the product
% program computations work on two types,  The relation between product programs
% and their specification is done by means of effect observations, where
% we consider a lifting of the notions developed in \cite{dm4all} to the
% product setting.
For example, going back to the example of state, we can specify product programs in $\P{}(A_1,A_2)
= \St_{S_1 \times S_2}(A_1 \times A_2)$ with specifications provided
by the simple relational specification monad $\Wrel{\St}$, and the
effect observation $\zeta : \P{} \to \Wrel{\St}$ obtained by lifting the unary
effect observation $\theta^\St : \St \to \W^\St$ of the introduction\km{Any way
  to give a reference? currently on page 3, paragraph simple setting}, resulting in
\[%%\left(\varphi : (A_1 \times S_1) \times (A_2 \times S_2) \to \prop\right)
\zeta~(f : S_1 \times S_2 \to (A_1 \times A_2) \times (S_1 \times S_2)) = \abs{\varphi~\left(s_1,s_2\right)}{\varphi~\sigma(f~(s_1, s_2))}
\]
where $\sigma : (A_1 {\times} A_2) {\times} (S_1 {\times} S_2) {\to}
(A_1 {\times} S_1) {\times} (A_2 {\times} S_2)$ simply swaps the
arguments. Then, the concrete proof verifying the property $w$ in
this step consists of proving $\zeta(c) \leq w$ as usual.

Finally, the third step simply relies on proving and then applying a soundness
theorem for product programs. In the case of stateful computations, this theorem has the following form:
\begin{theorem}[Soundness of product programs for state]
If $\prodrel{c_1}{c_2}{c}$ and $\zeta(c) \leq w$, then $\srelsem{c_1}{c_2}{w}{\theta^{\St}_{\relmarker{}}}$.
\end{theorem}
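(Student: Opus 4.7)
The plan is to reduce the claim to the key intermediate lemma
\[
  \prodrel{c_1}{c_2}{c} \quad\Longrightarrow\quad
  \theta^{\St}_{\relmarker{}}(c_1,c_2) \;\leq^{\Wrel{\St}}\; \zeta(c),
\]
after which the conclusion $\theta^{\St}_{\relmarker{}}(c_1,c_2) \leq w$ follows immediately by transitivity of $\leq^{\Wrel{\St}}$ applied to this inequality and the hypothesis $\zeta(c) \leq w$. So essentially all of the work is in proving this intermediate comparison.

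I would prove the intermediate lemma by induction on the derivation of $\prodrel{c_1}{c_2}{c}$, using the two structural rules spelled out in the excerpt plus whatever state-specific closure rules for $\leadsto$ are assumed (at minimum, rules matching pairs of \stget{}/\stput{} on each side to the corresponding operations of the product monad $\St_{S_1\times S_2}$). The base cases are direct unfolding: for two returns, both sides compute to $\abs{\varphi\,(s_1,s_2)}{\varphi((a_1,s_1),(a_2,s_2))}$; for a pair of \stget{}s matched with a product \stget{} one obtains $\abs{\varphi\,(s_1,s_2)}{\varphi((s_1,s_1),(s_2,s_2))}$ on both sides, and analogously for \stput{}, so equality holds on the nose in each case. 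The bind case is where the two monad morphism structures align: the induction hypotheses give $\theta^{\St}_{\relmarker{}}(m_1,m_2) \leq \zeta(m_\relobj)$ and, for all $(a_1,a_2)$, $\theta^{\St}_{\relmarker{}}(f_1\,a_1,f_2\,a_2) \leq \zeta(f_\relobj\,(a_1,a_2))$; one then applies the strict relational monad-morphism equation for $\theta^{\St}_{\relmarker{}}$ (from the fact that it is a simple strict relational effect observation, \autoref{thm:effect-commutation}) and the corresponding unary monad-morphism equation for $\zeta$ (inherited from $\theta^{\St}$ via the lifting construction of \autoref{sec:simple-specs}) to rewrite both sides as $\bind^{\Wrel{\St}}$-expressions, and concludes by monotonicity of $\bind^{\Wrel{\St}}$.

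The main obstacle I foresee is making the bind step formally rigorous: one has to choose definitions so that the swapping isomorphism $\sigma$ used inside $\zeta$ commutes cleanly with $\bind^{\Wrel{\St}}$, \emph{i.e.}, so that $\zeta$ really is a relative monad morphism in the sense of \autoref{sec:RelMon} from $\P^{\St}$ (as a relative monad over $(A_1,A_2)\mapsto A_1\times A_2$) to $\Wrel{\St}$. Once this alignment is in place, the proof is essentially a two-pass rewriting, but verifying that the swaps $\sigma$ are inserted consistently in the inductive step is the only place where a miscalculation could occur. After the lemma is established, the theorem itself is a one-line transitivity argument.
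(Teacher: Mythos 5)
Your proposal matches the paper's (very terse) argument: the paper proves the theorem "by analyzing the relation $\prodrel{c_1}{c_2}{c}$ and showing in each case that $\theta^{\St}_{\relmarker{}}$ and $\zeta$ agree," which is exactly your induction on the derivation of $\prodrel{c_1}{c_2}{c}$ followed by transitivity with $\zeta(c)\leq w$. The only cosmetic difference is that the paper establishes agreement (equality) of the two observations case by case, whereas you only need and prove the inequality $\theta^{\St}_{\relmarker{}}(c_1,c_2)\leq^{\Wrel{\St}}\zeta(c)$; for state both observations are strict, so equality holds anyway and your concern about $\sigma$ is resolved by the fact that $\zeta$ is by construction the lift of the unary morphism $\theta^{\St}$.
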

In this case, the soundness theorem is proved by analyzing the relation
$\prodrel{c_1}{c_2}{c}$ and showing in each case that our choice of
$\theta^{\St}_{\relmarker{}}$ and $\zeta$ agree.

The interpretation of product programs as computations in a relative monad
accommodate well the product program methodology. In particular we expect that
algebraic presentations of these relative monads used for product programs could
shed light on the choice of primitive rules in relational program logics, in a
Curry-Howard fashion. We leave this as a stimulating future work.
%% The lax-commutative diagram below summarize the product program verification
%% methodology.
%% \[
%%   \begin{tikzcd}[column sep=huge]
%%     \Mone{}\otimes\Mtwo{} \ar[rr, "\prodrel{-}{-}{-}", ""'{name=tgt}]
%%      \ar[dr,"\theta_{\relmarker}"']
%%     & & \P{} \ar[dl, "\zeta"] \\
%%     & \Wrel{} \ar[phantom, to=tgt, "\Leftarrow"{description}]&
%%   \end{tikzcd}
%% \]

\section{Related Work}
\label{sec:related}

\ch{Other stuff we could add:
\begin{itemize}
\item Proof system in LF:
For pure language constructs (1), we try to use the reasoning
principles of our ambient dependent type theory as directly as possible (in the
spirit of the use of LF in \cite{ZeilPhD}).

\item Parametricity translations for dependent type theory \cite{BernardyL11}

\item Should also relate to \cite{SwierstraB19}, who also have effect
  observations, or at least give them a cite in the text.
  
\item Anything to say about relative monads?
\end{itemize}}

Many different relational verification tools have been proposed, making
different tradeoffs, especially between automation and expressiveness.
This section surveys this prior work, starting with the techniques that are
closest related to ours.

\ch{The first 2 paragraphs below have a very repetitive phrasing, try to improve}

\paragraph{Relational program logics}
Relational program logics are very expressive and provide a formal foundation
for various tools, which have found practical applications in many domains.
\citet{benton04relational} introduced Relational Hoare Logic (RHL) as a way to
prove the correctness of various static analysis and optimizing transformations
for imperative programs.
\citet{Yang07} extended this to the relational verification of
pointer-manipulating programs.
\citepos{BartheGB09} introduced pRHL as an extension of RHL to discrete
probabilities and showed that pRHL can provide a solid foundation for
cryptographic proofs, which inspired further research in this area
\cite{PetcherM15, BartheFGSSB14, BasinLS17, Unruh19} and lead to the
creation of semi-automated tools such as EasyCrypt \cite{BartheDGKSS13}.
\citet{BartheKOB13} also applied variants of pRHL to differential privacy, which
led to the discovery of a strong connection \cite{BartheGHS17} between coupling
proofs in probability theory and relational program logic proofs, which are
in turn connected to product programs even without probabilities \cite{BartheCK16}.

\citet{CarbinKMR12} introduced a program logic for proving acceptability
properties of approximate program transformations.
\citet{NanevskiBG13} proposed Relational Hoare Type Theory (RHTT), a
verification system for proving rich information flow and access control
policies about pointer-manipulating programs in dependent type theory.
\citet{BanerjeeNN16} addressed similar problems using a relational program logic
with framing and hypotheses.
\citet{SousaD16} devised Cartesian Hoare Logic for verifying k-safety
hyperproperties and implement it in the {\sc Descartes} tool.
Finally, \citet{AguirreBGGS17} introduced Relational Higher-Order Logic (RHOL) as
a way of proving relational properties of {\em pure} programs in a simply typed
$\lambda$-calculus with inductive types and recursive definitions.
RHOL was later separately extended to two different monadic effects: cost
\cite{RadicekBG0Z18} and continuous probabilities with conditioning
\cite{SatoABGGH19}.

Each of these logics is specific to a particular combination of side-effects
that is fixed by the programming language and verification framework.
We instead introduce a general framework for defining program logics for
{\em arbitrary} monadic effects.
We show that logics such as RHL and HTT can be reconstructed within our
framework, and we expect this to be the case for many of the logics above.
It would also be interesting to investigate whether RHOL can also be extended to
arbitrary monads, but even properly representing arbitrary monads, which is
completely straightforward in dependent type theory, is not obvious in less powerful
systems such as HOL.
In this respect, \citet{Lochbihler18} recently built a library for effect
polymorphic definitions and proofs in Isabelle/HOL, based on value-monomorphic
monads and relators.

\paragraph{Type systems and static analysis tools}
Various type systems and static analysis tools have been proposed for statically
checking relational properties in a sound, automatic, but over-approximate way.
The type systems for information flow control generally trade off precision
for good automation~\cite{SabelfeldM03}.
Various specialized type systems and static analysis tools have also been
proposed for checking differential privacy \cite{Winograd-CortHR17,
  GaboardiHHNP13, BartheGAHRS15, ZhangK17, Gavazzo18, ZhangRHPR19} or doing
relational cost analysis \cite{CicekBG0H17, QuGG19}.

\paragraph{Product program constructions}
Product program constructions and self-composition are techniques
aimed at reducing the verification of k-safety hyperproperties
\cite{clarkson10hyp} to the
verification of traditional (unary) safety proprieties of a product
program that emulates the behavior of multiple input programs.
Multiple such constructions have been proposed \cite{BartheCK16} targeted for
instance at secure IFC \cite{TerauchiA05, BartheDR11, Naumann06, YasuokaT14},
program equivalence for compiler validation \cite{ZaksP08},
equivalence checking and computing semantic differences
\cite{LahiriHKR12}, program approximation \cite{HeLR18}.
\citepos{SousaD16} {\sc Descartes} tool for k-safety properties also creates k
copies of the program, but uses lockstep reasoning to improve performance by
more tightly coupling the key invariants across the program copies.
\citet{AntonopoulosGHKTW17} develop a tool \iffull called Blazer \fi that
obtains better scalability by using a new decomposition of
programs instead of using self-composition for k-safety problems.
\citet{EilersMH18} propose a modular product program construction
that permits hyperproperties in procedure specifications.
Recently, \citet{FarzanV19} propose an automated verification
technique for hypersafety properties by constructing a proof for a small
representative set of runs of the product program.

\paragraph{Logical relations and bisimulations}
Many semantic techniques have been proposed for reasoning about
relational properties such as observational equivalence, including
techniques based on binary logical relations \cite{BentonKBH09,
  Mitchell86, AhmedDR09, DreyerNRB10, DreyerAB11, DreyerNB12,
  BentonHN13, Benton0N14}, bisimulations \cite{KoutavasW06,
  SangiorgiKS11, Sumii09, LagoGL17}, and
  combinations thereof \cite{HurDNV12, HurNDBV14}.
While these powerful techniques are often not directly automated,
they can still be used for verification \cite{TimanyB19} and
for providing semantic correctness proofs for relational
program logics \cite{DreyerNRB10, DreyerAB11} and other verification
tools \cite{BentonK0N16, Gavazzo18}.

\paragraph{Other program equivalence techniques}
%%%%%%%%%%%%%%%%%%%%%%%%%%%%%%%%%%%%%%%%%%%%%
Beyond the ones already mentioned above, many other techniques targeted at program
equivalence have been proposed; we briefly review several recent works:
\citet{BentonKBH09} do manual proofs of correctness of
compiler optimizations using partial equivalence relations.
\citet{KunduTL09} do automatic translation validation of compiler
optimizations by checking equivalence of partially specified programs
that can represent multiple concrete programs.
\citet{GodlinS10} propose proof rules for proving the
equivalence of recursive procedures.
\citet{LucanuR15} and \citet{CiobacaLRR16} generalize this to
a set of co-inductive equivalence proof rules that are language-independent.
\citet{WangDLC18} verify equivalence between a pair of programs that operate
over databases with different schemas using bisimulation invariants over
relational algebras with updates.
Finally, automatically checking the equivalence of processes in a process
calculus is an important building block for security protocol analysis
\cite{BlanchetAF08, ChadhaCCK16}.

\paragraph{Reasoning about effectful semantics}
Relating monadic expressions is natural and very wide-spread in proof
assistants like \coq{}, Isabelle \cite{Lochbihler18}, or \fstar{}\cite{relational},
with various degrees of automation.
\citet{CasinghinoSW14, PedrotT18,BoulierPT17} extend dependent type theory with
a few selected primitive effects: partiality, exceptions, reader. The resulting
theory allows to some extent to reason directly on pairs of effectful programs,
without resorting to a monadic encoding.
In another line of work, \citet{BartheEGGKM19} proposed to encode the semantics
of imperative programs and their relational properties in an extension of
first-order logic that can be automated by Vampire.

\section{Conclusion and Future Work}
\label{sec:conclusion}

This paper introduced a principled framework for building relational program
logics.
We extended the work of \citet{dm4all} to the relational setting, and solved the
additional challenges of correlating two independent computations with a
relational specification, by we leveraging relative monads and introducing a
novel notion of relative monad morphism.
%
% but a number of novel conceptual and
% technical ideas were needed
%  Among the technical ones, we remark on the general
% notion of relative monad morphisms that was key for our
% development.\er{I am not convinced about this, but I feel that we need
% to more explicitly say the relation to Dijkstra monads}
%
Now it's time to put this framework to the test and discover whether it can be
in part automated, whether it can be scaled to realistic relational verification
tasks, and whether it can deal with more complex effects.

In particular, it would be interesting to see whether our generic framework of
\autoref{sec:generic} can support other control effects, such as breaking out of
loops and continuations.
These are, however, challenging to accommodate, even in our unary
setting~\cite{dm4all}.
Another interesting direction is providing a more precise treatment of
nontermination.
While the simple framework of \autoref{sec:simplified} can already handle
nontermination by choosing globally between total and partial correctness with
an effect observation, the generic framework of \autoref{sec:generic} could
allow to explicitly observe whether each computation terminates or not inside
the relational specifications.
This could allow one to choose at the specification level between partial or
total correctness and between \citet{benton04relational}'s or our semantics for
RHL, or to define termination-sensitive noninterference, or that one computation
terminates whenever the other does.
Another interesting research direction, opened by the correspondence with
product programs, would be to develop techniques to select which proof
rules should be considered as primitive, using proof-theoretical tools
like focusing~\cite{ZeilPhD}, and also to investigate at the categorical level notions
of presentations of relative monads, in connection with the theory of
monads with arities~\cite{berger:hal-01296565}.
%
% Finally, it also remains to be seen whether our notion of relational effect
% observations can be generalized to turn the laws from equalities to inequalities.
% %
% The proof of \autoref{thm:simple-generic-soundness} from
% \autoref{sec:simple-generic-monadic-rules} would be easy to extend, and this extension
% would allow for more examples, including the ones previously done using
% relators such as simulations for nondeterminism~\cite{LagoGL17},
% and would also make certain examples such as relational partial
% correctness easier\km{I would rather remove that last part, relational partial
%   correctness is a bit cluncky in its definition, and we already achieve one
%   notion of partial correctness}.
% %
% Yet the technical development seems more involved, even for the simple setting
% of \autoref{sec:simplified}.

\ifanon\else
%% Acknowledgments
\begin{acks}
  %% acks environment is optional
  %% contents suppressed with 'anonymous'
  %% Commands \grantsponsor{<sponsorID>}{<name>}{<url>} and
  %% \grantnum[<url>]{<sponsorID>}{<number>} should be used to
  %% acknowledge financial support and will be used by metadata
  %% extraction tools.

  We thank Alejandro Aguirre, Danel Ahman, Robert Atkey, Gilles Barthe, Shin-ya
  Katsumata, Satoshi Kura, Guido Mart\'inez, Ramkumar Ramachandra, Nikhil Swamy,
  \'Eric Tanter, and the anonymous reviewers for their helpful feedback.
  This work was, in part, supported by the
  % \ifcamera
  % \grantsponsor{1}{ERC}{https://erc.europa.eu/}
  % \else
  \grantsponsor{1}{European Research Council}{https://erc.europa.eu/}
  % \fi
  under ERC Starting Grant SECOMP (\grantnum{1}{715753})
  and by Nomadic Labs via a grant on the
  ``Evolution, Semantics, and Engineering of the \fstar Verification System.''
  %
  % Guido Mart\'inez' work was done, in part, during an internship at Inria Paris
  % funded by the Microsoft Research-Inria Joint Centre.
  % %
  % This material is based upon work supported by the 
  % \grantsponsor{2}{Air Force Office of Scientific Research}{https://www.wpafb.af.mil/afrl/afosr/} 
  % under award number \grantnum{2}{FA9550-17-1-0326}.
\end{acks}
\fi

\ifappendix
\appendix

\section{Appendix}

\subsection{Translation from \RDTT{}}
\label{sec:appendix-rdtt-translation}
In this appendix we give the translation from \RDTT{} to the ambient type theory of the type of the dependent
eliminator for sum types: 
\begin{align*}
  \sem{\texttt{elim\_sum}}:~&(P_1 : A_1 + B_1 \to \Type) \to (P_2 : A_2 + B_2 \to \Type) \to\\
  &(P_{\relmarker{}} : \forall (ab_1 :A_1 + B_1) (ab_2:A_2+B_2), (\br{A}+\br{B})_{\relmarker{}}\,ab_1\,ab_2 \to \Type) \to\\
  &(\forall (a_1:A_1), P_1\,(\inl\,a_1)) \to (\forall (a_2:A_2), P_2\,(\inl\,a_2)) \to \\
  &(\forall a_1\, a_2\,(a_{\relmarker{}} : \br{A}\,a_1\,a_2), P_{\relmarker{}}\,(\inl\,a_1)\,(\inl\,a_2)\,a_{\relmarker{}})\\
  &(\forall (b_1:B_1), P_1\,(\inr\,b_1)) \to (\forall (b_2:B_2), P_2\,(\inr\,b_2)) \to \\
  &(\forall b_1\, b_2\,(b_{\relmarker{}} : \br{B}\,b_1\,b_2), P_{\relmarker{}}\,(\inr\,b_1)\,(\inr\,b_2)\,b_{\relmarker{}}) \to\\
  &\forall ab_1\,ab_2\,(ab_{\relmarker{}}: (\br{A}+\br{B})_{\relmarker{}}\,ab_1\,ab_2), P_{\relmarker{}}\,ab_1\,ab_2\,ab_{\relmarker{}}
\end{align*}
As explained in \autoref{sec:relational-dependent-type-theory} this eliminator can then be used to
obtain the rule for sums
\[
  \infer{
    \crelsyn{\Gamma,\boldsymbol{a} : \br{A}} {c_1[\inl\;a_1/ab_1]}{w^l_1} {c_2[\inl\;a_2/ab_2]}{w^l_2} {w^l_{\relmarker{}}[a_{\relmarker{}}/ab_{\relmarker{}}]}\\
    \crelsyn{\Gamma,\boldsymbol{b} : \br{B}} {c_1[\inr\;b_1/ab_1]}{w^r_1} {c_2[\inr\;b_2/ab_2]}{w^r_2} {w^r_{\relmarker{}}[b_{\relmarker{}}/ab_{\relmarker{}}]}
  }
  {\crelsynL{\br\Gamma,\boldsymbol{ab} : \br{A}+\br{B}}
    {c_1}{\case{ab_1}{\inl\;a_1}{w^l_1}{\inr\;b_1}{w^r_1}}
    {c_2}{\case{ab_2}{\inl\;a_2}{w^l_2}{\inr\;b_2}{w^r_2}}
    {\caseL{ab_1,ab_2}{\inl\;a_1,\inl\;a_2}{w^l_{\relmarker{}}}{\inr\;b_1,\inr\;b_2}{w^r_{\relmarker{}}}}}
\]

\subsection{Presentation of relative monads}

\newcommand{\cL}{\mathcal{L}}
\newcommand{\cM}{\mathcal{M}}
\newcommand{\Th}{\mathbb{T}}
\newcommand{\Nerve}[1]{\mathcal{N}_{#1}}

Let $\I,\C$ be (locally small) categories and $\J : \I \to \C$ be a functor. A
$\J$-theory is a bijective-on-objects functor $\cL : \I \to \Th$, such that the
$\cL$-nerve $\Nerve{\cL} = i \mapsto \Th(\cL -,i) : \Th \to \widehat{\I}$ lands
in the image of the $\J$-nerve $\Nerve{\J} = c \mapsto \C(\J -, c) : \C \to \widehat{\I}$.
\[
  \begin{tikzcd}[column sep=huge]
    \Th \ar[r,"\Nerve{\cL}", ""'{name=tgt}]& \widehat{\I}\\
    \I \ar[u, "\cL"] \ar[r, ""{name=src}, "\J"']& \C \ar[u,"\Nerve{\J}"']
    \ar[phantom, from=src, to=tgt, "\not\circlearrowright"{description}]
  \end{tikzcd}
\]

A $\J$-relative monad $\cM$ generates a canonical $\J$-theory, its Kleisli
category $\J_\cM : \I \to \I_\cM$.

If the $\J$-nerve is faithful and locally splits \km{What is the right terminology for that?}
in the sense that the map $\C(c_1,c_2) \to \widehat{\I}(\Nerve{\J}(c_1),
\Nerve{\J}(c_2))$ has a retraction $\rho_{c_1,c_2}$, then a $\J$-theory $\cL:
\I \to \Th$ induces a $\J$-relative monad $\cM$. On an object $i \in \I$,
$\cM\,i \in \C$ is defined to be such that $\Nerve{\cL}(i) \cong
\Nerve{\J}(\cM\,i)$ which exists since $\cL$ is a $\J$-theory and unique up-to
isomorphism because $\Nerve{\J}$ is faithful \km{however these isomorphisms do
  not seem to be canonical, could that lead to troubles?}. Then for any $i \in
\I$, $\ret^\cM_i \in \C(\J\,i,\cM\,i)$ is the image of the identity
through the isomorphism $\Th(\cL\,i,\cL\,i) = \Nerve{\cL}(i)(i) \cong
\Nerve{\J}(\cM\,i)(i) = \C(\J\,i,\cM\,i)$. For $f \in \C(\J\,i_1,\cM\,i_2)$,
$\bind^\cM\,f \in \C(\cM\,i_1,\cM\,i_2)$ is defined to be
$\rho\,\nu_f$ where $\nu_f : \Nerve{\cL}(i_1) \to
\Nerve{\cL}(i_2)$ is the natural transformation obtained by postcomposition with
the image of $f$ through the isomorphism $\C(\J\,i_1,\cM\,i_2) \cong
\Th(\cL\,i_1, \cL\,i_2)$. Since $\ret^\M$ and $\bind^\M$ are defined via
identity and composition in $\Th$ through the same natural\km{Is it?} isomorphism and
$\Nerve{\J}$ is faithful, the monadic laws follow.

\paragraph{Examples of presentations} For $\I = \TypeCat^2, \C = \TypeCat,
\J(A_1,A_2) = A_1 \times A_2$, then $\Nerve{\J}$ is faithful and locally splits
(it is enough to evaluate a natural transformation between presheaves on
$\TypeCat^2$ on $(A,\one)$). We would like to find sufficient conditions on a
bijective-on-objects functor $\mathcal{L} : \TypeCat^2 \to \Th$ to be a
$\J$-theory. For instance, given two Lawvere theories $\mathbb{L}_1,
\mathbb{L}_2$, we should always be able to construct such a $\J$-theory
$\mathbb{L}_1 \odot \mathbb{L}_2$ where operations from the two Lawvere theories
are specified to commute in the appropriate sense. This should provide us most
of our examples of relative monads for product programs.

For $\I = \TypeCat^2, \C = \Span{\TypeCat}, \J(A_1,A_2) = A_1 \xleftarrow{\pi_1}
A_1 \times A_2 \xrightarrow{\pi_2} A_2$, then $\Nerve{\J}$ is full and faithful
because any object $\br{R} \in \C$ can be recovered as the span
\[\C(\J(\one,\zero), \br{R}) \leftarrow \C(\J(\one,\one), \br{R}) \rightarrow \C(\J(\zero,\one), \br{R})\]
where the legs of the span are given by precomposition on the appropriate side
with the unique arrow $\zero \to \one$.

\fi % appendix

%% ACM bibliography style -- CH: I always avoid this
% \bibliographystyle{ACM-Reference-Format}

\bibliographystyle{abbrvnaturl}
\bibliography{fstar}

\end{document}